\theoremstyle{definition}
\newtheorem{theorem}{Theorem}
\newtheorem{corollary}[theorem]{Corollary}
\newtheorem{proposition}[theorem]{Proposition}
\newtheorem{lemma}[theorem]{Lemma}
\newtheorem{definition}[theorem]{Definition}
\newtheorem{example}[theorem]{Example}
\newtheorem{examples}[theorem]{Examples}
\newcommand\qbin[3]{\left[\begin{matrix} #1 \\ #2 \end{matrix} \right]_{#3}}
\newtheorem{notation}[theorem]{Notation}
\newtheorem{remark}[theorem]{Remark}
\newcommand{\numberset}{\mathbb}
\newcommand{\N}{\numberset{N}}
\newcommand{\Z}{\numberset{Z}}
\newcommand{\R}{\numberset{R}}
\newcommand{\C}{\numberset{C}}
\newcommand{\F}{\numberset{F}}
\newcommand{\mS}{\mathcal{S}}
\newcommand{\mC}{\mathcal{C}}
\newcommand{\OO}{\mathcal{O}}
\newcommand{\mA}{\mathcal{A}}
\newcommand{\mB}{\mathcal{B}}
\newcommand{\mM}{\mathcal{M}}
\newcommand{\mD}{\mathcal{D}}
\newlength{\dhatheight}
\DeclareMathOperator{\ann}{ann}
\DeclareMathOperator{\rref}{RREF}
\DeclareMathOperator{\row}{row}
\DeclareMathOperator{\im}{Im}
\begin{document}

\title{\textbf{An algebraic framework for end-to-end \\ physical-layer network coding}}

%
%
%


\author{Elisa Gorla and Alberto Ravagnani\thanks{E-mail addresses:
\texttt{elisa.gorla@unine.ch}, \texttt{ravagnani@ece.utoronto.ca}.
The authors were partially supported by the Swiss National Science
Foundation through grant no. 200021\_150207.}}
  
\affil{Institut de Math\'{e}matiques \\ Universit\'{e} de
Neuch\^{a}tel \\ Emile-Argand 11, CH-2000 Neuch\^{a}tel, Switzerland}

\date{}

\maketitle

\begin{abstract}
We propose an algebraic setup for end-to-end physical-layer network coding
based on submodule transmission. We introduce a distance function between modules,
describe how it relates to information loss and errors, and show how to compute it. 
Then we propose a definition
of submodule error-correcting code, and investigate bounds and constructions for such codes.
\end{abstract}

\section*{Introduction and motivation} 

In the framework of physical-layer network coding (PNC) multiple terminals attempt to exchange messages through intermediate relays.
The relays collect data from the terminals, and try to decode a function of the transmitted messages. 
Such function is then broadcasted to the terminals, which combine it with their side information to recover the other messages.

In \cite{nazer} the authors proposed a novel approach to PNC based on nested lattices, known as ``compute-and-forward''. 
Under this approach, the structure of a fixed underlying lattice is exploited by the relays to decode the function of the messages, which is then forwarded to the terminals. 
As observed in \cite{FSK}, this communication scheme induces an end-to-end network coding channel with channel equation 
\begin{equation}\label{cheq}
 Y=AX+Z.
\end{equation}
Here $X$ is the transmitted matrix, whose rows are elements from a given ambient space $\Omega$, $A$ is a transfer matrix, and $Z$ is an error matrix. 
In practice, $A$ and $Z$ are random matrices drawn according to certain distributions, that depend on the application at hand.

A general algebraic framework to study and construct nested-lattice-based PNC schemes was recently proposed in \cite{FSK} and further developed in~\cite{FNKS}. 
Following this algebraic description, which is compatible with any underlying lattice, the message space $\Omega$ has the structure of a module over a principal ideal ring
$$\Omega = T/(d_1) \times T/(d_2) \times \ldots \times T/(d_n),$$
where $T \subseteq \C$ is a principal ideal domain (PID), $d_1,d_2...,d_n \in T$ are nonzero, non-invertible elements, 
and $d_n | d_{n-1} | \ldots |d_1$. Let $R=T/(d_1)$, then $R$ is a principal ideal ring (PIR). 
The ambient space $\Omega$ is isomorphic to an $R$-submodule of $R^n$:
\begin{equation}\label{omega}
\Omega \cong R \times (d_1/d_2)\times \ldots \times (d_1/d_n)\subseteq R^n.
\end{equation}
In particular, the elements of $\Omega$ can be represented via vectors of length $n$ with entries in $R$. Both $R$ and $\Omega$ are usually finite. 

\begin{examples}
We include a list of rings $R$ and ambient spaces $\Omega$ that have been proposed in the context of physical-layer network coding,
producing efficient communication schemes.
\begin{itemize}
\item  $R=\Omega=\Z_2[i]$, proposed and studied in \cite{Zh06}, \cite{popo06}
and \cite[Example 1]{FNKS} for 
 BPSK modulation.
\item $R=\Omega=\Z_m[i]$, with $m$ a positive integer, proposed and studied in \cite{Zh06} and \cite[Example 2]{FNKS}, and known as ``$m^2$-QAM PNC scheme''.
\item $R=\Z_p[i]$, $\Omega=R^n$, where $p$ is a prime number, proposed in \cite{nazer}. 
\item $R=\Z_{p^s}[i]$, $\Omega= \underbrace{R\times\ldots\times R}_{n_1} \times \underbrace{(p) \times\ldots\times (p)}_{n_2} \times \ldots \times \underbrace{(p^{s-1})\times\ldots\times (p^{s-1})}_{n_s}$,
 where $p$ is a prime power and $s \ge 1$ is an integer, proposed in \cite[Section VII.A]{FNKS}.
\item $R=\Z[\omega]/(\beta)$, where $\Z[\omega]$ are the Eisenstein
integers and $\beta \in \Z[\omega]$ is a suitable element, $\Omega=R^n$,
proposed and studied in \cite{Sun}.
\end{itemize}
\end{examples}

As observed in \cite{FSK}, channel equation (\ref{cheq}) suggests to define the message to be transmitted as the module generated over $R$ by the rows of the matrix $X$, which we denote by $\mbox{row}(X)$. A receiver attempts to recover the original transmitted module from the matrix $Y$.

Two important special cases of equation (\ref{cheq}) correspond to the \textbf{noise-free multiplicative matrix channel} (MMC), with equation $Y = AX$,
and the \textbf{multiplication-free additive matrix channel} (AMC), with equation $Y=X+Z$. 
These two channel equations are studied in \cite{FSK} in the case where the base ring $R$ is a finite chain ring. 

The key tool in handling the MMC in \cite{FSK} is the reduced row-echelon form of a matrix $X$, which is a canonical invariant of the row-module of $X$ denoted by $\mbox{RREF}(X)$.
In practice, a transmitter emits a matrix $X$ in reduced row-echelon form, and a receiver attempts to recover it by computing $\mbox{RREF}(Y)= \mbox{RREF}(AX)$.
Decoding is successful when $A$ is left-invertible. In the same paper, the authors propose a coding/decoding scheme based on error-trapping for the AMC and the general case of a channel with equation (\ref{cheq}).

In this paper, in analogy with the approach from~\cite{KK1} for random linear network coding, we propose a new algebraic framework for module transmission based on the notion of length of a module. We define a submodule code as a collection of submodules of the ambient space $\Omega$, and propose a notion of distance between submodules based on length, which we call submodule distance. Then we show that the submodule distance captures both information loss and errors in module transmissions. The row-echelon form for modules over a PIR proposed by Buchmann and Neis in~\cite{buch} allows us to represent messages in a canonical way. Using the same row-echelon form, we reduce the computation of the distance between submodules to the computation of the length of ideals in the base ring. We also prove that, in some cases, the error-trapping decoding scheme from~\cite{FSK} is a minimum-distance decoding with respect to the submodule distance.

We derive two bounds on the cardinality of a submodule code of given minimum distance and whose codewords have fixed length. For certain classes of rings, we are able to state our bounds  explicitly in terms of the ring and code's invariants. We also construct submodule codes with maximum error-correction capability. For $R$ a finite chain ring or $R=\Z_p[i]$, we show that the codes that we construct have asymptotically optimal cardinality for their parameters. This also shows that our bounds are sharp for certain choices of rings and code parameters. We also give some general code constructions, based on the tensor and on the cartesian product. Our constructions can be applied to various choices of rings and code parameters. 

Finally, we study codes over products of rings. This is relevant, since a finite PIR $R$ is isomorphic to a product of finite fields and finite chain rings. We show that if $R\cong R_1\times\ldots\times R_m$, then a product of codes on the $R_i$'s yields a code on $R$, whose parameters are determined and whose decoding can be reduced to decoding on each of the $R_i$'s. However, not every code over a finite PIR is a product of codes over fields and finite chain rings. We give a construction of a code over $R$ which is not a product and show that decoding cannot be reduced to decoding on each of the $R_i$'s. This shows in particular how the study of codes over a finite PIR cannot be reduced to the study of codes over finite fields and finite chain rings.

The structure of the paper is as follows: In Section \ref{secalgprel} we recall some definitions and results about PIR's, modules, length, row-echelon forms of matrices over PIR's. In Section \ref{secsubmcodes} we define submodule codes and submodule distance, and relate it to information loss and errors in module transmissions. We also show how to efficiently compute the submodule distance. In Section \ref{secrecovering} we prove that, in some cases, the error-trapping decoding from~\cite{FSK} can be viewed as a minimum-distance decoding in our framework. Section \ref{secbound} is devoted to bounds on the cardinality of submodule codes, and Section \ref{seccostr} to submodule codes constructions and to codes over products of rings.

\section{Algebraic preliminaries} \label{secalgprel}

Throughout the paper $R$ denotes a finite PIR and $(r)$ denotes the ideal generated by $r\in R$. 
Recall that elements $a,b \in R$ generate the same ideal if and only if they are {\bf associate}, i.e., there exists an invertible element $\varepsilon \in R$ with $a=\varepsilon b$
(see e.g. \cite{whenare?}). An element $g \in R$ is a \textbf{greatest common divisor} (\textbf{gcd}) of $a_1,...,a_s \in R$ if and only if $(a_1) + (a_2) + \ldots + (a_s)=(g)$. We write $g=\gcd(a_1,...,a_s)$. The gcd is unique up to associates.

{\bf Finite chain rings} are a special case of PIR's. A ring $R$ is a finite chain ring if it is finite and its ideals form a chain with respect to inclusion. 
It is well-known that finite chain rings are principal and local (see e.g. \cite{macid}, page 339). Moreover, if $\pi$ is a generator of the maximal ideal of $R$, then the ideals of $R$ are 
\begin{equation}\label{fcr}
0 \subsetneq (\pi^{e-1}) \subsetneq (\pi^{e-2}) \subsetneq \ldots \subsetneq (\pi) \subsetneq R,
\end{equation}
where $e$ is the smallest positive integer with $\pi^e=0$. 
The integer $e$ does not depend on the choice of the generator $\pi$ of the maximal ideal. 
The finite field $R/(\pi)$ is called the \textbf{residue field} of $R$. Clearly, $R/(\pi) \cong \F_q$ for some prime power $q$. 

For any PIR $R$, define the {\bf annihilator} of $a\in R$ as $$\ann(a)=\{r\in R\mid ar=0\}.$$ The annihilator is an ideal of $R$, and we refer to a generator of $\ann(a)$ again as the {\bf annihilator} of $a$. If $R$ is a finite chain ring with ideal chain as in (\ref{fcr}), then every $a\in R$ is of the form $a=u\pi^\alpha$ for some invertible $u$ and some $0\leq \alpha\leq e$. Then $\ann(a)=(\pi^{e-\alpha})$. Since every finite PIR $R$ is isomorphic to a product of finite fields and finite chain rings, then annihilators are easy to compute. In the sequel, we will take computation of annihilators for granted. Moreover, inclusion of annihilators can be easily tested by checking divisibility.

\begin{proposition} \label{annihdiv}
Let $R$ be a finite PIR, $a,b\in R$. Then $\ann(a)\subseteq \ann(b)$ if and only if $a\mid b$.
\end{proposition}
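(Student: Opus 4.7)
The plan is to prove the two implications separately. The forward direction, that $a \mid b$ implies $\ann(a) \subseteq \ann(b)$, holds in any commutative ring and requires no hypothesis on $R$: if $b = ra$ and $s \in \ann(a)$, then $sb = rsa = 0$, so $s \in \ann(b)$.

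For the converse, the plan is to reduce to the finite chain ring case via the structure theorem recalled just before the statement, namely that every finite PIR decomposes as $R \cong R_1 \times \cdots \times R_m$ with each $R_i$ a finite field or a finite chain ring. Both sides of the biconditional behave componentwise: writing $a = (a_1, \ldots, a_m)$ and $b = (b_1, \ldots, b_m)$, divisibility $a \mid b$ in $R$ is equivalent to $a_i \mid b_i$ in $R_i$ for each $i$, while $\ann(a) = \ann(a_1) \times \cdots \times \ann(a_m)$, so $\ann(a) \subseteq \ann(b)$ is equivalent to $\ann(a_i) \subseteq \ann(b_i)$ for each $i$. Thus it suffices to establish the equivalence inside each factor $R_i$.

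For $R_i$ a finite chain ring with maximal ideal $(\pi_i)$ and nilpotency index $e_i$, I would invoke the explicit description recalled in the excerpt: every element can be written $u \pi_i^{\alpha}$ for a unit $u$ and some $0 \leq \alpha \leq e_i$ (with $\alpha = e_i$ corresponding to $0$), and in this form $\ann(u\pi_i^{\alpha}) = (\pi_i^{e_i - \alpha})$. Writing $a_i = u\pi_i^{\alpha}$ and $b_i = v\pi_i^{\beta}$, the inclusion $\ann(a_i) \subseteq \ann(b_i)$ becomes $(\pi_i^{e_i-\alpha}) \subseteq (\pi_i^{e_i-\beta})$, i.e.\ $\alpha \leq \beta$, which is exactly the divisibility $a_i \mid b_i$. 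Fields are the special case $e_i = 1$ of this analysis, so no separate argument is needed.

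There is no real obstacle here: the converse direction just requires the structure theorem plus a direct check in the chain ring case using the linear chain of ideals. The argument could in principle be given ring-theoretically without invoking the product decomposition, but the componentwise reduction is the cleanest route given the tools the paper has already set up.
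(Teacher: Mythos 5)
Your proof is correct and follows essentially the same route as the paper's: reduce to the finite chain ring case via the Zariski--Samuel decomposition of a finite PIR into finite fields and finite chain rings, then compute annihilators explicitly via the representation $a = u\pi^\alpha$. You are slightly more explicit than the paper in justifying the reduction (noting that both divisibility and annihilator inclusion behave componentwise) and in observing that finite fields are the $e=1$ case, but the argument is the same.
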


\begin{proof}
By the Zariski-Samuel Theorem, any finite PIR is isomorphic to a product of finite fields and finite chain rings. Hence it suffices to prove the statement for $R$ a finite chain ring. If $b=ac$ for some $c\in R$, then $tb=tac=0$ for every $t\in\ann(a)$. Hence $\ann(a)\subseteq\ann(b)$. Conversely, if $\ann(a)\subseteq\ann(b)$ and $R$ is a finite chain ring with ideal chain as in (\ref{fcr}), then $a=u\pi^\alpha$ and $b=v\pi^\beta$ for some $u,v$ invertible, $0\leq \alpha,\beta \leq e$. Since $(\pi^{e-\alpha})=\ann(a)\subseteq\ann(b)=(\pi^{e-\beta})$, then $e-\alpha\geq e-\beta$, so $\alpha\leq\beta$ and $a\mid b$.
\end{proof}

\subsection{Modules and length}

We fix an $R$-module $\Omega\subseteq R^n$ as in (\ref{omega}) and let $\mM(\Omega)$ denote the set of $R$-submodules of $\Omega$. 
Given $M,N\in\mM(\Omega)$, denote by $M+N$ the smallest submodule of $\Omega$ which contains both $M$ and $N$. We write $M \oplus N$ when 
$N \cap M=0$. Since $R$ is finite, $\Omega$ and its submodules are finite. In particular all modules that we consider are finitely generated.

\begin{definition}
Let $M$ be an $R$-module. If $$M=\{r_1m_1+\ldots+r_tm_t\mid r_1,\ldots,r_t\in R\}$$ for some $m_1,\ldots,m_t\in M$, 
then we say that $M$ is {\bf generated} by $m_1,\ldots,m_t$ and $m_1,\ldots,m_t$ are a {\bf system of generators} for $M$. We write $M=\langle m_1,\ldots,m_t\rangle_R$
or $M=\langle m_1,\ldots,m_t\rangle$ when there is no ambiguity.
A module is {\bf finitely generated} if it has a finite system of generators.
\end{definition}

\begin{definition}
Let $M$ be an $R$-module. A chain of distinct submodules of $M$ of the form
$$0 \subsetneq M_1 \subsetneq M_2 \subsetneq \ldots \subsetneq M_\lambda=M$$
has {\bf length} $\lambda$. The \textbf{length} of $M$ is 
$$\lambda_R(M)=\max\{\lambda\mid \mbox{ $M$ has a chain of $R$-submodules of length $\lambda$}\}.$$
A {\bf composition series} for $M$ is a maximal chain of distinct $R$-submodules of $M$.
\end{definition}

When the ring is clear from context, we will omit the subscript $R$.

\begin{remark}
Any ring $R$ is an $R$-module, and its $R$-submodules coincide with its ideals. Therefore, the length of a ring $R$ is
$$\lambda(R)=\max\{\lambda\mid \mbox{ $R$ has a chain of ideals of length $\lambda$}\}.$$
For any $a\in R$ we denote by $\lambda(a)$ the length of the ideal generated by $a$. 
\end{remark}

Since all modules that we consider are finite, they have finite length.
The following properties are well-known (see e.g. \cite[Section V\S2]{kunz}).

\begin{lemma} \label{prellen}
Let $R$ be a ring, let $M,N,\Omega$ be $R$-modules of finite length. Then:
\begin{enumerate}[label={\arabic*)}]
\item $\lambda(M)=0$ if and only if $M=0$.
\item If $N \subseteq M$, then $\lambda(N) \le \lambda(M)$ and $\lambda(M/N)=\lambda(M)-\lambda(N)$. In particular, $\lambda(M)=\lambda(N)$ if and only if $M=N$.
\item If $M,N \subseteq \Omega$, then $\lambda(M+N)=\lambda(M)+\lambda(N)-\lambda(M \cap N)$.
\item $\lambda(M\times N)=\lambda(M)+\lambda(N)$.
\end{enumerate}
\end{lemma}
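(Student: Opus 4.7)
The plan is to deduce all four items from the Jordan--Hölder theorem for modules of finite length, which guarantees that any two composition series have the same length, so that $\lambda(M)$ can be computed by exhibiting a single composition series. Once we have this, the proofs become essentially diagram-chasing.

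For item 1), I would argue directly from the definition: $\lambda(M)=0$ means $M$ admits no proper chain $0\subsetneq M_1$, which forces $M=0$; the converse is immediate. For item 2), the inequality $\lambda(N)\leq\lambda(M)$ follows because any chain of submodules of $N$ is also a chain in $M$. For the additivity formula $\lambda(M/N)=\lambda(M)-\lambda(N)$, I would take a composition series $0\subsetneq N_1\subsetneq\cdots\subsetneq N_s=N$ of $N$ and a composition series $0\subsetneq \overline{P_1}\subsetneq\cdots\subsetneq\overline{P_t}=M/N$ of $M/N$, pull back the latter to a chain $N\subsetneq P_1\subsetneq\cdots\subsetneq P_t=M$ via the correspondence theorem, and concatenate to obtain a composition series of $M$ of length $s+t$. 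By Jordan--Hölder, $\lambda(M)=s+t=\lambda(N)+\lambda(M/N)$. The statement that $\lambda(M)=\lambda(N)$ forces $M=N$ then follows because $\lambda(M/N)=0$, combined with item 1), gives $M/N=0$.

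For item 3), I would invoke the second isomorphism theorem, which yields an $R$-module isomorphism
\[
(M+N)/N \;\cong\; M/(M\cap N).
\]
Applying item 2) to both sides gives
\[
\lambda(M+N)-\lambda(N)\;=\;\lambda(M)-\lambda(M\cap N),
\]
which rearranges to the desired formula. For item 4), I would use the short exact sequence
\[
0\longrightarrow M\longrightarrow M\times N\longrightarrow N\longrightarrow 0,
\]
realized concretely by viewing $M\times\{0\}$ as a submodule of $M\times N$ with quotient isomorphic to $N$; item 2) then delivers $\lambda(M\times N)=\lambda(M\times\{0\})+\lambda(N)=\lambda(M)+\lambda(N)$.

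The only genuinely non-trivial ingredient is the well-definedness of $\lambda$ underlying everything, namely Jordan--Hölder, which I would quote from \cite{kunz}. Everything else is bookkeeping with the isomorphism theorems and the correspondence between submodules of $M/N$ and submodules of $M$ containing $N$. Since the authors explicitly cite \cite[Section V\S2]{kunz} as a reference for these facts, I would expect the write-up to be brief, essentially reducing each statement to the two tools above.
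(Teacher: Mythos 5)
The paper does not supply its own proof of this lemma: it simply states that the properties are well known and refers the reader to \cite[Section V\S2]{kunz}. Your argument via Jordan--H\"older together with the isomorphism theorems is correct and is precisely the standard textbook treatment one finds in such references, so it matches the intended (cited) approach.

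One small point worth tightening if you write this up formally: the paper's definition of $\lambda(M)$ is the supremum of lengths over \emph{all} chains from $0$ to $M$, not the length of a composition series per se, so to conclude $\lambda(M)=s+t$ from the concatenated composition series you should also note (via Schreier refinement, or equivalently as part of Jordan--H\"older for finite-length modules) that every chain can be refined to a composition series, hence no chain exceeds the composition length. Similarly, in item~2 the chain ending at $N$ must be extended by appending $M$ when $N\subsetneq M$ in order to become a chain ``ending at $M$'' in the sense of the paper's definition. These are minor bookkeeping remarks; the substance of your proof is sound.
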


The concept of length for an $R$-module generalizes the concept of dimension for a vector space.

\begin{example}[fields]
Every field $\F$ is a ring of length one, since it has no proper nonzero ideals. An $\F$-module $M$ is an $\F$-vector space with 
$\lambda(M)=\dim_{\F}(M)$. 
\end{example}

\begin{example}[finite chain rings] \label{exfcr}
Let $R$ be a finite chain ring. Let $\pi$ be a generator of its maximal ideal and let $e$ be the smallest positive integer such that $\pi^e=0$. 
Then $\lambda(R)=e$ and $\lambda(a) = \min\{ i : a \in (\pi^{e-i})\} = \min \{ i : \pi^{e-i} \mid a\}$ for all $a\in R$.

It can be shown that every $R$-module $M$ is isomorphic to a product of ideals:
$$M \cong \underbrace{R \times \ldots \times R}_{\mu_1} \times \underbrace{(\pi) \times \ldots \times (\pi)}_{\mu_2-\mu_1} \times 
\ldots \times \underbrace{(\pi^{e-1}) \times \ldots \times (\pi^{e-1})}_{\mu_e-\mu_{e-1}},$$
where $0 \le \mu_1 \le \mu_2 \le \ldots \le \mu_e$ are non-negative integers. 
Following \cite{FSK}, we say that $M$ has \textbf{shape} $(\mu_1,\mu_2,...,\mu_e) \in \N^e$. 
It is easy to show that two $R$-modules are isomorphic if and only if they have the same shape.  Moreover, 
by Lemma \ref{prellen} 
$$\lambda(M)=\sum_{i=0}^{e-1} (\mu_{i+1}-\mu_i)(e-i),$$
where $\mu_0=0$.
\end{example}

%

\subsection{A reduced row-echelon form for matrices over principal ideal rings}\label{secechelon}

Every finitely generated module $M\subseteq\Omega$ may be represented as the rowspace of a matrix with entries in $R$, whose row vectors are a system of generators of $M$. In order to make such a representation unique, we need a row-canonical form for matrices with entries in a PIR.

In \cite{howell} Howell proposes a definition of row-canonical matrix over the ring $\Z_n$, showing that every matrix can be put in row-canonical form by performing invertible row operations. The ideas of Howell were later extended in \cite{buch}, where canonical generating systems for submodules of $R^n$ are defined for any PIR $R$. In the rest of the section we recall the main results of \cite{buch}, stating them in a convenient matrix formulation. 

\begin{definition}
For $i \in \{1,...,n\}$ we denote by $v_i$ the $i$-th entry of a vector $v \in R^n$. The \textbf{leading position} of a vector $v \in R^n$ is the position of its first nonzero entry:
$$\ell(v)= \left\{ \begin{array}{ll} 
                     \min \{ 1 \le j \le k \ | \ v_j
\neq 0\} & \mbox{ if $v \neq 0$,} \\
  +\infty & \mbox{ if $v=0$}.                  \end{array}
\right.\ $$
\end{definition}

Given a matrix $A \in R^{t \times n}$, we denote by $A_1,...,A_t$ the rows of $A$, and by $\mbox{row}(A)= \langle A_1,...,A_t \rangle$ the $R$-module generated by the rows of $A$.

For a module $M\subseteq R^n$ we set $M^{(j)}=\{v \in M \ | \ v_i=0 \mbox{ for } i < j \}$
for $j \in \{1,...,n+1\}$. Every $M^{(j)}$ is an $R$-submodule of $M$ and 
$$0=M^{(n+1)} \subseteq M^{(n)} \subseteq \ldots \subseteq M^{(1)}=M.$$

\begin{definition}\label{rref}
Let $A \in R^{t \times k}$ be a matrix, and let $M=\mbox{row}(A)$. We say that $A$ is in \textbf{row-echelon form} if the following hold:
\begin{enumerate}[label={\arabic*)}]
  \item for all $i \in \{1,...,t-1\}$ we have $\ell(A_{i+1}) > \ell(A_i)$,
  \item for all $j \in \{ 1,...,n+1\}$ we have $M^{(j)} = \langle A_i \ | \ \ell(A_i) \ge j \rangle$.
\end{enumerate}
The nonzero entries of $A$ of the form $A_{i,\ell(A_i)}$ are the \textbf{pivots} of $A$. 

Fix canonical generators and representatives for the ideals and the residue classes of $R$.
We say that $A$  is in \textbf{reduced row-echelon form} if it is in row-echelon form and the following hold:
\begin{enumerate}[label={\arabic*)}]
\item[3)] every pivot $A_{ij}$ of $A$ is the canonical generator of the ideal $(A_{ij})$,
\item[4)] if $A_{ij}$ is a pivot of $A$, then every entry $A_{sj}$ with $s<i$ is the canonical representative of the residue class $A_{sj}+(A_{ij})$.
\end{enumerate}
\end{definition}

\begin{definition}
Two matrices $A,B \in R^{t \times n}$ are \textbf{row-equivalent} if there exists an invertible matrix $U \in R^{t \times t}$ such that $A=UB$. 
\end{definition}

It is easy to show that $A$ and $B$ are row-equivalent if and only if $\mbox{row}(A)=\mbox{row}(B)$. We now prove that every matrix is row-equivalent to a matrix in row-echelon form, and to a unique matrix in reduced row-echelon form. The next lemma is well-known and appears in several references. We include a proof, since we could not find a complete one in the literature. 

\begin{lemma} \label{lemmapr}
For any $a,b \in R \setminus \{0\}$ there exist $x,y,z,t \in R$ such that
$$\begin{bmatrix}
x & y \\ z & t 
\end{bmatrix} 
\begin{bmatrix}
a \\ b \end{bmatrix} = 
\begin{bmatrix}
g \\ 0 
\end{bmatrix}, 
\ \ \ \ \ \  xt-yz=1, \ \ \ \ \ \ 
(g)=(a)+(b).$$
\end{lemma}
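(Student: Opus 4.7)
The plan is to reduce to the case where $R$ is a finite chain ring by invoking the Zariski–Samuel decomposition $R \cong R_1 \times \ldots \times R_m$ of a finite PIR as a product of finite chain rings (and fields), which the paper already uses in the proof of Proposition \ref{annihdiv}. Since products behave well under taking $2 \times 2$ matrices and the determinant condition $xt - yz = 1$ is coordinatewise, it suffices to solve the problem in each factor $R_i$ with the input $(a_i,b_i)$ (the $i$-th components of $a$ and $b$), choose any $g_i$ with $(g_i)=(a_i)+(b_i)$ in $R_i$, and then reassemble the resulting matrices $M_i \in SL_2(R_i)$ via the CRT isomorphism into a single $M \in SL_2(R)$ sending $(a,b)^\top$ to $(g,0)^\top$.

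Next, in a finite chain ring $R_i$ with maximal ideal $(\pi_i)$, any nonzero element has the form $u\pi_i^\alpha$ for some unit $u$ and some $0 \le \alpha < e_i$, so for two nonzero $a_i,b_i$ the ideals $(a_i),(b_i)$ are totally ordered by inclusion. In the case $a_i \mid b_i$, write $b_i = c_i a_i$; then $(a_i)+(b_i)=(a_i)$, and the unipotent matrix
\[
M_i = \begin{pmatrix} 1 & 0 \\ -c_i & 1 \end{pmatrix}
\]
has determinant $1$ and satisfies $M_i(a_i,b_i)^\top = (a_i,0)^\top$, so $g_i = a_i$ works. The symmetric case $b_i \mid a_i$ is handled analogously, swapping the roles (using an $SL_2$ matrix of the form $\begin{pmatrix} 0 & 1 \\ -1 & c_i \end{pmatrix}$ to get $(b_i,0)^\top$ with determinant $1$).

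The degenerate coordinates (where $a_i = 0$ or $b_i = 0$ in a given factor, which can occur even though $a,b$ are nonzero in $R$) are treated by explicit choices: the identity matrix handles $b_i = 0$, and the matrix $\begin{pmatrix} 0 & 1 \\ -1 & 0 \end{pmatrix}$ handles $a_i = 0$ while keeping determinant $1$. Reassembling coordinatewise yields $M \in SL_2(R)$ with $M(a,b)^\top = (g,0)^\top$ and $(g) = (a)+(b)$, as required.

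The only real subtlety I expect is bookkeeping around the mixed coordinates where one of $a_i,b_i$ vanishes: a naive coordinate swap lands in $GL_2$ but not $SL_2$, so one has to use the sign-corrected swap above. Beyond that, the argument is an assembly of elementary $SL_2$ matrices via CRT, and no Bézout-type inversion argument is needed, which is fortunate since in a PIR one only has the congruence $xa' + yb' \equiv 1 \pmod{\ann(g)}$ rather than an exact identity—this is precisely what makes the direct PID-style proof awkward and motivates the reduction to chain rings.
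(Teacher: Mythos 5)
Your proof is correct, and it takes a genuinely different route from the paper. The paper invokes the Zariski--Samuel theorem to reduce to $R = D/(\pi)$ with $D$ a PID, handles $\pi = 0$ by a straightforward B\'ezout argument, and for $\pi \neq 0$ constructs a $\gamma \in D$ so that $\alpha + \gamma\beta$ and $\pi$ become coprime after dividing out the common gcd~$\eta$; this requires a nontrivial coprimality argument and yields the algorithm in the remark following the lemma. You instead exploit the standing hypothesis that $R$ is \emph{finite} to reduce to the case of a finite chain ring, where the ideal lattice is totally ordered, so one of $a_i, b_i$ always divides the other (including the zero case) and the required $SL_2$ element is just a unipotent or a sign-corrected swap. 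The reassembly is unproblematic: the CRT isomorphism $R \cong R_1 \times \cdots \times R_m$ induces $SL_2(R) \cong \prod_i SL_2(R_i)$, and ideals of $R$ are products of ideals of the $R_i$'s, so both the determinant condition and the condition $(g)=(a)+(b)$ can be checked coordinatewise. What the paper's proof buys is generality --- it works for an arbitrary (possibly infinite) PIR and directly produces an effective procedure for $\gamma$ --- whereas your argument is shorter and more transparent but relies on finiteness and on the classification of finite PIRs as products of chain rings. Your closing remark correctly identifies the obstruction the paper must work around: in a non-domain PIR one cannot divide the B\'ezout identity by $g$, only obtain it modulo $\ann(g)$, and the total ordering of ideals in a chain ring is precisely what lets you sidestep this.
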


\begin{proof}
By~\cite[Theorem~33, pg. 245]{ZS}, every principal ideal ring is isomorphic to a direct product of quotients of PID's. Therefore it suffices to prove the result for $R=D/(\pi)$, where $D$ is a PID and $\pi \in D$.

If $\pi=0$ then $R=D$ is a PID. Let $g=\gcd(a,b)$ and write $g=ka+hb$ for some $k,h \in R$. Dividing by $g$ we obtain $1=k(a/g)+h(b/g)$. Let $x=k$, $y=h$, $z=-b/g$, $t=a/g$.

Now assume $\pi \neq 0$. Recall that every PID $D$ is a unique factorization domain. Consider the  projection map $\bar{\cdot}: D \to R$ and choose elements $\alpha,\beta \in D$ with $\overline{\alpha}=a$ and $\overline{\beta}=b$. Let $\eta=\mbox{gcd}(\alpha,\beta,\pi)$ and $P=\{p \in D \ | \ p \mbox{ is irreducible and divides both } \alpha/\eta
\mbox{ and } \pi/\eta  \}$. For all $p\in P$ we set $e_p=\max \{ i \ | \ p^i \mbox{ divides } \pi/\eta \}$. Define 
$$\gamma=\left\{ \begin{array}{ll} \left( \pi/\eta \right) / \prod_{p \in P}
p^{e_p} & \mbox{ if } P \neq \emptyset,  \\  \left( \pi/\eta \right) & \mbox{ if } P=\emptyset. \end{array} \right.\ $$

We claim that $\alpha/\eta+\gamma(\beta/\eta)$ and $\pi/\eta$ are coprime. By contradiction, let $p\in D$ be irreducible, $p\mid\gcd(\pi/\eta, \alpha/\eta+\gamma(\beta/\eta))$. If $p\nmid\gamma$, then $p\mid\prod_{p \in P} p^{e_p}$, so $p\mid (\alpha/\eta)$. Since $\gcd(\alpha/\eta,\beta/\eta,\pi/\eta)=1$ and $p\mid (\alpha/\eta), (\pi/\eta)$, then $p\nmid\alpha/\eta+\gamma(\beta/\eta)$, a contradiction. If $p\mid\gamma$, then $p\nmid\prod_{p \in P} p^{e_p}$, so $p\nmid(\alpha/\eta)$. However $p\mid (\alpha/\eta+\gamma(\beta/\eta))$, a contradiction. We conclude that $\alpha/\eta+\gamma(\beta/\eta)$ and $\pi/\eta$ are coprime, hence there exist  $\lambda,\mu \in D$ such that $\lambda(\alpha/\eta+\gamma(\beta/\eta)) + \mu (\pi/\eta)=1$, i.e. $\lambda(\alpha+\gamma\beta) + \mu\pi=\eta$. Hence
$$(\eta) \subseteq (\alpha+\gamma\beta,\pi) \subseteq (\alpha,\beta,\pi)=(\eta).$$
Therefore $(\alpha+\gamma\beta)+(\pi)=(\alpha)+(\beta)+(\pi)$, so $a+cb=\gcd(a,b)$ in $R=D/(\pi)$, where $c=\overline{\gamma}$. Write $b=h(a+cb)$, for some $h \in R$. 
Let $x=1$, $y=c$, $z=-h$, $t=-ch+1$.
\end{proof}

\begin{remark}
The element $\gamma\in D$ in the proof of Lemma \ref{lemmapr} can be computed (up to associates) via the following algorithm. 

\begin{algorithmic}
\State{$\gamma:=\pi/\eta$}
\State{$g:=\mbox{gcd}(\gamma,\alpha/\eta)$}
\While {$g \neq 1$}
    \State $\gamma\gets \gamma/g$
    \State $g \gets \mbox{gcd}(\gamma,\alpha/\eta)$
\EndWhile
\end{algorithmic}
\end{remark}

\begin{theorem}[\cite{buch}, Algorithm~3.2 and Theorem~3.3] \label{put}
Given a matrix $A \in R^{t \times k}$, we can compute a row-equivalent matrix in row-echelon form in $\OO(tk^2)$ operations in $R$.
\end{theorem}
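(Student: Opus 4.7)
The plan is to exhibit a constructive, column-by-column Gaussian-style reduction in which Lemma~\ref{lemmapr} plays the role of the elementary row operation. Starting from $A$, process columns $j=1,\ldots,k$ in order. At column $j$, let the \emph{active} rows be those currently having leading position $\geq j$. If every active row vanishes in column $j$, move on to column $j+1$. Otherwise, pick the topmost active row $A_i$ with $A_{i,j}\neq 0$ and, for each other active row $A_s$ with $A_{s,j}\neq 0$, apply Lemma~\ref{lemmapr} to the pair $(A_{i,j}, A_{s,j})$; acting by the resulting invertible $2\times 2$ matrix on $(A_i,A_s)$ is a row-equivalence and it replaces $A_{i,j}$ by $\gcd(A_{i,j}, A_{s,j})$ while zeroing $A_{s,j}$. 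After finitely many such steps $A_i$ is the unique active row with a nonzero entry in column $j$; this is the pivot row for column $j$, with pivot $a := A_{i,j}$.

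Condition 1) of Definition~\ref{rref} is then immediate. The substantive point is condition 2): over a general PIR, elements of $\mbox{row}(A)^{(j+1)}$ of the form $r A_i$ with $r \in \ann(a)$ need not lie in the span of the remaining active rows. To repair this, immediately after the pivot is created we append a \emph{torsion row} $\alpha A_i$, where $\alpha$ is a generator of $\ann(a)$. This row lies in $\mbox{row}(A)$, so row-equivalence is preserved, and its leading position is strictly greater than $j$, so it enters the pool of active rows for subsequent columns. Proceeding through all columns and finally deleting zero rows yields a matrix in row-echelon form; a routine induction on $j$, exactly in the spirit of~\cite{buch}, shows that the torsion-row trick restores the Howell property (condition 2) column by column.

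For the complexity: each invocation of Lemma~\ref{lemmapr} followed by the induced $2\times 2$ action on two rows of length $k$ costs $O(k)$ operations in $R$, and at most $O(t+k)$ such invocations are performed per column (there are at most $t$ original rows plus at most one torsion row per previously created pivot, i.e. at most $k$). Summing over $k$ columns gives $O((t+k)k^2) = O(tk^2)$ operations in $R$, after absorbing the edge case $t<k$ by padding $A$ with zero rows, which does not alter $\mbox{row}(A)$. The main obstacle in the argument is not the pivot-creation step, which is standard once Lemma~\ref{lemmapr} is available, but rather the verification of condition 2): showing that a single torsion row per pivot is enough to secure the Howell property, and that no further iteration is required, is the nontrivial content borrowed from the Buchmann-Neis analysis.
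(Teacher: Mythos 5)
Your algorithm is exactly Algorithm~3.2 of \cite{buch}, which the paper describes immediately after the theorem statement: produce a pivot in the current column by iterated applications of Lemma~\ref{lemmapr}, append $\ann(\text{pivot})\cdot(\text{pivot row})$ as a torsion row to restore condition~2) of Definition~\ref{rref}, and recurse. So the structural content of your proof matches the paper's; deferring the verification of the Howell property to \cite{buch} is consistent with the theorem being cited rather than reproved.

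However, the complexity argument contains a genuine error. You bound the active rows per column by $O(t+k)$ and then assert $O((t+k)k^2)=O(tk^2)$; this simplification is false when $t<k$, and the proposed repair of padding $A$ with zero rows only makes matters worse: the padded matrix has $\max(t,k)$ rows, giving a bound of $O(\max(t,k)\,k^2)=O(k^3)$ rather than $O(tk^2)$. The missing observation, which is built into the algorithm as stated in the paper (step~3 appends $x\cdot A_1$ to get $A''$ and then \emph{deletes the first row} before recursing), is that once a pivot row is created it is retired from the working pool. Each iteration removes one row (the new pivot) and adds at most one row (the torsion row $x\cdot A_1$, possibly zero), so the number of rows in the working matrix never exceeds its initial value $t$. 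With this accounting each column costs $O(tk)$, and summing over $k$ columns yields $O(tk^2)$ directly, with no case distinction on the relative sizes of $t$ and $k$.
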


We describe Algorithm~3.2 from~\cite{buch}, adapting it to our matrix notation. 
\begin{enumerate}[label={\arabic*)}]
\item If $A$ is the zero matrix, then it is already in row-echelon form. Otherwise, up to permuting the rows of $A$, we may assume without loss of generality that $+\infty > \ell(A_1) \ge \ell(A_2) \ge \ldots \ge \ell(A_t)$.
\item If $j=\ell(A_1) > \ell(A_2)$ then the first step is concluded. Otherwise, let $g= \mbox{gcd}(A_{1j},A_{2j},...,A_{tj})$. Applying Lemma \ref{lemmapr} iteratively one finds a row-equivalent matrix $A' \in R^{t \times n}$ with $A'_{1j}=g$ and $\ell(A'_i)> j$ for $i>1$.
\item Let $x \in R$ be the annihilator of $g$. We append $x \cdot A_1$ to the matrix $A'$, obtaining a matrix $A''$. Notice that $\mbox{row}(A'')=\mbox{row}(A)$. Moreover, if $v \in \mbox{row}(A'')$ and $v_s=0$ for $1 \le s \le j$, then $v \in \langle A''_2,...,A''_{t+1} \rangle$.
\end{enumerate}
One repeats the three steps above on the matrix obtained from $A''$ by deleting the first row, until there are no more rows left. The algorithm produces a matrix in row-echelon form, which is row-equivalent to $A$.

\begin{example} \label{semplice} 
Consider the matrix
$$A=\begin{bmatrix}
     2 & 1 & 3 \\ 4 & 1 & 2
    \end{bmatrix}$$
over $R=\Z_6$. Applying the algorithm that we just described, one computes:
$$\begin{bmatrix} 2 & 1 & 3 \\ 4 & 1 & 2 \end{bmatrix}
\rightsquigarrow    
 \begin{bmatrix} 2 & 1 & 3 \\ 0 & 5 & 2 \end{bmatrix}
 \rightsquigarrow    
 \begin{bmatrix} 2 & 1 & 3 \\ 0 & 5 & 2  \\ 0 & 3 & 3 \end{bmatrix}
 \rightsquigarrow    
 \begin{bmatrix} 2 & 1 & 3 \\ 0 & 5 & 2  \\ 0 & 0 & 3 \end{bmatrix}.
 $$
Notice that the number of rows increased from two to three. 
\end{example}

Fix generators and representatives for the ideals and residue classes of $R$. 
Every matrix $A$ with entries in $R$ is row-equivalent to a unique matrix in reduced row-echelon form, with respect to the given choice of generators and representatives. 

\begin{theorem}[\cite{buch}, Algorithm~3.4 and Theorem~3.5] \label{uniqueness}
Let $A \in R^{t \times n}$ be a matrix. Then $A$ is row-equivalent to a unique matrix in reduced row-echelon form, which we denote by $\rref(A)$. 
The reduced row-echelon form of $A$ can be computed from a row-echelon form of $A$ in $\OO(n^3)$ operations in $R$.
\end{theorem}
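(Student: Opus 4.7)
The plan is to prove existence, uniqueness, and the complexity bound separately, using Theorem~\ref{put} as a starting point that reduces the problem to refining a row-echelon form.

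For existence, I would start from a row-echelon form $A'$ of $A$ produced by Theorem~\ref{put}, with pivots at positions $(i,j_i)$ and $j_1<\cdots<j_t$. First, multiply each row $i$ by the unit $u_i$ that sends the pivot $A'_{i,j_i}$ to the canonical generator $p_i$ of $(A'_{i,j_i})$; such a unit exists because associate elements generate the same ideal. Then sweep through the pivots in increasing order of $i$: for each $s<i$, subtract an $R$-multiple $\alpha_{s,i}$ of row $i$ from row $s$ so that $A'_{s,j_i}-\alpha_{s,i}p_i$ becomes the canonical representative of the residue class $A'_{s,j_i}+(p_i)$. Such an $\alpha_{s,i}$ exists because the canonical representative differs from $A'_{s,j_i}$ by an element of $(p_i)$, which by principality is a multiple of $p_i$. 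Crucially, this elementary row operation only affects columns $\ge j_i$, so the pivots and previously canonicalized entries in columns $j_k$ with $k<i$ remain untouched. At the end, conditions~3)-4) of Definition~\ref{rref} hold while conditions~1)-2) are inherited from $A'$, yielding a reduced row-echelon form row-equivalent to $A$.

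For uniqueness, suppose $A$ and $B$ are both in reduced row-echelon form with $\mbox{row}(A)=\mbox{row}(B)=M$. I would first show that $t$, the leading positions $j_1,\ldots,j_t$, and the pivot values are invariants of $M$: condition~2) identifies $\{j_i\}$ with the set of $j$ such that $M^{(j)}\neq M^{(j+1)}$, and the ideal $(A_{i,j_i})=\{v_{j_i}:v\in M^{(j_i)}\}$ is an invariant of $M$, so condition~3) forces the pivot value to equal the canonical generator of this ideal. Then I would prove $A_i=B_i$ by descending induction on $i$. Assuming $A_k=B_k$ for $k>i$, the difference $A_i-B_i$ lies in $M^{(j_i+1)}=\langle A_{i+1},\ldots,A_t\rangle$ and, by condition~4) applied at each pivot column $j_k$ with $k>i$, has zero entry at $j_k$. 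The core technical lemma is that the only element of $\langle A_{i+1},\ldots,A_t\rangle$ vanishing at every $j_k$ with $k>i$ is $0$: writing $w=\sum_{k>i}c_k A_k$, the column-$j_{i+1}$ entry forces $c_{i+1}p_{i+1}=0$; since then $c_{i+1}A_{i+1}\in M^{(j_{i+1}+1)}=\langle A_{i+2},\ldots,A_t\rangle$, one may eliminate $A_{i+1}$ from the expression for $w$ and iterate on the shorter tail until no summand remains.

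The complexity bound follows by counting: the unit multiplications cost $\OO(n)$ operations per row, and each above-pivot cancellation updates one row in $\OO(n)$ operations; there are at most $n$ such cancellations per pivot and at most $n$ pivots, for a total of $\OO(n^3)$ operations in $R$. The main obstacle is the auxiliary lemma in the uniqueness proof: entries in non-pivot columns are not constrained directly by conditions~3)-4), so uniqueness away from pivots must be forced indirectly through condition~2) and the annihilator behavior of $R$, which is what the nested elimination argument supplies.
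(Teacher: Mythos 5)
The paper does not actually prove this theorem; it cites Buchmann and Neis~\cite{buch} and offers only a three-line sketch of the existence step (normalize pivots by unit scaling, then clear entries above each pivot), which your existence argument reproduces faithfully, and your $\OO(n^3)$ count is consistent. Your uniqueness argument is genuine new content not in the paper, and the ideas in it are right, but the two-phase presentation has a gap. You first claim that condition~4) alone forces $w=A_i-B_i$ to vanish at every pivot column $j_k$ with $k>i$, and only afterward invoke the technical lemma to get $w=0$. But condition~4) only says that $A_{ij_k}$ and $B_{ij_k}$ are each the canonical representative of \emph{its own} residue class modulo $(p_k)$; to conclude $A_{ij_k}=B_{ij_k}$ you must already know that the two classes coincide, i.e.\ that $w_{j_k}\in(p_k)$. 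For $k>i+1$ this does not follow from $w\in\langle A_{i+1},\ldots,A_t\rangle$ alone: writing $w=\sum_{l>i}c_lA_l$, the entry $w_{j_k}=\sum_{l=i+1}^{k}c_lA_{lj_k}$ has contributions from the non-pivot entries $A_{lj_k}$ with $i<l<k$, which need not lie in $(p_k)$.

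The repair is to interleave the two phases into a single descent rather than running them sequentially. From $w\in M^{(j_i+1)}=M^{(j_{i+1})}$ one gets $w_{j_{i+1}}\in I^{(j_{i+1})}=(p_{i+1})$ by Theorem~\ref{princ}, so now condition~4) does apply and gives $w_{j_{i+1}}=0$; this upgrades $w$ to $M^{(j_{i+1}+1)}=M^{(j_{i+2})}$, whence $w_{j_{i+2}}\in(p_{i+2})$, and condition~4) again gives $w_{j_{i+2}}=0$, and so on, terminating at $w\in M^{(j_t+1)}=0$. This is exactly the elimination mechanism your ``technical lemma'' already contains, so the fix is structural rather than substantive---but as written, establishing the lemma's hypothesis by a prior appeal to condition~4) is circular, and the argument does not close.
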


In fact, using the algorithm from Theorem~\ref{put}, $A$ can be put in row-echelon form. After multiplying it by a diagonal matrix with invertible elements on the diagonal, the matrix satisfies property 3) of Definition \ref{rref}. Finally, by subtracting suitable multiples of each row from the rows above, one ensures that property 4) of Definition \ref{rref} holds. The last operation corresponds to multiplying on the left by an upper triangular matrix with ones on the diagonal. 

The next result characterizes matrices in row-echelon form over a finite PIR. A proof can be obtained using Proposition \ref{annihdiv}.

\begin{proposition} \label{criterio}
Let $A \in R^{t \times n}$ be a matrix with no zero rows. $A$  is in row-echelon form if and only if the following hold:
\begin{enumerate}
\item  $\ell(A_{i+1}) > \ell(A_i)$ for all $i \in \{1,...,t-1\}$,
\item $A_{t\ell(A_t)} \mid A_{tj}$ for all 
$\ell(A_t)\le j \le n$,
\item  $\mbox{ann}(A_{i\ell(A_i)}) \cdot A_i \in 
\langle A_{i+1},...,A_t \rangle$ for all $i \in \{1,...,t-1\}$.
\end{enumerate}
\end{proposition}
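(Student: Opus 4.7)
Set $M=\row(A)$. The plan is to show that the three local conditions in the statement are jointly equivalent to condition~2 of Definition~\ref{rref}, namely $M^{(j)} = \langle A_i \mid \ell(A_i) \ge j \rangle$ for every $j \in \{1,\ldots,n+1\}$ (condition~1 of the two formulations is literally the same). Throughout, I would use Proposition~\ref{annihdiv} to freely translate between divisibility $a \mid b$ and annihilator containment $\ann(a) \subseteq \ann(b)$; the hypothesis that $A$ has no zero rows guarantees that every $\ell(A_i)$ is finite, so the leading positions are genuine column indices.

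For the forward direction, assume $A$ is in row-echelon form in the sense of Definition~\ref{rref}. Fix $i \in \{1,\ldots,t-1\}$ and $r \in \ann(A_{i,\ell(A_i)})$. Because the leading positions strictly increase, the vector $rA_i$ is zero in positions $1,\ldots,\ell(A_i)$, so $rA_i \in M^{(\ell(A_i)+1)}$. Condition~2 of the definition identifies this submodule with $\langle A_j \mid \ell(A_j) > \ell(A_i) \rangle = \langle A_{i+1},\ldots,A_t\rangle$, which is exactly condition~3 of the proposition. The same computation with $i=t$ gives $rA_t \in M^{(\ell(A_t)+1)} = 0$ for every $r \in \ann(A_{t,\ell(A_t)})$; hence $\ann(A_{t,\ell(A_t)}) \subseteq \ann(A_{t,j})$ for every $j \ge \ell(A_t)$, and Proposition~\ref{annihdiv} delivers condition~2.

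For the converse, assume conditions~1, 2, and~3. The inclusion $\langle A_i \mid \ell(A_i) \ge j \rangle \subseteq M^{(j)}$ is immediate from condition~1. For the reverse inclusion, I would argue by successive elimination: given $v = \sum_{i=1}^t r_i A_i \in M^{(j)}$, inspect position $\ell(A_1)$. If $\ell(A_1) \ge j$ there is nothing to do; otherwise condition~1 implies that only $A_1$ contributes a nonzero entry there, so $r_1 A_{1,\ell(A_1)} = v_{\ell(A_1)} = 0$, i.e. $r_1 \in \ann(A_{1,\ell(A_1)})$. Condition~3 then lets me replace $r_1 A_1$ by an $R$-linear combination of $A_2,\ldots,A_t$. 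Iterating this step, at each stage eliminating the row of smallest leading position still strictly below $j$, lands $v$ in $\langle A_i \mid \ell(A_i) \ge j\rangle$, unless the procedure exhausts all the rows. That residual case is precisely $j > \ell(A_t)$, where the iteration leaves a vector of the form $r_t A_t$ with $r_t \in \ann(A_{t,\ell(A_t)})$; condition~2 together with Proposition~\ref{annihdiv} then forces $r_t \in \ann(A_{t,s})$ for every $s \ge \ell(A_t)$, so $r_t A_t = 0$ and $M^{(j)} = 0$, as needed.

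The only subtlety I anticipate is the asymmetric role of the last row: the elimination step for rows $i < t$ can always absorb an $\ann(A_{i,\ell(A_i)})$-multiple of $A_i$ into the span of the later rows, but $A_t$ has no successors to absorb such a multiple, which is exactly why a separate divisibility condition on the entries of $A_t$ appears in the statement. Beyond that, the argument is bookkeeping powered by Proposition~\ref{annihdiv} and the strict monotonicity of leading positions.
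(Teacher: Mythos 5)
Your proof is correct, and it fills in precisely the argument the paper gestures at with the remark ``A proof can be obtained using Proposition~\ref{annihdiv}.'' Both directions are sound: the forward direction correctly deduces condition~3 from $rA_i \in M^{(\ell(A_i)+1)} = \langle A_{i+1},\ldots,A_t\rangle$, and condition~2 from $rA_t \in M^{(\ell(A_t)+1)} = 0$ together with Proposition~\ref{annihdiv}; the converse correctly shows $M^{(j)} \subseteq \langle A_i \mid \ell(A_i) \ge j\rangle$ by successive elimination, with the coefficient at each stage forced into the relevant annihilator by the strict monotonicity of leading positions, and with condition~2 handling the terminal row in the case $j > \ell(A_t)$. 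Your closing observation about the asymmetric role of the last row is exactly the right explanation of why the divisibility condition~2 replaces condition~3 there. This is the same route the paper has in mind; no comparison to an alternative approach is needed.
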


Proposition \ref{criterio} and Algorithm 4.1 of \cite{buch} lead to the following algorithm to test whether a matrix is in row-echelon form.

\vspace{0.3cm}

\begin{algorithmic}
\State{\textbf{Input}: a matrix $A \in R^{t \times n}$ with $\ell(A_{i+1}) > \ell(A_i)$ for all
$i \in \{1,...,t-1\}$}
\State{\textbf{Output}: ``YES" if $A$ is in row-echelon form, and ``NO" otherwise}
\For{$i=1 \ {\bf{to}} \ t$}
\State{$j_i:=\ell(A_i)$}
\EndFor
\For{$j=j_t \ {\bf{to}} \ n$}
\If{$A_{tj_t} \nmid A_{tj}$}
\State{\Return{NO}}
\State{\textbf{quit}}
\EndIf
\EndFor
\For{$i=t-1 \ {\bf{downto}} \ 1$}
\State{use Algorithm 4.1 of \cite{buch} to test if 
$\mbox{ann}(A_{ij_i}) \cdot A_i \in \langle A_{i+1},...,A_t\rangle$}
\If{$\mbox{ann}(A_{ij_i}) \cdot A_i \notin \langle A_{i+1},...,A_t\rangle$}
\State{\Return{NO}}
\State{\textbf{quit}}
\EndIf
\EndFor
\State{\Return{YES}}
\State{\textbf{quit}}
\end{algorithmic}

\begin{proposition}
The previous algorithm terminates correctly.
\end{proposition}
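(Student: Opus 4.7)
The plan is to verify two things: that the algorithm halts on every valid input, and that its answer matches the characterization of row-echelon form given in Proposition \ref{criterio}. Termination is immediate by inspection: the first for-loop iterates $t$ times, the second iterates over the finite index range $\{j_t,\ldots,n\}$, and the third iterates $t-1$ times; each iteration of the third loop invokes Algorithm 4.1 of \cite{buch}, which terminates by the result quoted there. No unbounded operations appear elsewhere, so the total step count is finite.

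For correctness, the strategy is to match the three phases of the algorithm against the three conditions of Proposition \ref{criterio}. Condition (1), namely $\ell(A_{i+1})>\ell(A_i)$ for $i\in\{1,\ldots,t-1\}$, is supplied by the hypothesis on the input. The second for-loop returns NO exactly when $A_{tj_t}\nmid A_{tj}$ for some $j\in\{j_t,\ldots,n\}$, which is precisely the failure of condition (2). The third for-loop, relying on the correctness of Algorithm 4.1 of \cite{buch}, returns NO exactly when $\ann(A_{ij_i})\cdot A_i \notin \langle A_{i+1},\ldots,A_t\rangle$ for some $i\in\{1,\ldots,t-1\}$, which is precisely the failure of condition (3). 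Consequently the algorithm returns YES if and only if all three conditions hold, which by Proposition \ref{criterio} is equivalent to $A$ being in row-echelon form.

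The only nontrivial step is folded into the cited Algorithm 4.1 of \cite{buch}, namely the decision procedure for submodule membership in $R^n$. Since that routine is used as a black box with its own correctness proof, I would not reprove it here; the remaining matching of loop conditions to the conditions of Proposition \ref{criterio} is essentially bookkeeping, and I do not anticipate any real obstacle beyond checking that boundary cases (e.g., when the index set $\{j_t,\ldots,n\}$ is empty, or when $t=1$ so that the third loop is skipped) are handled correctly by vacuous truth of the corresponding condition.
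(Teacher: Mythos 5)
There is a real gap in the correctness half of your argument, and it is exactly the point the paper spends most of its proof on. You treat Algorithm~4.1 of \cite{buch} as a general-purpose decision procedure for submodule membership in $R^n$, but it is not: as stated in the paper, Algorithm~4.1 tests whether a vector belongs to the row module of a matrix \emph{that is already in row-echelon form}. So when the third \textbf{for} loop calls Algorithm~4.1 at index $i$ with the matrix whose rows are $A_{i+1},\ldots,A_t$, you must first know that this submatrix is in row-echelon form; otherwise the membership test is not guaranteed to return a correct answer, and your claim that ``the third for-loop returns NO exactly when condition (3) fails'' does not follow. In other words, the equivalence between the algorithm's verdict and condition (3) of Proposition~\ref{criterio} is not ``essentially bookkeeping'' — it requires a justification that the precondition of the cited subroutine holds at each call.

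The paper closes this gap with a backward induction on $i\in\{t-1,\ldots,1\}$, using the quit-on-failure structure and the decreasing loop order: if the algorithm reaches index $i$, then it has already verified condition (3) for all indices $i+1,\ldots,t-1$ (and condition (2) for the last row), so by Proposition~\ref{criterio} the submatrix with rows $A_{i+1},\ldots,A_t$ is in row-echelon form, and Algorithm~4.1 applies correctly at index $i$. Your argument should be repaired by inserting this induction; everything else in your write-up (termination, matching the first two phases to conditions (1) and (2), and the boundary cases) is fine and aligned with the paper.
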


\begin{proof} Algorithm 4.1 of \cite{buch} tests whether a given vector belongs to the module generated by the rows of a matrix in row-echelon form. Therefore, we first need to show that the last \textbf{for} cycle of the algorithm is well-defined, i.e., that if the algorithm enters the \textbf{for} cycle for some $i$, then the matrix whose rows are $A_{i+1},...,A_t$ is in row-echelon form.

We proceed by backward induction on $i \in \{t-1,...,1\}$. Assume that the algorithm enters the cycle for $i=t-1$. Then by 
the structure of the algorithm and Proposition \ref{criterio}, the matrix whose row is $A_t$ is in row-echelon form, as claimed.
Now assume $i<t-1$. Since the algorithm enters the \textbf{for} cycle for $i$, it entered the 
\textbf{for} cycle also for $i+1$. By induction hypothesis, the matrix whose rows are 
$A_{i+2},...,A_t$ is in row-echelon form. Since the algorithm enters  the \textbf{for} cycle for $i$,
we have $\mbox{ann}(A_{i+1}) \cdot A_{i+1} \in \langle A_{i+2},...,A_t \rangle$.
Therefore by Proposition \ref{criterio} the matrix whose rows are 
$A_{i+1},...,A_t$ is in row-echelon form.

The previous argument also shows that if the algorithm returns YES, then $A$ is in row-echelon form.
Finally, using Proposition \ref{criterio} one can check that if $A$ is in row-echelon form, then
the algorithm returns YES.
\end{proof}

\section{Submodule codes and submodule distance} \label{secsubmcodes}

Using the length, one can define a distance function between submodules of $\Omega$.

\begin{proposition}\label{dist}
The function $d: \mM(\Omega) \times \mM(\Omega) \to \N$ defined by $$d(M,N)=\lambda(M)+\lambda(N)-2\lambda(M \cap N)$$ for all $M,N \in \mM(\Omega)$ is a distance function.
\end{proposition}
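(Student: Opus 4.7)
The plan is to verify the three axioms of a distance function: symmetry, positive-definiteness, and the triangle inequality. Symmetry is immediate from the formula, since the expression $\lambda(M) + \lambda(N) - 2\lambda(M\cap N)$ is manifestly symmetric in $M$ and $N$.

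For positive-definiteness, I would invoke part 2) of Lemma \ref{prellen}: from $M \cap N \subseteq M$ and $M \cap N \subseteq N$ we get $\lambda(M \cap N) \leq \min\{\lambda(M),\lambda(N)\}$, so $d(M,N) \geq 0$. If $M = N$, then clearly $d(M,N) = 0$. Conversely, if $d(M,N) = 0$, then $\lambda(M) + \lambda(N) = 2\lambda(M\cap N)$; combined with $\lambda(M\cap N) \leq \lambda(M)$ and $\lambda(M\cap N) \leq \lambda(N)$, this forces $\lambda(M\cap N) = \lambda(M) = \lambda(N)$. Applying part 2) of Lemma \ref{prellen} again gives $M \cap N = M$ and $M \cap N = N$, hence $M = N$.

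The triangle inequality $d(M,P) \leq d(M,N) + d(N,P)$ is the main obstacle. Expanding and cancelling the terms $\lambda(M)$ and $\lambda(P)$ reduces it to the algebraic inequality
$$\lambda(M\cap N) + \lambda(N\cap P) \leq \lambda(N) + \lambda(M\cap P).$$
To prove this, I would apply part 3) of Lemma \ref{prellen} to the two submodules $M\cap N$ and $N\cap P$ of $N$:
$$\lambda\bigl((M\cap N) + (N\cap P)\bigr) = \lambda(M\cap N) + \lambda(N\cap P) - \lambda\bigl((M\cap N) \cap (N\cap P)\bigr).$$
Observing that $(M\cap N) \cap (N\cap P) = M \cap N \cap P$ and that $(M\cap N) + (N\cap P) \subseteq N$, part 2) of Lemma \ref{prellen} yields
$$\lambda(M\cap N) + \lambda(N\cap P) - \lambda(M \cap N \cap P) \leq \lambda(N).$$
Finally, since $M \cap N \cap P \subseteq M \cap P$, one more application of part 2) gives $\lambda(M\cap N\cap P) \leq \lambda(M\cap P)$, which when substituted into the previous inequality delivers exactly the required bound.

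No delicate computation is needed beyond these three invocations of Lemma \ref{prellen}; the only step that requires a bit of thought is recognizing that the triangle inequality reduces to an inequality about intersections inside $N$, so that the modular-type length identity in part 3) applies directly.
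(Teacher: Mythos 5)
Your proof is correct; all three verifications go through, and the triangle-inequality argument in particular is sound. Your route differs from the paper's in a pleasant, "dual" way. The paper first rewrites the deficit
$x=\lambda(M\cap N)+\lambda(P)-\lambda(M\cap P)-\lambda(N\cap P)$
in terms of sums, obtaining $x=\lambda(M+P)+\lambda(N+P)-\lambda(M+N)-\lambda(P)$, and then bounds it from below using the inclusions $(M+P)+(N+P)\supseteq M+N$ and $(M+P)\cap(N+P)\supseteq P$. You instead apply the modular identity (part 3 of Lemma \ref{prellen}) directly to the two submodules $M\cap N$ and $N\cap P$ of $N$ and exploit the inclusions $(M\cap N)+(N\cap P)\subseteq N$ and $M\cap N\cap P\subseteq M\cap P$. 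Both arguments hinge on exactly the same two parts of Lemma \ref{prellen}, but yours avoids the intermediate translation of intersections into sums and so is a step shorter and arguably more transparent; the paper's version has the cosmetic advantage of displaying everything in terms of the joins $M+P$, $N+P$, which recur elsewhere in the computation of the distance. Your treatment of positive-definiteness is also slightly different (you work with $\lambda(M\cap N)\le\min\{\lambda(M),\lambda(N)\}$ rather than with the single inclusion $M\cap N\subseteq M+N$ used in the paper), but both are immediate from part 2 of the lemma.
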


\begin{definition}\label{defsubdist}
We call $d$ the \textbf{submodule distance} on $\mM(\Omega)$.
\end{definition}

\begin{proof}[Proof of Proposition~\ref{dist}]
Let $M,N,P\subseteq\Omega$ be $R$-submodules. By Lemma~\ref{prellen} we have $d(M,N)=\lambda(M+N)-\lambda(M\cap N)$.
Since $M \cap N \subseteq M +N$ we have $d(M,N) \ge 0$, and equality holds if and only if $M+N=M \cap N$, i.e., if and only if $M=N$. 
Moreover, $d(M,N)=d(N,M)$ by definition. 

To prove the triangular inequality, observe that by definition 
$$d(M,N)=d(M,P)+d(P,N) -2(\lambda(M \cap N) + \lambda(P) - \lambda(M \cap P) -
\lambda(N \cap P)).$$
Therefore it suffices to prove that $x=\lambda(M \cap N) + \lambda(P) - \lambda(M \cap P) - \lambda(N \cap P) \ge 0.$
By Lemma~\ref{prellen} we have $x=\lambda(M+P) + \lambda(N+P) - \lambda(M+N) -
\lambda(P)$. Since $(M+P)+(N+P) \supseteq M+N$ and $(M+P)\cap(N+P) \supseteq P$, by Lemma~\ref{prellen} 
$$\lambda(M+P) + \lambda(N+P) -\lambda(P) \ge \lambda(M+P) + \lambda(N+P) -\lambda((M+P)\cap(N+P)) \ge \lambda(M+N),$$
hence $x \ge 0$.
\end{proof}

When $R=\F$ is a field, the submodule distance on $\F^n$ coincides with the {\bf subspace distance} proposed by K\"otter and Kschischang in \cite{KK1} for error correction in random linear network coding. 

The concepts of information loss and error from~\cite{KK1} can be extended to our setting as follows.

\begin{remark} \label{remdecomp}
Let $M \subseteq R^n$ be the transmitted module, and let $N\subseteq R^n$ be the received module. 
The portion of information that was correctly transmitted is $M \cap N$. The quotient $M/(M \cap N)$ may be regarded as the \textbf{information loss module}, i.e. the original information modulo the portion of information that was correctly transmitted. 
Similarly, the \textbf{error module} is the quotient $N/(M \cap N)$. Using Lemma~\ref{prellen}, one can check that 
$$d(M,N)= \lambda(M/(M \cap N)) + \lambda(N/(M \cap N)).$$ 
In other words, the distance between $M$ and $N$ is the sum of the lengths of the information loss module and of the error module, similarly to what was shown in~\cite{KK1} in the context of subspace codes. 
\end{remark}

\begin{definition}
A \textbf{submodule code} is a subset $\mC \subseteq \mM(\Omega)$ with $|\mC| \ge 2$.
The \textbf{minimum} (\textbf{submodule}) \textbf{distance} of $\mC$ is $$d(\mC)= \min \{d(M,N) : M,N \in \mC, \ M \neq N\}.$$
\end{definition}

\begin{definition}
Let $\mC \subseteq \mM(\Omega)$ be a submodule code. Let $M \in \mC$ be the transmitted module, and let $N \in \mM(R^n)$ be the received module.
Define the \textbf{number of erasures} as $\rho=\lambda(M/(M \cap N))$ and the \textbf{number of errors} as $e=\lambda(N/(M \cap N))$. 
\end{definition}

The next result follows from Remark~\ref{remdecomp} using a standard argument.

\begin{proposition}
Let $\mC \subseteq \mM(\Omega)$ be a submodule code of minimum distance $d$.
Then a minimum distance decoder successfully corrects $N$ to $M$, provided that $2(\rho+e) < d(\mC)$.
\end{proposition}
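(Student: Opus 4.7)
The plan is to combine Remark~\ref{remdecomp} with the triangle inequality for the submodule distance, following the standard minimum-distance decoding argument.

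First, I would observe that by Remark~\ref{remdecomp} the hypothesis $2(\rho+e)<d(\mC)$ is equivalent to $2\,d(M,N)<d(\mC)$, since $d(M,N)=\lambda(M/(M\cap N))+\lambda(N/(M\cap N))=\rho+e$. This reformulation is the key point, because it puts the hypothesis into a form where the triangle inequality (already established in the proof of Proposition~\ref{dist}) can be applied directly.

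Next, to show that a minimum-distance decoder returns $M$, I would verify that $M$ is the unique element of $\mC$ minimizing distance to $N$. Let $M'\in\mC$ with $M'\neq M$. Since $\mC$ has minimum distance $d(\mC)$, we have $d(M,M')\geq d(\mC)$. Applying the triangle inequality (Proposition~\ref{dist}) to $M$, $N$, $M'$ gives
\[
d(\mC)\leq d(M,M')\leq d(M,N)+d(N,M'),
\]
so $d(N,M')\geq d(\mC)-d(M,N)$. Using the hypothesis $d(M,N)<d(\mC)/2$, this yields
\[
d(N,M')>d(\mC)-d(\mC)/2=d(\mC)/2>d(M,N).
\]
Hence $d(N,M')>d(N,M)$ for every $M'\in\mC\setminus\{M\}$, so $M$ is the unique closest codeword to $N$ and the minimum-distance decoder outputs $M$.

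No step here is really an obstacle: the only substantive ingredients are Remark~\ref{remdecomp} (for the identity $d(M,N)=\rho+e$) and the triangle inequality from Proposition~\ref{dist}, and the rest is the classical half-minimum-distance argument. The only thing I would be careful about is making explicit that $N$ need not belong to $\mC$, but this is irrelevant to the argument since the triangle inequality holds on all of $\mM(\Omega)$, and the minimum-distance decoder by definition compares $N$ to each codeword in $\mC$.
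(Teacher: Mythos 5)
Your proof is correct and is precisely the "standard argument" the paper invokes without spelling out: you identify $\rho+e=d(M,N)$ via Remark~\ref{remdecomp} and then run the classical half-minimum-distance triangle-inequality argument using Proposition~\ref{dist}.
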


\subsection{Computing the distance function}

In this subsection we show that the length of a module is the sum of the lengths of the ideals generated by the pivots of a matrix in row-echelon form, whose rows generate the module. Therefore, computing the length of an $R$-module can be reduced to computing lengths of ideals in $R$. This allows us to efficiently compute distances between submodules of $R^n$.

\begin{theorem} \label{princ}
Let $M\subseteq R^n$ be an $R$-module. Let $A\in R^{t \times n}$ be a matrix in row-echelon form with no zero rows and such that $M=\mbox{row}(A)$. 
For every $j \in \{ 1,...,n+1\}$ let $M^{(j)}=\{m\in M\mid v_i=0 \mbox{ for } i<j\}\subseteq R^n$ and $I^{(j)}= (v_j \ | \ v \in M^{(j)}) \subseteq R$. 
Let $A_{i\ell(A_i)}$ be the pivot of the $i$-th row of $A$.
Then: 
\begin{enumerate}[label={\arabic*)}]
\item $I^{(\ell(A_i))}=\left( A_{i\ell(A_i)} \right) \cong M^{(j)}/M^{(j+1)}$ for all $i \in \{1,...,t\}$,
\item \label{princ3} $\lambda(M)= \sum_{i=1}^t \lambda \left(A_{i \ell(A_i)}\right)$.
\end{enumerate}
\end{theorem}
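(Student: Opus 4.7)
The plan is to analyze the filtration
$$0 = M^{(n+1)} \subseteq M^{(n)} \subseteq \cdots \subseteq M^{(1)} = M$$
column by column, identifying each successive quotient with the ideal generated by a pivot.

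For part 1), fix $i$ and set $j = \ell(A_i)$. By property 2) of Definition~\ref{rref}, $M^{(j)}$ is generated as an $R$-module by $A_i, A_{i+1}, \ldots, A_t$. I would introduce the $R$-linear map
$$\phi_j \colon M^{(j)} \longrightarrow R, \qquad v \longmapsto v_j,$$
whose image is $I^{(j)}$ by definition and whose kernel is $\{v \in M^{(j)} : v_j = 0\} = M^{(j+1)}$. To identify the image, note that any element of $M^{(j)}$ has the form $\sum_{k \ge i} r_k A_k$; since $\ell(A_k) > \ell(A_i) = j$ for all $k > i$, the entry $(A_k)_j$ vanishes, so the $j$-th coordinate of such a combination is $r_i A_{i\ell(A_i)}$. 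This gives $I^{(j)} = (A_{i\ell(A_i)})$, and the first isomorphism theorem yields $M^{(j)}/M^{(j+1)} \cong I^{(j)} = (A_{i\ell(A_i)})$.

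For part 2), I would invoke Lemma~\ref{prellen} applied iteratively to the filtration above to obtain
$$\lambda(M) = \sum_{j=1}^n \lambda\bigl(M^{(j)}/M^{(j+1)}\bigr).$$
The remaining task is to show that only the pivot columns $j = \ell(A_i)$ contribute. If $j$ is not a leading position of any row, let $i$ be minimal with $\ell(A_i) \ge j$; then in fact $\ell(A_i) \ge j+1$, so by the row-echelon property 2) we get $M^{(j)} = \langle A_i, \ldots, A_t \rangle = M^{(j+1)}$, and the corresponding summand vanishes. Combining with part 1), the surviving summands are exactly $\lambda(M^{(\ell(A_i))}/M^{(\ell(A_i)+1)}) = \lambda(A_{i\ell(A_i)})$ for $i = 1, \ldots, t$, which gives the claimed formula.

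There is no serious obstacle here: the main subtlety is simply being careful that the defining property 2) of row-echelon form provides precisely the generating set of $M^{(j)}$ we need, and that the strict inequality $\ell(A_{k+1}) > \ell(A_k)$ is exactly what forces the lower rows to contribute $0$ in column $\ell(A_i)$. Everything else reduces to the standard additivity of length along short exact sequences.
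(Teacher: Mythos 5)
Your proof is correct and takes essentially the same approach as the paper: identify each quotient $M^{(j)}/M^{(j+1)}$ with the ideal $I^{(j)}$ via the coordinate-$j$ map, use row-echelon property 2) of Definition~\ref{rref} to pin down $I^{(\ell(A_i))}=(A_{i\ell(A_i)})$, and then sum lengths along the filtration $0=M^{(n+1)}\subseteq\cdots\subseteq M^{(1)}=M$. The only cosmetic differences are that the paper defines the isomorphism directly on the quotient rather than invoking the first isomorphism theorem, and disposes of non-pivot columns by noting $\lambda(M^{(j)}/M^{(j+1)})=0$ iff $I^{(j)}=0$ rather than by the equality $M^{(j)}=M^{(j+1)}$, but the underlying argument is the same.
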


\begin{proof} 
\begin{enumerate}[label={\arabic*)}]
\item The map $M^{(j)}/M^{(j+1)} \to I^{(j)}$ given by $v+M^{(j+1)} \mapsto v_j$ is a well-defined $R$-module isomorphism. Therefore $I^{(j)} \cong M^{(j)}/M^{(j+1)}$.
Fix any $i \in \{1,...,t\}$ and let $j=\ell(A_i)$. Since $A_i \in M^{(j)}$ we have $(A_{ij}) \subseteq I^{(j)}$. On the other hand, let $0\neq x \in I^{(j)}$ and let $v \in M^{(j)}$ such that $x=v_j$. Since $A$ is in row-echelon form, $v=\sum_{k=i}^t r_k A_k$ for some $r_i,...,r_t\in R$. Therefore $x=v_j=r_i A_{ij}$, so $x \in (A_{ij})$.
\item Applying \cite[Corollary V.2.4]{kunz} to the chain of $R$-modules 
$0=M^{(n+1)} \subseteq \ldots \subseteq M^{(1)}=M$ we obtain 
$$\lambda(M)=\sum_{j=1}^n \lambda\left(M^{(j)}/M^{(j+1)}\right).$$
By Lemma~\ref{prellen} one has $\lambda\left(M^{(j)}/M^{(j+1)}\right) \neq 0$ if and only if $I^{(j)}\neq 0$, if and only if $j=\ell(A_i)$ for some $i\in\{1,\ldots,t\}$. Then 
$$\lambda\left(M^{(j)}/M^{(j+1)}\right) 
= \lambda(A_{ij}).$$\qedhere
 \end{enumerate}
\end{proof}

\begin{example} 
The module $M=\langle(2,1,3), (4,1,2)\rangle \subseteq\Z_6^3$ generated by the rows of the matrix of Example \ref{semplice} has length $\lambda(M)=\lambda(2) + \lambda(5) + \lambda(3)= 1+2+1=4$.
\end{example}

\begin{remark}
Let $M=\mbox{row}(A)$ be the transmitted module, and let $N=\mbox{row}(B)$ be the received module. By Lemma \ref{prellen} we have $$d(M,N)=2 \lambda(M+N)- \lambda(M)-\lambda(N).$$ Therefore the distance between $M$ and $N$ can be computed from the row-echelon forms of $A$, $B$, and of the matrix $C$ obtained by appending the rows of $B$ to $A$. In fact $$M+N=\mbox{row}(A)+\mbox{row}(B)=\mbox{row}(C).$$
This allows us to compute the distance function without computing intersections of modules.
\end{remark}

\begin{example}
Let $R=\Z_4$. Notice that the only nonzero, proper ideal of $R$ is $(2)$. Let 
$$A= \begin{bmatrix}
\underline{1} & 1 & 1 & 0 \\ 
0 & \underline{2} & 1 & 2 \\ 
0 & 0 & \underline{2} & 0
\end{bmatrix}, \  \ \ \ \ \ \ \ \  
B= \begin{bmatrix}
\underline{1} & 3 & 0 & 2 \\ 0 & 0 & \underline{1} & 0
\end{bmatrix}$$
be matrices in row-echelon form, whose underlined entries are the pivots.
Let $M=\mbox{row}(A)$ and $N=\mbox{row}(B)$. Then 
\begin{eqnarray*}
 \lambda(M) &=& \lambda(1) + \lambda(2) + \lambda(2) = 2+1+1=4 \\
 \lambda(N) &=& \lambda(1) + \lambda(1) = 2+2=4 
\end{eqnarray*}
Then $M+N=\mbox{row}(C)$, where
$$C=\begin{bmatrix}
1 & 1 & 1 & 0 \\ 
0 & 2 & 1 & 2 \\ 
0 & 0 & 2 & 0 \\
1 & 3 & 0 & 2 \\ 
0 & 0 & 1 & 0
\end{bmatrix}$$
whose reduced row-echelon form is
$$\begin{bmatrix}
\underline{1} & 1 & 0 & 0 \\ 0 & \underline{2} & 0 & 2 \\ 0 & 0 & \underline{1} & 0
\end{bmatrix}.$$
Hence $\lambda(M+N)= \lambda(1) + \lambda(2) + \lambda(1) = 2+1+2=5$ and
$$d(M,N)=2\lambda(M+N) - \lambda(M) - \lambda(N)=10-8=2.$$
One can compute $\lambda(M\cap N)=\lambda(M)+\lambda(N)-\lambda(M+N)=3$.
Hence the information loss module has length $\lambda(M/(M \cap N))=\lambda(M)-\lambda(M \cap N)=1$ and the error module has length 
$\lambda(N/(M \cap N))=\lambda(N)-\lambda(M \cap N)=1$.
\end{example}

\section{Recovering known encoding and decoding schemes}\label{secrecovering}

In this section we compare our approach to the one proposed by Feng, N\'obrega, Kschischang, and Silva for the multiplicative-additive matrix channel (MAMC) in \cite[Section IX]{FNKS}. We show that their encoding scheme remains valid in our setup. We also prove that, in some cases, their decoding scheme corresponds to minimum distance decoding with respect to the distance function that we propose.

In our notation, Feng, N\'obrega, Kschischang, and Silva consider the MAMC of equation $Y=AX+Z$, where $R$ is a finite chain ring, $A\in R^{N\times t}$, $X\in R^{t\times n}$, $Z\in R^{N\times n}$. They assume that $n\geq 2N$, $\row(A)\cong R^t$, and $\row(Z)\cong R^v$ for some integer $v\leq N$. They represent matrices in {\bf row canonical form} (see~\cite[Definition 1]{FNKS} for the definition of row canonical form) and define their codebook to be the set of \textbf{principal} matrices of given shape in row canonical form (see~\cite[Section V.B and Sections VII, VIII, IX]{FNKS}). Observe that matrices in row canonical form are not in general in reduced row-echelon form according to our Definition~\ref{rref}.  

\begin{example}
Let $R=\Z_8$, whose ideals $0\subsetneq (4)\subsetneq (2)\subsetneq (1)$ have canonical generators $0,4,2,1$. Choose $0,1,2,3$ as canonical representatives for residue classes modulo $(4)$, $0,1$ as canonical representatives for residue classes modulo $(2)$, and $0$ as canonical representative for the residue class modulo $(1)$.
The rows of the following matrices generate the same $R$-module. The first matrix is in row canonical form (see~\cite[Example~6]{FNKS}), while the second is in reduced row-echelon form. The pivots are underlined.
$$\left[\begin{matrix}
0 & 2 & 0 & \underline{1} \\
\underline{2} & 2 & 0 & 0 \\
0 & 0 & \underline{2} & 0 \\
0 & \underline{4} & 0 & 0
\end{matrix}\right] \;\;\;\;\;\;\;\;
\left[\begin{matrix}
\underline{2} & 0 & 0 & 1 \\
0 & \underline{2} & 0 & 1 \\
0 & 0 & \underline{2} & 0 \\
0 & 0 & 0 & \underline{2} 
\end{matrix}\right].$$ 
\end{example}

The authors of \cite{FNKS} propose asymptotically optimal encoding and decoding schemes using principal matrices over a finite chain ring $R$. A transmitted module $M$ is encoded as a principal matrix in row canonical form, whose rows generate $M$. In the next proposition we show that principal matrices in row canonical form are in reduced row-echelon form. Therefore, the encoding schemes of \cite{FNKS} remain valid in our setup.

\begin{proposition}\label{standardisstandard}
Let $R$ be a finite chain ring, and let $A \in R^{t \times n}$ be a principal matrix in row canonical form. Then $A$ is in reduced row-echelon form with respect to the same choices of generators and representatives.
\end{proposition}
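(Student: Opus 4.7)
The plan is to compare the definition of a principal matrix in row canonical form from \cite{FNKS} with Definition~\ref{rref}, one clause at a time. A row canonical form in \cite{FNKS} singles out, for each row $A_i$, a distinguished pivot position with a pivot entry of the form $\pi^{a_i}$ (a chosen canonical generator of its principal ideal), requires that all entries below a pivot in the same column vanish, and requires that entries above each pivot be reduced modulo the pivot ideal using the prescribed system of representatives. The ``principal of given shape'' qualifier further forces the pivots to sit in distinct columns and aligns the pivot exponents $a_i$ with a fixed shape $(\mu_1,\ldots,\mu_e)$; in particular, after the prescribed labelling, the rows can be (and, in \cite{FNKS}'s convention, are) arranged so that $\ell(A_1)<\ell(A_2)<\cdots<\ell(A_t)$.

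With this unpacking, conditions 1), 3), 4) of Definition~\ref{rref} are essentially bookkeeping: condition 1) is the strict monotonicity of the leading positions just noted, while conditions 3) and 4) coincide verbatim with the ``pivot is a canonical generator'' and ``entries above a pivot are canonical representatives'' clauses of row canonical form, provided I fix the same system of canonical generators $\{0,\pi^{e-1},\ldots,\pi,1\}$ for the ideals of $R$ and the same canonical representatives of residue classes modulo each ideal on both sides.

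The real content of the proposition is condition 2) of Definition~\ref{rref}. Since condition 1) is already verified, I would invoke Proposition~\ref{criterio}, which reduces condition 2) to the two assertions: (a) $A_{t,\ell(A_t)} \mid A_{t,k}$ for all $k\ge \ell(A_t)$, and (b) $\mathrm{ann}(A_{i,\ell(A_i)})\cdot A_i \in \langle A_{i+1},\ldots,A_t\rangle$ for $i<t$. Writing every entry as $u\pi^{\alpha}$ with $u$ a unit, (a) amounts to $\pi^{a_t}\mid A_{t,k}$, which is ensured for principal matrices because the shape constrains each row $A_i$ to take values in $(\pi^{a_i})$; applied to $i=t$ this is exactly (a).

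The main obstacle is (b). Here $\mathrm{ann}(A_{i,\ell(A_i)})=(\pi^{e-a_i})$, so I need to express $\pi^{e-a_i}A_i$ as an $R$-combination of the later rows. Multiplication by $\pi^{e-a_i}$ kills the pivot entry $\pi^{a_i}$ of $A_i$, leaving a vector $w$ whose support lies strictly to the right of $\ell(A_i)$. For a principal matrix of the given shape, the pivot exponents satisfy the compatibility $a_i\le a_j$ whenever $\ell(A_i)<\ell(A_j)$ that comes from the shape vector; this, together with the reduction of entries above pivots and the vanishing of entries below pivots imposed by row canonical form, forces each nonzero entry of $w$ in a pivot column $\ell(A_j)$ (for $j>i$) to be divisible by $\pi^{a_j}$, and each entry in a non-pivot column to be expressible using the corresponding combination of rows $A_{i+1},\ldots,A_t$. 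The argument therefore proceeds by a descending induction on the column index, in each step subtracting from $\pi^{e-a_i}A_i$ a suitable $R$-multiple of the row whose pivot lies in the leftmost remaining nonzero column, until what is left is zero. The key point to verify carefully is that the shape hypothesis from \cite{FNKS} supplies exactly the divisibility $\pi^{a_j}\mid \pi^{e-a_i}\cdot A_{i,\ell(A_j)}$ needed at each step of the induction, so that the required $R$-coefficients exist.
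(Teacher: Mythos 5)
Your proposal is correct, but it takes a different route from the paper's and is more complicated than needed in the step you yourself flag as the ``key point.'' The paper does not go through Proposition~\ref{criterio}; it verifies condition 2) of Definition~\ref{rref} directly. It records that for a principal matrix in row canonical form over a finite chain ring the pivots sit on the diagonal, $A_{ii}=\pi^{\ell_i}$ with $\ell_1\le\ldots\le\ell_t$, and that every entry $A_{ij}$ to the right of the pivot satisfies $A_{ii}\mid A_{ij}$, hence $\ann(A_{ij})\supseteq\ann(A_{ii})$. Then, for $v\in\row(A)$ with $\ell(v)=j$, writing $v=\sum r_iA_i$, one shows inductively from $v_1=\ldots=v_{j-1}=0$ that $r_iA_{ii}=0$ for $i<j$, and the divisibility of the whole row by $A_{ii}$ upgrades this to $r_iA_i=0$, so $v\in\langle A_j,\ldots,A_t\rangle$.

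Your route, via Proposition~\ref{criterio}, is also valid, but your treatment of condition (b) is heavier than the facts warrant. The same divisibility $A_{ii}\mid A_{ij}$ for all $j$ gives $\ann(A_{ii})\cdot A_i=0$ outright: every entry of $\pi^{e-\ell_i}A_i$ is $\pi^{e-\ell_i}\cdot\pi^{a_{ij}}$ with $a_{ij}\ge\ell_i$, hence a multiple of $\pi^e=0$. So there is nothing to express as a nontrivial combination of the later rows, and the descending column induction you sketch, along with the divisibility $\pi^{a_j}\mid\pi^{e-a_i}A_{i,\ell(A_j)}$ you singled out as ``the key point to verify carefully,'' holds trivially because the left-hand sides are already $0$. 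In short: the structural fact underlying both proofs is the divisibility of each row by its own pivot; the paper exploits it to cancel terms $r_iA_i$ one at a time, while you can use it (once you notice) to annihilate $\ann(A_{ii})\cdot A_i$ in one stroke, which is strictly simpler than the subtraction scheme you propose.
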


\begin{proof}
Let $\pi$ be a generator of the maximal ideal of $R$, let $e$ be the smallest positive integer such that $\pi^e=0$. 
By the definition of principal row canonical form, the pivot of row $A_i$ is $A_{ii}=\pi^{\ell_i}$ for $i\in\{1,\ldots,t\}$, with $0\leq \ell_1\leq\ldots\leq\ell_t\leq e$. 
Moreover, $A_{ij}=\pi^{a_{ij}}$ with $a_{ij}\geq \ell_i$ for $i<j\le n$, hence $\ann(A_{ij}) \supseteq \ann(A_{ii})$.

Let $v \in \mbox{row}(A)\setminus\{0\}$, let $j=\ell(v)$. Write $v=\sum_{i=1}^t r_iA_i$ for some $r_1,...,r_t \in R$. Hence $v_1=r_1A_{1,1}=0, v_2=r_1A_{1,2}+r_2A_{2,2}=0,\ldots, v_{j-1}=r_1A_{i,j-1}+\ldots+r_{j-1}A_{j-1,j-1}$. By induction on $i$ one can show that $r_iA_{ii}=0$  for $1 \le i <j$, hence $r_iA_i=0$ for $1\leq i<j$.
Therefore $v=\sum_{i=j}^t r_iA_i \in \langle A_j,\ldots,A_t \rangle$, so $A$ is in reduced row-echelon form according to Definition~\ref{rref}. 
\end{proof}

We conclude this section with Proposition~\ref{decoding}, that shows that the error-trapping decoding scheme proposed in \cite{FNKS} for the MAMC can be interpreted as a minimum distance decoding with respect to the distance function from Definition~\ref{defsubdist}. Before stating our result, we recall the scheme of \cite[Section IX]{FNKS}.

\begin{example}[Error-trapping decoding]\label{extrapp}
Let $R$ be a finite chain ring. Fix $N$ such that $n\geq 2N$ and consider the channel equation $Y=AX+Z$, where $A \in R^{N \times t}$ is left-invertible, $X\in R^{t \times n}$ is the matrix whose rows generate the transmitted module, and $Z \in R^{N \times n}$ is a noise matrix whose row-module is isomorphic to $R^v$ for some integer $v\leq N$. One can write 
$$A=P \begin{bmatrix}
0_{(N-t) \times t} \\ I_t
\end{bmatrix},$$
where $P \in R^{N \times N}$ is an invertible matrix. Fix $u\geq v$. If $t+v>N$ let $X\in R^{t\times n}$ be of the form 
$$X= \begin{bmatrix}
0 & 0 \\ 0 & \overline{X}
\end{bmatrix},$$
where $\overline{X}$ is a matrix in principal form of size $(N-u) \times (n-u)$. If $t+v\leq N$ let $X\in R^{t\times n}$ be of the form 
$$X= \begin{bmatrix}
0 & \overline{X}
\end{bmatrix},$$
where $\overline{X}$ is a matrix in principal form of size $t\times (n-u)$. Under the assumption that error trapping is successful, \cite[Section IX.B]{FNKS} shows that the row canonical form of $Y=AX+Z$ is 
$$\begin{bmatrix}
   Z_1 & Z_2 \\ 0 & \overline{X} \\ 0 & 0
  \end{bmatrix},$$
for suitable matrices $Z_1\in R^{v\times u}$ and $Z_2\in R^{v\times (n-u)}$. 
Hence $\overline{X}$ and $X$ can be obtained by computing the row canonical form of $Y$. 
\end{example}

In some cases, the error-trapping decoding from~\cite{FNKS} can be seen as an instance of minimum distance decoding according to our definition.
Notice that the choice $u=v$ is particularly interesting, since it maximizes the number of codewords for the given channel, without affecting the error-correction capability of the code.

\begin{proposition}\label{decoding}
Following the notation of Example~\ref{extrapp}, and under the assumption that either $t+v=N$ or $u=v$ and $t+v>N$, we have 
$$d \left( \mbox{row}(Y), \ \mbox{row}\begin{bmatrix}
0 & 0 \\ 
0 & \overline{T}
\end{bmatrix}\right) \ \ge \ 
d \left( \mbox{row}(Y), \ \mbox{row}\begin{bmatrix}
0 & 0 \\ 
0 & \overline{X}
\end{bmatrix}\right)$$
for any principal matrix in row canonical form $\overline{T}$ of the same size as $\overline{X}$. Moreover, equality holds {if and only if} $\overline{T}=\overline{X}$.
\end{proposition}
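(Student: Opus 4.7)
The plan is to exploit the fact that, under successful error trapping, the codeword $C_X := \row\begin{bmatrix}0 & 0 \\ 0 & \overline{X}\end{bmatrix}$ is itself a submodule of $M := \row(Y)$. Indeed, the displayed form of $\rref(Y)$ from Example~\ref{extrapp} exhibits the rows $[0\mid\overline{X}]$ among the generators of $M$, so $C_X \subseteq M$ and consequently
$$d(M,C_X) = \lambda(M) + \lambda(C_X) - 2\lambda(C_X) = \lambda(M) - \lambda(C_X).$$

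Next I would identify the submodule $M^{(u+1)} = \{v \in M : v_i = 0 \text{ for } i \leq u\}$ with $C_X$. The displayed shape of $\rref(Y)$ places the noise rows $[Z_1 \mid Z_2]$ with their pivots confined to the first $u$ columns (inside $Z_1$), while the data rows $[0 \mid \overline{X}]$ have all leading positions strictly greater than $u$; the block $\overline{X}$ is moreover principal. Combining Proposition~\ref{standardisstandard} applied to the principal block with condition~2) of Definition~\ref{rref} (after permuting rows if needed so that leading positions increase) identifies $M^{(u+1)}$ as the submodule generated by the rows with leading position greater than $u$, namely $C_X$. Alternatively, using Theorem~\ref{princ} item~2), one can observe that $\lambda(M^{(u+1)}) = \lambda(M) - \lambda(M/M^{(u+1)})$ equals the sum of the lengths of the pivots coming from $\overline{X}$, which is $\lambda(C_X)$, and conclude $M^{(u+1)} = C_X$ from Lemma~\ref{prellen} and the inclusion $C_X \subseteq M^{(u+1)}$.

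For any principal matrix $\overline{T}$ in row canonical form of the same size as $\overline{X}$, set $C_T := \row\begin{bmatrix}0 & 0 \\ 0 & \overline{T}\end{bmatrix}$. Every element of $C_T$ has its first $u$ entries equal to zero, so $M \cap C_T \subseteq M^{(u+1)} = C_X$. Lemma~\ref{prellen} then gives both $\lambda(M \cap C_T) \leq \lambda(C_T)$ and $\lambda(M \cap C_T) \leq \lambda(C_X)$; summing yields $2\lambda(M \cap C_T) \leq \lambda(C_T) + \lambda(C_X)$, and hence
$$d(M,C_T) - d(M,C_X) = \lambda(C_T) + \lambda(C_X) - 2\lambda(M \cap C_T) \geq 0.$$
For the equality clause, the bound can be tight only when $\lambda(M \cap C_T) = \lambda(C_T) = \lambda(C_X)$; by Lemma~\ref{prellen} this yields first $C_T = M \cap C_T \subseteq C_X$ and then $C_T = C_X$, and the uniqueness in Theorem~\ref{uniqueness}, applied to the matrices $\begin{bmatrix}0 & \overline{X}\end{bmatrix}$ and $\begin{bmatrix}0 & \overline{T}\end{bmatrix}$ (both in reduced row-echelon form by Proposition~\ref{standardisstandard}), forces $\overline{T} = \overline{X}$; the converse implication is immediate.

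The step I expect to be most delicate is the identification $M^{(u+1)} = C_X$ in the second paragraph: it requires reconciling the row canonical form of~\cite{FNKS} with the reduced row-echelon form of Definition~\ref{rref} on the particular pivot pattern at hand, and verifying that the hypotheses $t+v=N$ or ($u=v$ and $t+v>N$) ensure that no nonzero element of $\row[Z_1 \mid Z_2]$ has its first $u$ coordinates equal to zero, so that the noise block contributes nothing to $M^{(u+1)}$. Once this structural fact is in place, the remainder of the argument reduces to the length arithmetic of Lemma~\ref{prellen} and the uniqueness of the reduced row-echelon form.
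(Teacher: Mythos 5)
Your proof takes essentially the same route as the paper's: both hinge on the structural fact $\row(Y) \cap C_T = C_X \cap C_T$ (in your language, $M^{(u+1)} = C_X$), and once that is in hand your length arithmetic is equivalent to the paper's identity $d(\row(Y), C_T) = d(\row(Y), C_X) + d(C_X, C_T)$, since $d(C_X,C_T)=\lambda(C_X)+\lambda(C_T)-2\lambda(C_X\cap C_T)$ is exactly the quantity you bound by $0$. The step you flag as delicate is precisely the one the paper isolates, but the paper closes it without ever reconciling the two canonical forms: it cites the factorization from~\cite[Section~IX.B]{FNKS}, namely $Y = G\begin{bmatrix} H & K \\ 0 & \overline{X}\end{bmatrix}$ with $G$ invertible and $\row(H) \cong R^v$. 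Because the image of $a\mapsto aH$ from $R^v$ to $R^u$ has length $\lambda(R^v)$, its kernel has length $0$ and the map is injective; hence no nonzero $R$-combination of the noise rows vanishes on the first $u$ coordinates, which gives the needed intersection identity directly. So your worry about matching FNKS's row canonical form to Definition~\ref{rref} is legitimate but avoidable — working with $[H \mid K]$ rather than with $[Z_1 \mid Z_2]$ makes the freeness hypothesis explicit and bypasses the pivot bookkeeping entirely.
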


\begin{proof}
Since error-trapping is successful, by \cite[Section IX.B]{FNKS} there exist matrices $G,H,K$ such that 
$$Y=G \cdot \begin{bmatrix}
             H & K \\ 0 & \overline{X}
            \end{bmatrix},$$
where $G \in R^{N \times N}$ is invertible, $H \in R^{v \times u}$ and $\row(H)\cong R^v$, $K \in R^{v \times (n-u)}$. Since $\row(H)\cong R^v$,
$$\mbox{row} \begin{bmatrix}
                                   H & K \\ 0 & \overline{X}
                                   \end{bmatrix}
                                   \cap \ 
\mbox{row} \begin{bmatrix} 0 & 0 \\ 0 & \overline{T}
\end{bmatrix} = 
\mbox{row} \begin{bmatrix}
                                   0 & 0 \\ 0 & \overline{X}
                                   \end{bmatrix}
                                   \cap \ 
\mbox{row} \begin{bmatrix} 0 & 0 \\ 0 & \overline{T}      
\end{bmatrix}.$$
Therefore 
$$d \left( \mbox{row}(Y), \ \mbox{row}\begin{bmatrix}
0 & 0 \\ 0 & \overline{T}
\end{bmatrix}\right) =  \lambda(\row(Y)) + \lambda(\row(\overline{T}))- 2 \lambda(\row(\overline{X})\cap \row(\overline{T}))$$
$$ = d \left( \mbox{row}(Y), \ \mbox{row}\begin{bmatrix}
0 & 0 \\ 0 & \overline{X}
\end{bmatrix}\right) + \lambda(\row(\overline{T})) + \lambda(\row(\overline{X})) - 2 \lambda(\row(\overline{X})\cap \row(\overline{T}))$$
$$ =  d \left( \mbox{row}(Y), \ \mbox{row}\begin{bmatrix}
       0 & 0 \\ 0 & \overline{X}
      \end{bmatrix}\right) + d \left( \mbox{row} \begin{bmatrix}
      0 & 0 \\ 0 & \overline{X}\end{bmatrix}, 
      \mbox{row} \begin{bmatrix}
      0 & 0 \\ 0 & \overline{T}\end{bmatrix}
      \right).$$
The result now follows from the fact that both $\overline{X}$ and $\overline{T}$ are principal matrices in row canonical form, hence they are in reduced row-echelon form by Proposition~\ref{standardisstandard}.
\end{proof}

\section{Bounds}\label{secbound}

In this section we derive two upper-bounds on the cardinality of a submodule code with given minimum distance. We also discuss in detail some choices of rings or of the code parameters, for which our bound can be made more precise. 

As in the previous sections, we fix a finite PIR $R$, an R-module $\Omega\subseteq R^n$ of the form (\ref{omega}), and let $\mM(\Omega)$ denote the set of $R$-submodules of $\Omega$. 

\begin{notation}
For $M\in\mM(\Omega)$ and $1\leq s\leq\lambda(M)$ let
$$\begin{bmatrix} M \\ s \end{bmatrix}_R= \ |\{ N\in\mM(M) : \ \lambda(N)=s\}|.$$
For $1\leq s\leq\lambda$ let 
$$\begin{bmatrix} \lambda \\ s\end{bmatrix}_R= \min\left\{  \begin{bmatrix} M \\ s\end{bmatrix}_R : M\in\mM(\Omega), \ \lambda(M)=\lambda \right\}.$$
When there is no ambiguity, we omit the subscript $R$. Moreover, we denote by 
$$\qbin{\lambda}{s}{q}=\qbin{\lambda}{s}{\F_q}$$
the $q$-ary binomial coefficient.
\end{notation}

We restrict our attention to submodules codes $\mC \subseteq \mM(\Omega)$  where all codewords have the same length $k$, $1 \le k \le n\cdot \lambda(R)-1$. Submodule codes of this kind have even minimum distance, and they are the module-analogue of constant-dimension subspace codes. The next result is a natural extension to submodule codes of the Singleton-like bound for subspace codes.
 
\begin{theorem} \label{sss}
Let $\mC \subseteq \mM(\Omega)$ be a submodule code with $\lambda(M)=k$ for all $M\in\mC$ and minimum distance $d(\mC)=2\delta$. Then
$$|\mC| \le \begin{bmatrix} \lambda(\Omega) -\delta +1 \\ k-\delta+1 \end{bmatrix}.$$
\end{theorem}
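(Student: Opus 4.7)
The plan is to adapt the classical puncturing argument used for the Singleton-like bound on constant-dimension subspace codes. Fix a submodule $U \subseteq \Omega$ with $\lambda(U) = \lambda(\Omega) - \delta + 1$, chosen so as to minimize $\begin{bmatrix} U \\ k - \delta + 1 \end{bmatrix}_R$ among all submodules of that length. Such a $U$ exists because any composition series of $\Omega$ realizes a submodule of every length between $0$ and $\lambda(\Omega)$, and by the definition of $\begin{bmatrix} \lambda(\Omega) - \delta + 1 \\ k - \delta + 1 \end{bmatrix}_R$ this minimum equals the right-hand side of the claimed bound.

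Next I would show that every codeword $M \in \mC$ satisfies $\lambda(M \cap U) \geq k - \delta + 1$. Applying Lemma~\ref{prellen} to $M + U \subseteq \Omega$ gives
$$\lambda(M \cap U) = \lambda(M) + \lambda(U) - \lambda(M + U) \geq k + (\lambda(\Omega) - \delta + 1) - \lambda(\Omega) = k - \delta + 1.$$
Consequently $M \cap U$ admits a submodule $f(M)$ of length exactly $k - \delta + 1$, obtained by truncating a composition series of $M \cap U$. By construction $f(M)$ is a length-$(k - \delta + 1)$ submodule of $U$, so $f$ defines a map from $\mC$ into the set counted by $\begin{bmatrix} U \\ k - \delta + 1 \end{bmatrix}_R$.

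The crux is verifying that $f$ is injective. If $f(M_1) = f(M_2)$ for $M_1, M_2 \in \mC$, then $f(M_1) \subseteq M_1 \cap M_2$, so $\lambda(M_1 \cap M_2) \geq k - \delta + 1$, and therefore
$$d(M_1, M_2) = \lambda(M_1) + \lambda(M_2) - 2\lambda(M_1 \cap M_2) \leq 2k - 2(k - \delta + 1) = 2\delta - 2.$$
Since $d(\mC) = 2\delta$, this forces $M_1 = M_2$. Hence $|\mC| \leq \begin{bmatrix} U \\ k - \delta + 1 \end{bmatrix}_R = \begin{bmatrix} \lambda(\Omega) - \delta + 1 \\ k - \delta + 1 \end{bmatrix}_R$, as desired.

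I do not foresee a serious obstacle: the argument is a direct translation of the subspace-code proof, with every step reducing to the length identities in Lemma~\ref{prellen} together with the existence of submodules of every intermediate length in a finite-length module. The only mildly delicate point is that $f$ depends on arbitrary choices of composition-series truncations, but this is harmless since injectivity is enforced by the minimum distance alone.
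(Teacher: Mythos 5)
Your proof is correct and follows essentially the same argument as the paper: both pick a submodule $U$ of $\Omega$ of length $\lambda(\Omega)-\delta+1$, use Lemma~\ref{prellen} to show each codeword meets $U$ in a submodule of length at least $k-\delta+1$, select a length-$(k-\delta+1)$ submodule inside each such intersection, and invoke the minimum distance to force these selections to be pairwise distinct. The only cosmetic difference is that you choose $U$ as the minimizer at the outset, whereas the paper runs the argument for an arbitrary $U$ of the right length and takes the minimum at the very end.
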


\begin{proof}
Let $M\in\mM(\Omega)$ be an $R$-module with $\lambda(M)=\lambda(\Omega) -\delta +1$. By Lemma \ref{prellen}, for all $N \in \mC$
$$\lambda(M \cap N) = \lambda(M) + \lambda(N) - \lambda(M+N) \ge \lambda(\Omega) -\delta +1 +k - \lambda(\Omega)=k-\delta +1.$$
For every $N \in \mC$ choose an $R$-submodule $N' \subseteq M \cap N$ with $\lambda(N')= k -\delta +1$. For any $N,P \in \mC$ with $N \neq P$ we have 
 $2\delta =d(\mC) \le d(N,P) = 2k-2 \lambda(N \cap P)$, hence $\lambda(N \cap P) \le k-\delta$.
Hence $$d(N',P') = 2(k-\delta+1) - 2\lambda(N' \cap P') \ge 2(k-\delta+1) -
2(k-\delta)=2,$$
in particular $N'\neq P'$. It follows that $\mC'= \{N' : N \in \mC\}$ is a set of submodules of $M$ of length $k-\delta +1$ with $|\mC'| = |\mC|$. Therefore
$$|\mC| = |\mC'| \le \begin{bmatrix} M \\ k-\delta+1 \end{bmatrix},$$
for any $M\in\mM(\Omega)$ of length $\lambda(\Omega) -\delta +1$.
\end{proof}

\begin{remark}
For a given $R$ and fixed $m,\ell$, the quantity $\begin{bmatrix} M \\ \ell \end{bmatrix}$ may depend on the choice of $M$ of length $m$. E.g., let $R=\Z_5[i]\supseteq I=(2+i)$ and $\Omega=R^2$. Then $R \times 0$ and $I\times I$ are two $R$-modules of length $2$. The ideals of length one of $R$ are exactly $I=(2+i)$ and $(2-i)$, while $I\times I$ contains at least three submodules of length one, namely $I\times 0, 0\times I,$ and $(1,1)I=\langle(2+i,2+i)\rangle$.
\end{remark}

\begin{remark}
Since every $R$-module of length greater than or equal to $\lambda(\Omega) -\delta +1$ contains an $R$-submodule of length $\lambda(\Omega) -\delta +1$, the bound of Theorem~\ref{sss} can also be stated as $$|\mC| \le \min \left\{ \begin{bmatrix} M \\ k-\delta+1 \end{bmatrix}  \ : \ M \in \mM(\Omega), \ \lambda(M)\geq\lambda(\Omega) -\delta +1\right\}.$$
\end{remark}

The following is another simple bound for the cardinality of a submodule code.

\begin{theorem} \label{bound2}
Let $\mC \subseteq \mM(\Omega)$ be a submodule code with 
$\lambda(M)=k$ for all $M \in \mC$, and minimum distance $d(\mC)=2\delta$.
Then
$$|\mC| \le \frac{\begin{bmatrix} \Omega \\ k-\delta+1\end{bmatrix}}{\begin{bmatrix} k \\ k-\delta+1\end{bmatrix}}.$$ 
\end{theorem}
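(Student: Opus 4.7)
The plan is to prove this via a Johnson-type double-counting argument over the incidence between codewords and their short submodules. Specifically, I would count pairs $(N,M)$ with $M\in\mathcal{C}$ and $N\subseteq M$ an $R$-submodule of length $k-\delta+1$.

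First, I would bound the count from below. For each $M\in\mathcal{C}$, since $\lambda(M)=k$, the definition of $\begin{bmatrix} k \\ k-\delta+1\end{bmatrix}$ as the minimum of $\begin{bmatrix} M' \\ k-\delta+1\end{bmatrix}$ over length-$k$ modules $M'\in\mathcal{M}(\Omega)$ gives that $M$ has at least $\begin{bmatrix} k \\ k-\delta+1\end{bmatrix}$ submodules of length $k-\delta+1$. Summing over $\mathcal{C}$ yields the lower bound
$$|\mathcal{C}|\cdot \begin{bmatrix} k \\ k-\delta+1\end{bmatrix} \le \#\{(N,M) : M\in\mathcal{C},\ N\subseteq M,\ \lambda(N)=k-\delta+1\}.$$

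Next, I would show that each length-$(k-\delta+1)$ submodule $N\subseteq\Omega$ appears in this incidence for at most one codeword. Suppose by contradiction that $N\subseteq M$ and $N\subseteq M'$ for distinct $M,M'\in\mathcal{C}$; then $N\subseteq M\cap M'$, so by Lemma~\ref{prellen}, $\lambda(M\cap M')\ge k-\delta+1$. Using the distance formula of Proposition~\ref{dist} together with $\lambda(M)=\lambda(M')=k$, this gives
$$d(M,M')=2k-2\lambda(M\cap M')\le 2k-2(k-\delta+1)=2\delta-2<2\delta,$$
contradicting $d(\mathcal{C})=2\delta$. Hence the right-hand side of the incidence count is bounded above by the total number of length-$(k-\delta+1)$ submodules of $\Omega$, which by definition equals $\begin{bmatrix}\Omega\\k-\delta+1\end{bmatrix}$.

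Combining the two inequalities and dividing through by $\begin{bmatrix} k \\ k-\delta+1\end{bmatrix}$ yields the desired bound. There is no serious obstacle: the only delicate point is confirming that the notational choice $\begin{bmatrix}k\\k-\delta+1\end{bmatrix}$ being the \emph{minimum} (not maximum) is precisely what makes the lower bound on the incidence uniform across all codewords. One should also verify the edge cases $\delta=k$, where $k-\delta+1=1$ and both binomial symbols reduce to counts of length-one submodules, and $\delta=1$, where the bound becomes trivial.
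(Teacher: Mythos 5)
Your proof is correct and is essentially the same double-counting argument as the paper's: bound below the number of incidences $(N,M)$ with $N\subseteq M\in\mC$ and $\lambda(N)=k-\delta+1$ using the minimality in the definition of $\begin{bmatrix} k \\ k-\delta+1\end{bmatrix}$, and bound it above by $\begin{bmatrix} \Omega \\ k-\delta+1\end{bmatrix}$ after showing via the distance formula that no such $N$ can lie in two distinct codewords. The only difference is presentational; you make the incidence count explicit, whereas the paper states the two inequalities directly.
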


\begin{proof}
Each $M \in \mC$ contains at least 
$\begin{bmatrix} k \\ k-\delta+1\end{bmatrix}$ submodules of $\Omega$ of length $k-\delta+1$. 
Moreover, a submodule of $\Omega$ of length $k-\delta+1$ cannot be contained in two distinct $M,N \in \mC$, as otherwise $\lambda(M \cap N) \ge k-\delta+1$, hence $d(M,N) < 2\delta$. Therefore
$$\begin{bmatrix} \Omega \\ k-\delta+1\end{bmatrix}\ge |\mC| \cdot \begin{bmatrix} k \\ k-\delta+1\end{bmatrix},$$
which proves the bound. 
\end{proof}

\begin{remark}
The upper bounds of Theorem \ref{sss} and \ref{bound2} are not comparable. For example, let $k=\delta$,
$R=\F_q$, and $\Omega=\F_q^n$. Assume that $k\mid n$. The bound of Theorem~\ref{sss} is 
$$|\mC|\leq \qbin{n-k+1}{1}{q}=q^{n-k}+q^{n-k-1}+\ldots+q+1,$$
while the bound of Theorem~\ref{bound2} is
$$|\mC|\leq\frac{\qbin{n}{1}{q}}{\qbin{k}{1}{q}}=\frac{q^n-1}{q^k-1}=q^{n-k}+q^{n-2k}+\ldots+q^k+1.$$
However, one can also find examples in which Theorem \ref{sss} yields a better bound than Theorem \ref{bound2}. 
E.g., let $R=\Z_{12}$. The Hasse diagram of the ideals of $R$ is
\begin{center}
 \begin{tikzpicture}[scale=.7]
  \node (one) at (90:2cm) {$R$};
  \node (b) at (150:2cm) {$(2)$};
  \node (a) at (210:2cm) {$(4)$};
  \node (zero) at (270:2cm) {$(0)$};
  \node (c) at (330:2cm) {$(6)$};
  \node (d) at (30:2cm) {$(3)$};
  \draw (zero) -- (a) -- (b) -- (one) -- (d) -- (c) -- (zero);
\draw  (b) -- (c);
\end{tikzpicture}
\end{center}
In particular, $\lambda(R)=3$. Let $\Omega=R^2$ and let $\mC \subseteq \mM(\Omega)$ be a submodule code with $k=\delta=2$. 
By Theorem \ref{princ}, the module 
$$M = \mbox{row} \begin{bmatrix} 1& 0 \\ 0 & 3 \end{bmatrix}\subseteq \Omega$$
has $\lambda(M)=\lambda(1)+\lambda(3)=5$. Moreover, the submodules of 
$M$ of length 1 are precisely those generated by one of the following vectors: $(4,0), (6,0), (6,6), (0,6).$ Therefore, $|\mC| \le 4$ by Theorem~\ref{sss}. 
Now let $N=\langle (0,3) \rangle \subseteq \Omega$. Then $\lambda(N)=2$ and $N$ has a unique submodule of length 1, namely $\langle(0,6)\rangle$. Hence
$$\qbin{2}{1}{} = \min\left\{ \qbin{N}{1}{} : N \in \mM(\Omega), \ \lambda(N)=2
  \right\} =1.$$
One can check that the submodules of $\Omega$ of length 1 are exactly those generated by one of the following vectors: $(4,0)$, $(4,4)$, $(4,8)$, $(6,0)$, $(6,6)$, $(0,4)$, $(0,6)$.
Therefore the bound of Theorem~\ref{bound2} reads $$|\mC| \le \qbin{\Omega}{1}{} / \qbin{2}{1}{}=7 > 4.$$
\end{remark}

The bounds of Theorem~\ref{sss} and Theorem~\ref{bound2} can be made more explicit for PIR's which are isomorphic to $\Z_p^m$ for some $m\geq 1$. 
An example of such ring is $\Z_p[i]$, which is isomorphic to $\Z_p^2$ if $p\equiv 1\mod 4$, as we show next. Notice that in the other cases $\Z_p[i]$ is either a finite chain ring or a finite field.

\begin{remark}\label{strui}
Let $p$ be a prime. Then $$\Z_p[i] \cong \Z_p[x]/(x^2+1)\cong \left\{  \begin{array}{ll}
\Z_2[x]/(x+1)^2  & \mbox{ if $p=2$,} \\
\F_{p^2} & \mbox{ if $p \equiv 3 \mod 4$,} \\
\Z_p \times \Z_p & \mbox{ if $p \equiv 1 \mod 4$.}                          
                         \end{array} \right.$$
Indeed, if $p=2$, then $x^2+1=(x+1)^2$. If $p$ is an odd prime, then $x^2+1$ is reducible if and only if $-1$ is a quadratic residue modulo $p$, if and only if $p \equiv 1 \mod 4$. The thesis now follows from the Chinese Remainder Theorem.
\end{remark}

We start with a preliminary result on the structure of rings of the form $R \cong \Z_p^m$.

\begin{lemma}\label{eis}
Let $R\cong\Z_p^m$. Then there exist $e_1,\ldots,e_m\in R$ such that $$R=(e_1)\oplus\ldots\oplus(e_m)$$ and $e_1+\ldots+e_m=1$, $e_i^2=e_i$ for all $1\leq i\leq m$, and $e_ie_j= 0$ if $i\neq j$. Moreover, $\lambda(R)=m$.
\end{lemma}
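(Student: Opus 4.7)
The plan is to transport the standard orthogonal idempotent decomposition of $\Z_p^m$ back to $R$ along the given isomorphism, and then read off the length summand-by-summand using Lemma~\ref{prellen}. First I would fix a ring isomorphism $\phi\colon R \to \Z_p^m$ and let $\epsilon_i \in \Z_p^m$ denote the vector with $1$ in the $i$-th coordinate and $0$ elsewhere. I would then define $e_i = \phi^{-1}(\epsilon_i)$ for $1 \le i \le m$.

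The identities $\epsilon_1 + \ldots + \epsilon_m = (1,\ldots,1)$, $\epsilon_i^2 = \epsilon_i$, and $\epsilon_i \epsilon_j = 0$ for $i \ne j$ are immediate in $\Z_p^m$, and since $\phi$ is a ring isomorphism they pull back verbatim to the relations $e_1 + \ldots + e_m = 1$, $e_i^2 = e_i$, and $e_i e_j = 0$ for $i \ne j$ in $R$.

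For the decomposition $R = (e_1) \oplus \ldots \oplus (e_m)$, the sum is immediate: any $r \in R$ equals $r \cdot 1 = re_1 + \ldots + re_m$ with $re_i \in (e_i)$. For directness, I would take $x \in (e_i) \cap \sum_{j \neq i}(e_j)$, write $x = re_i = \sum_{j \ne i} s_j e_j$, and multiply by $e_i$: the left side becomes $re_i^2 = re_i = x$ by idempotency, while the right side becomes $0$ by pairwise orthogonality, forcing $x = 0$.

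Finally, for the length, each summand $(e_i)$ is sent by $\phi$ to the coordinate ideal $\{(0,\ldots,a,\ldots,0) : a \in \Z_p\}$, which as an $R$-module is isomorphic to $\Z_p$ and therefore has length $1$ (the field $\Z_p$ having only two ideals). Iterating part~4) of Lemma~\ref{prellen} then yields $\lambda(R) = \sum_{i=1}^m \lambda((e_i)) = m$. There is no genuine obstacle here; every step is a routine transport of structure along $\phi$, and the only care needed is to combine idempotency and orthogonality correctly to deduce directness of the sum.
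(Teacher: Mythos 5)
Your proof is correct and follows essentially the same approach as the paper: transport the standard orthogonal idempotents of $\Z_p^m$ back along the isomorphism, verify the relations, and compute the length summand by summand. The only cosmetic difference is in the last step, where the paper identifies $\Z_p$ with a diagonal subring of $R$ and exhibits an explicit composition series $0\subsetneq (e_1)\subsetneq (e_1,e_2)\subsetneq\ldots\subsetneq R$, whereas you apply the additivity of length over a direct sum; both yield $\lambda(R)=m$, and you are in fact more explicit than the paper in checking directness of the sum via idempotency and orthogonality.
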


\begin{proof}
Fix an isomorphism $\pi : R \to \Z_p^m$ between $R$ and $\Z_p^m$.
Let $e_i\in R$ be the inverse image via $\pi$ of the $i$-th element of the standard basis of $\Z_p^m$. Then $e_1+\ldots+e_m=1$, $e_i^2=e_i$ for all $1\leq i\leq m$, and $e_ie_j= 0$ if $i\neq j$. Notice that $R$ is a $\Z_p$-vector space via $\alpha r=\pi^{-1}(\alpha\pi(r))$ for $\alpha\in\Z_p, r\in R$. This corresponds to identifying $\Z_p$ with $\{\pi^{-1}(\alpha,\ldots,\alpha) : \ \alpha\in\Z_p\} \subseteq R$.
Since $R=(e_1)\oplus\ldots\oplus (e_m)$, then every $r\in R$ can be written uniquely as $r=r_1e_1+\ldots+r_me_m$ with $r_i\in\Z_p$, where we regard $\Z_p$ as a subset of $R$ via the identification above. 
Therefore $\lambda(R)=m$ and a composition series for $R$ is given by 
$0\subsetneq (e_1)\subsetneq (e_1,e_2)\subsetneq\ldots\subsetneq (e_1,\ldots,e_m)=R.$
\end{proof}

Using the notation of Lemma~\ref{eis}, we can count the number of submodules of fixed length of any given $R$-module $M$. For the sake of concreteness we limit our attention to submodules of $R^n$, but the same proof applies to any finitely generated $R$-module $M$.

\begin{theorem} \label{evaluateZp}
Let $R\cong\Z_p^m$, let $M\in\mM(R^n)$. 
The number of $R$-submodules of $M$ of length $\ell$ is 
$$\begin{bmatrix} M \\ \ell \end{bmatrix}=\sum_{{\tiny \begin{array}{l} (\ell_1,\ldots,\ell_m)\in\N^m, \\
\ell_1+\ldots+\ell_m=\ell\end{array}}}\prod_{i=1}^m \qbin{\dim(e_iM)}{\ell_i}{p}.$$
In particular, $$\begin{bmatrix} M \\ 1 \end{bmatrix}=\sum_{i=1}^m \frac{p^{\dim(e_iM)}-1}{p-1}.$$
\end{theorem}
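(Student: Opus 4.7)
The plan is to exploit the idempotent decomposition from Lemma~\ref{eis} to split every submodule of $M$ into a tuple of $\Z_p$-vector subspaces, and then reduce the count to a standard enumeration of subspaces in $\F_p$-vector spaces. First I would use the idempotents $e_1,\ldots,e_m$ to decompose $M$ as an internal direct sum $M=e_1M\oplus\ldots\oplus e_mM$: the identity $1=e_1+\ldots+e_m$ gives $v=e_1v+\ldots+e_mv$ for every $v\in M$, and the orthogonality relations $e_ie_j=\delta_{ij}e_i$ show that this decomposition is unique. The key observation is that $e_j$ annihilates $e_iM$ whenever $j\neq i$, so the $R$-action on $e_iM$ factors through the subring $(e_i)\cong\Z_p$ (Lemma~\ref{eis}). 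Consequently the $R$-submodules of $e_iM$ coincide with its $\Z_p$-subspaces, and $\lambda_R(e_iM)=\dim_{\Z_p}(e_iM)$ by reading a composition series as a flag.

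Next I would establish a bijection between $R$-submodules $N\in\mM(M)$ and tuples $(V_1,\ldots,V_m)$ with $V_i\subseteq e_iM$ a $\Z_p$-subspace. In one direction, for $N\in\mM(M)$ set $V_i=e_iN$, which is an $R$-submodule of $e_iM$, hence a $\Z_p$-subspace; moreover $N=e_1N\oplus\ldots\oplus e_mN$ by applying the decomposition in $M$ elementwise. In the other direction, given any $\Z_p$-subspaces $V_i\subseteq e_iM$, the set $V_1\oplus\ldots\oplus V_m$ is automatically an $R$-submodule, since $R=(e_1)\oplus\ldots\oplus(e_m)$ and each $V_i$ is $(e_i)$-stable while being annihilated by $(e_j)$ for $j\neq i$. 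These two constructions are easily checked to be mutually inverse. Applying Lemma~\ref{prellen}(4) to the direct sum then yields $\lambda_R(N)=\sum_{i=1}^m\dim_{\Z_p}(V_i)$.

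Finally, to count submodules of length $\ell$, I would stratify the tuples by the dimension vector $(\ell_1,\ldots,\ell_m)=(\dim V_1,\ldots,\dim V_m)$, subject to $\ell_1+\ldots+\ell_m=\ell$. For each such vector the number of choices for $V_i$ is $\qbin{\dim(e_iM)}{\ell_i}{p}$, and choices for different indices are independent, giving the claimed sum-of-products formula. For the special case $\ell=1$, exactly one $\ell_i$ equals $1$ and the rest vanish, yielding $\sum_i\qbin{\dim(e_iM)}{1}{p}=\sum_i(p^{\dim(e_iM)}-1)/(p-1)$. The only step that needs real care is checking that an arbitrary $\Z_p$-subspace of $e_iM$ is automatically $R$-stable; this is immediate from $e_ie_j=0$ for $j\neq i$ combined with the identification of the $(e_i)$-action on $e_iM$ with the $\Z_p$-scalar action. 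Beyond that, the proof is essentially a bookkeeping exercise built on Lemma~\ref{eis} and the standard enumeration of subspaces over $\F_p$.
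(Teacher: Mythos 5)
Your proposal is correct and follows essentially the same route as the paper: decompose $M=e_1M\oplus\ldots\oplus e_mM$ via the idempotents of Lemma~\ref{eis}, identify $R$-submodules of each $e_iM$ with its $\Z_p$-subspaces, and count tuples of subspaces stratified by dimension vector. You spell out the bijection between $\mM(M)$ and tuples $(V_1,\ldots,V_m)$ slightly more explicitly than the paper does, but the underlying argument is the same.
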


\begin{proof}
By Lemma~\ref{eis}, for all $M\in\mM(R^n)$ one has 
$$M=e_1M\oplus\ldots\oplus e_mM.$$
Therefore $\lambda(M)=\sum_{i=1}^m\lambda(e_iM).$ Moreover, for all $r=r_1e_1+\ldots+r_me_m\in R$ and $v\in M$ we have $re_iv=r_ie_iv$ for all $i$. Therefore the $R$-submodules of $e_iM$ coincide with its $\Z_p$-subspaces, hence $\lambda(e_iM)=\dim_{\Z_p}(e_iM)$. Hence we have shown that for every $R$-module $M\subseteq R^n$ $$\lambda(M)= \sum_{i=1}^m \dim_{\Z_p}(e_iM)=\dim_{\Z_p}(M).$$
Since for every collection of submodules $N_i\subseteq e_iM$ the module $N=N_1\oplus\ldots\oplus N_m\in\mM(M)$ and the $R$-submodules of $e_iM$ coincide with its $\Z_p$-subspaces, then the number of $R$-submodules of $M$ of length $\ell$ is
$$\begin{bmatrix} M \\ \ell \end{bmatrix}=\sum_{{\tiny \begin{array}{l}
(\ell_1,\ldots,\ell_m)\in\N^m, \\
\ell_1+\ldots+\ell_m=\ell\end{array}}}\prod_{i=1}^m \qbin{\dim(e_iM)}{\ell_i}{p}.$$
\end{proof}

Theorem~\ref{evaluateZp} allows us to evaluate the bounds of Theorem~\ref{sss} and Theorem~\ref{bound2} as follows.

\begin{corollary}\label{fourbounds}
Let $R\cong\Z_p^m$. Let $\mC\subseteq \mM(\Omega)$ be a submodule code with $\lambda(M)=k$ for all $M \in \mC$ and $d(\mC)=2\delta$. Let 
$$b(\lambda,k,\delta)=\min\left\{\sum_{{\tiny \begin{array}{l}
(\ell_1,\ldots,\ell_m)\in\N^m, \\
\ell_1+\ldots+\ell_m=k-\delta+1\end{array}}}\prod_{i=1}^m \qbin{u_i}{\ell_i}{p} : \ (u_1,\ldots,u_m)\in\N^m, u_1+\ldots+u_m=\lambda-\delta+1\right\}.$$
Then 
\begin{equation} \label{bb1}
|\mC| \le b(\lambda(\Omega),k,\delta).
\end{equation}
Moreover, 
\begin{equation}\label{bb2}
|\mC| \le \frac{\mathlarger{\sum}_{{\tiny \begin{array}{l}
(\ell_1,\ldots,\ell_m)\in\N^m, \\
\ell_1+\ldots+\ell_m=k-\delta+1\end{array}}}\mathlarger{\prod_{i=1}^m} 
\qbin{\dim(e_i\Omega)}{\ell_i}{p}}{b(k+\delta-1,k,\delta)}.
\end{equation}
If $\Omega=R^n$ and $\delta=k$, then
\begin{equation}\label{bb3}
|\mC| \le \frac{(p^n-1)/(p-1)}{\lceil (p^{k/m}-1)/(p-1)\rceil}.
\end{equation}
If in addition $m=2$ and $k$ is odd, then
\begin{equation}\label{bb4}
|\mC| \le \lfloor 2(p^{n}-1)/(p^h+p^{h-1}-2) \rfloor,
\end{equation}
where $h= \lceil k/2\rceil$.
\end{corollary}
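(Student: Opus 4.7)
The plan is to derive each of the four bounds by combining the general inequalities of Theorems~\ref{sss} and~\ref{bound2} with the explicit counting formula of Theorem~\ref{evaluateZp}.

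To prove~(\ref{bb1}) I would apply Theorem~\ref{sss}, which gives $|\mC|\le\qbin{\lambda(\Omega)-\delta+1}{k-\delta+1}{R}$, i.e.\ the minimum of $\qbin{M}{k-\delta+1}{R}$ over submodules $M\subseteq\Omega$ of length $\lambda(\Omega)-\delta+1$. For each such $M$, Lemma~\ref{eis} gives $M=\bigoplus_{i=1}^m e_iM$, and setting $u_i=\dim_{\Z_p}(e_iM)$ produces a tuple with $\sum u_i=\lambda(\Omega)-\delta+1$. Theorem~\ref{evaluateZp} then rewrites $\qbin{M}{k-\delta+1}{R}$ as $\sum_{\sum\ell_i=k-\delta+1}\prod_i\qbin{u_i}{\ell_i}{p}$, exactly the expression inside the minimum defining $b$. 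Minimizing over shape tuples yields~(\ref{bb1}).

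Bound~(\ref{bb2}) is analogous: apply Theorem~\ref{bound2}, evaluate the numerator $\qbin{\Omega}{k-\delta+1}{R}$ directly via Theorem~\ref{evaluateZp} with $u_i=\dim(e_i\Omega)$, and rewrite the denominator $\qbin{k}{k-\delta+1}{R}$ as a minimum over length-$k$ submodules, which by the same reasoning equals $b(k+\delta-1,k,\delta)$ (the first argument of $b$ is shifted so that the constraint $\sum u_i=\lambda-\delta+1$ becomes $\sum u_i=k$).

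For~(\ref{bb3}), I would specialize $\Omega=R^n$ and $\delta=k$: Theorem~\ref{evaluateZp} gives $\qbin{\Omega}{1}{R}=m(p^n-1)/(p-1)$, and for any length-$k$ submodule $M$ the pigeonhole principle forces some $u_i\ge\lceil k/m\rceil$, so that $e_iM$ contributes at least $(p^{\lceil k/m\rceil}-1)/(p-1)\ge\lceil(p^{k/m}-1)/(p-1)\rceil$ length-$1$ submodules; feeding this lower bound on $\qbin{k}{1}{R}$ into Theorem~\ref{bound2} yields~(\ref{bb3}). For~(\ref{bb4}) I further take $m=2$ with $k$ odd: convexity of $u\mapsto(p^u-1)/(p-1)$ forces the minimum of $(p^{u_1}-1)/(p-1)+(p^{u_2}-1)/(p-1)$ subject to $u_1+u_2=k$ odd to be attained at $(u_1,u_2)=(h-1,h)$ with $h=\lceil k/2\rceil$, giving $\qbin{k}{1}{R}=(p^h+p^{h-1}-2)/(p-1)$; substituting into Theorem~\ref{bound2} and taking the integer floor yields~(\ref{bb4}). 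The main subtlety I anticipate is that the minimum over actual submodules in $\qbin{\lambda}{s}{R}$ and the tuple minimum defining $b$ agree only when every admissible tuple $(u_1,\ldots,u_m)$ with $\sum u_i=\lambda$ is realised by some submodule of $\Omega$; this is automatic for $\Omega=R^n$ (any tuple with $u_i\le n$ arises from a submodule), and a short monotonicity/convexity argument rules out the ``non-realisable'' tuples as candidates for the minimum in the general case.
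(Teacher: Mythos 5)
The structure of your argument for (\ref{bb1}), (\ref{bb2}) and (\ref{bb4}) matches the paper's proof: in each case you combine Theorem~\ref{sss} or Theorem~\ref{bound2} with the counting formula of Theorem~\ref{evaluateZp}, and for (\ref{bb4}) you locate the minimising tuple $(h-1,h)$ exactly as the paper does. Your closing remark about realisable versus arbitrary shape tuples is a legitimate concern (the paper glosses over it too), and the brief remedy you sketch for $\Omega=R^n$ is the right one.

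The problem is (\ref{bb3}), where your argument loses a factor of $m$. You lower-bound $\qbin{k}{1}{R}=\min\sum_{i=1}^m(p^{u_i}-1)/(p-1)$ by a \emph{single} summand: pigeonhole gives some $u_{i^*}\ge\lceil k/m\rceil$, and you keep only the term $(p^{u_{i^*}}-1)/(p-1)$, discarding the other $m-1$ non-negative terms. This yields $\qbin{k}{1}{R}\ge\lceil(p^{k/m}-1)/(p-1)\rceil$, and feeding that into Theorem~\ref{bound2} (whose numerator with $\delta=k$, $\Omega=R^n$ is $m(p^n-1)/(p-1)$, not $(p^n-1)/(p-1)$) gives
$$|\mC|\le\frac{m(p^n-1)/(p-1)}{\lceil(p^{k/m}-1)/(p-1)\rceil},$$
which is $m$ times larger than the stated bound (\ref{bb3}). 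So your pigeonhole step does not establish (\ref{bb3}). The paper instead keeps the \emph{whole} sum and minimises $\sum_{i=1}^m p^{u_i}$ over the real simplex $\{\sum u_i=k,\,u_i\ge 0\}$ using Lagrange multipliers (equivalently, convexity of $u\mapsto p^u$): the minimum is $mp^{k/m}$, giving $b(2k-1,k,k)\ge m(p^{k/m}-1)/(p-1)$. It is this factor of $m$ in the denominator that cancels the $m$ in the numerator of Theorem~\ref{bound2}. To repair your proof of (\ref{bb3}), replace the pigeonhole step by the convexity/Lagrange argument applied to the full sum $\sum_i(p^{u_i}-1)/(p-1)$.
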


\begin{proof}
Let $M \in \mM(\Omega)$ be a submodule of length $\lambda(M)=\lambda(\Omega)-\delta+1$ and let $u_i= \dim(e_iM)$ for all $i$. 
By Theorem \ref{evaluateZp}, the number of submodules of $M$ of length $k-\delta+1$ equals 
$$\begin{bmatrix} M \\ k-\delta+1 \end{bmatrix}=\sum_{{\tiny \begin{array}{l} (\ell_1,\ldots,\ell_m)\in\N^m, \\
\ell_1+\ldots+\ell_m=k-\delta+1\end{array}}}\prod_{i=1}^m \qbin{u_i}{\ell_i}{p}.$$
Hence Theorem \ref{sss} implies bound (\ref{bb1}). Similarly, bound (\ref{bb2}) follows from Theorem \ref{bound2}.

Now assume $\delta=k$ and $\Omega=R^n$. We have 
$$b(2k-1,k,k)= \min\left\{ \frac{1}{p-1} \left(
\sum_{i=1}^m p^{u_i} -m \right) : 
(u_1,...,u_m) \in \N^m,  \ \sum_{i=1}^m u_i =k \right\}.$$
Let $f: \R^m \to \R$ be the function defined by 
$f(x_1,...,x_m)= \sum_{i=1}^m p^{x_i}$ for all
$(x_1,...,x_m) \in \R^m$. Using e.g. the method of 
Lagrange multipliers from Calculus, one can show that the minimum of $f$ in the region
of $\R^m$ defined by the constraints
$$\sum_{i=1}^m x_i =k, \ \ \ \ \ \ x_i \ge 0 \mbox{ for all $i \in \{1,...,m\}$}
$$
is attained for $x_1=x_2= \ldots = x_m =k/m$, and that its value is
$mp^{k/m}$.
This shows that 
$$b(2k-1,k,k)
\ge  \lceil (mp^{k/m}-m)/(p-1) \rceil.$$
Bound (\ref{bb3}) now follows from bound (\ref{bb2}) and the fact that 
$\dim(e_i\Omega)=n$  for all $i \in \{1,...,m\}$.
If in addition $m=2$ and $k$ is odd, then without loss of generality we may assume $u_1 \ge u_2+1$.
Using elementary methods from Calculus, one shows that 
$$b(2k-1,k,k) \ge \frac{p^h + p^{h-1}-2}{p-1},$$
where $h= \lceil k/2\rceil$. This concludes the proof.
\end{proof}

We conclude this section by evaluating the bound of Theorem \ref{sss} for finite chain rings. 
We concentrate on codes $\mC \subseteq \mM(\Omega)$ with $\lambda(M)=k$ for all $M\in\mC$ and $d(\mC)=2k$.

\begin{theorem} \label{fcrlen1}
Let $R$ be a finite chain ring of length $e$, let $\pi \in R$ be a generator of the maximal ideal of $R$, let $q$ be the cardinality of the residue field of $R$. 
Let $$\Omega=R\times(\pi^{a_2})\times\ldots\times(\pi^{a_n})$$ for some $0\leq a_2\leq\ldots\leq a_n\leq e-1.$
Let $\mC \subseteq \mM(\Omega)$ be a submodule code with $d(\mC)=2k$ and whose codewords have length $k$. Then $$|\mC| \le \frac{q^{m}-1}{q-1}$$ where $m=\min\{1\leq i\leq n\mid (n-i)e-a_{i+1}-\ldots-a_n\leq k-1\}.$
In particular, if $\Omega=R^n$, then $$|\mC| \le \frac{q^{n-h+1}-1}{q-1}$$
where $k=he-r$ with $0\leq r\leq e-1$.
\end{theorem}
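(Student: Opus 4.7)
My plan is to apply Theorem~\ref{sss} with $\delta=k$ and reduce the problem to counting length-one submodules. Setting $\delta=k$ gives $|\mC|\le\begin{bmatrix}\lambda(\Omega)-k+1\\ 1\end{bmatrix}_R$, so it suffices to show that every $M\in\mM(\Omega)$ with $\lambda(M)=\lambda(\Omega)-k+1$ has at least $(q^m-1)/(q-1)$ submodules of length one. I would first observe that over a finite chain ring every length-one submodule of $M$ is simple and therefore isomorphic to $R/(\pi)\cong\F_q$, so it must lie inside the socle $M[\pi]=\{v\in M:\pi v=0\}$. Conversely, every nonzero $v\in M[\pi]$ generates such a submodule, with two elements generating the same submodule precisely when they are $\F_q$-proportional. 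Using the decomposition of Example~\ref{exfcr}, each of the $\mu_e(M)$ nonzero cyclic factors of $M$ contributes one dimension to $M[\pi]$, giving
\[\begin{bmatrix}M\\ 1\end{bmatrix}_R=\frac{q^{\mu_e(M)}-1}{q-1},\]
so the task becomes the minimisation of $\mu_e(M)$ over all $M\in\mM(\Omega)$ of length $\lambda(\Omega)-k+1$.

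To handle this minimisation I would parametrise $M$ via its reduced row-echelon form (Theorem~\ref{uniqueness}), say with $t$ nonzero rows and pivots $A_{i,j_i}=\pi^{c_i}$ at positions $j_1<\ldots<j_t$. The inclusion $M\subseteq\Omega$ forces $c_i\ge a_{j_i}$ since the $j_i$-th coordinate of $\Omega$ is $(\pi^{a_{j_i}})$. Theorem~\ref{princ} yields $\lambda(M)=\sum_{i=1}^t(e-c_i)$, while the structure theorem of Example~\ref{exfcr} identifies $t$ with $\mu_e(M)$. Because the sequence $a_1\le\ldots\le a_n$ is non-decreasing, the maximum of $\sum_{i=1}^t(e-c_i)$ over all admissible configurations of size $t$ is $\sum_{i=1}^t(e-a_i)$, attained at $j_i=i$ and $c_i=a_i$. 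The length constraint then forces $\sum_{i=1}^t(e-a_i)\ge\lambda(\Omega)-k+1$, equivalently $(n-t)e-a_{t+1}-\ldots-a_n\le k-1$, and the smallest such $t$ is exactly $m$. Sharpness of this lower bound on $\mu_e(M)$ would follow by exhibiting the explicit submodule
\[M=(\pi^{a_1})\times\ldots\times(\pi^{a_{m-1}})\times(\pi^{c_m})\times 0\times\ldots\times 0,\qquad c_m=(m-n)e+\sum_{i=m}^n a_i+k-1;\]
the minimality of $m$ guarantees $a_m\le c_m\le e-1$, so $M$ indeed has $m$ nonzero factors and the prescribed length.

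For the particular case $\Omega=R^n$ all $a_i$ vanish, so the defining inequality reduces to $(n-i)e\le k-1$; writing $k=he-r$ with $0\le r\le e-1$ this forces $n-i<h$, giving $m=n-h+1$ and the stated bound $(q^{n-h+1}-1)/(q-1)$. The delicate point I anticipate is the identification $t=\mu_e(M)$ in the row-echelon step: it is what allows the optimisation to ignore all constraints imposed by the embedding $M\subseteq\Omega$ beyond the per-pivot inequalities $c_i\ge a_{j_i}$, and it rests on the uniqueness of the cyclic decomposition over a finite chain ring (Example~\ref{exfcr}). Once that identification is in hand, the minimisation is a transparent combinatorial argument on decreasing sequences of pivot exponents.
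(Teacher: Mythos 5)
Your overall strategy is the same as the paper's: specialize Theorem~\ref{sss} to $\delta=k$, count length-one submodules of a module of length $\lambda(\Omega)-k+1$, and observe that these correspond to $\F_q$-lines in the socle, so their number is $(q^{\dim_{\F_q}M[\pi]}-1)/(q-1)=(q^{\mu_e(M)}-1)/(q-1)$. (The paper phrases the count via vectors in a normalized form rather than via the shape, but the two are the same thing.) The explicit module $M=(\pi^{a_1})\times\ldots\times(\pi^{a_{m-1}})\times(\pi^{c_m})\times 0\times\ldots\times 0$ is exactly the witness one needs, and your verification that minimality of $m$ forces $a_m\le c_m\le e-1$ is correct; this already gives $\min_M\begin{bmatrix}M\\1\end{bmatrix}\le(q^m-1)/(q-1)$ and hence the bound.

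There are two genuine flaws, though. First, the reduction is stated backwards: since $\begin{bmatrix}\lambda(\Omega)-k+1\\1\end{bmatrix}$ is defined as a \emph{minimum} over modules of that length, proving $|\mC|\le(q^m-1)/(q-1)$ requires exhibiting \emph{some} $M$ with \emph{at most} $(q^m-1)/(q-1)$ length-one submodules, not showing that \emph{every} such $M$ has \emph{at least} that many; the latter gives a lower bound on the quantity that bounds $|\mC|$ from above, which proves nothing. Your explicit construction delivers the statement that is actually needed, but the stated plan does not. Second, the key identification ``the number $t$ of rows of the reduced row-echelon form equals $\mu_e(M)$'' is false. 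Take $R=\Z_4$ and $M=\langle(2,1)\rangle\subseteq\Z_4^2$: this module is cyclic, so $\mu_e(M)=1$, but $M^{(2)}=\langle(0,2)\rangle\neq 0$ forces a second row in the echelon form, which is $\left[\begin{smallmatrix}2&1\\0&2\end{smallmatrix}\right]$ with $t=2$. The algorithm from Theorem~\ref{put} explicitly appends the annihilator row $\ann(g)\cdot A_1$, so in general $t>\mu_e(M)$. Consequently your combinatorial optimisation controls the number of echelon rows, not $\mu_e(M)$, and does not prove $\min_M\mu_e(M)\ge m$. This direction is not needed for the theorem, but it is presented as the crux of your argument, so the error is not harmless to the write-up; a correct lower bound on $\mu_e(M)$ would instead use that the shape of a submodule of $\Omega$ is componentwise dominated by the shape of $\Omega$, or, as the paper does, bound the socle dimension by the number of coordinates in which $M$ can be nonzero.
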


\begin{proof}
We claim that the submodules of length one of an $R$-module $M\subseteq R^n$ are in bijection with the vectors $v \in M$ of the form 
\begin{equation}\label{vectrref}
v= (\underbrace{0, \ldots, 0}_{i-1}, \pi^{e-1}, v_{i+1},...,v_{n})
\end{equation} 
where $1 \le i \le n$ and $v_{i+1},...,v_{n} \in (\pi^{e-1})$. In fact, every module of length one is minimally generated by one vector. If we represent a module generated by one vector by a matrix in reduced row-echelon form, then such a matrix has a unique row by Theorem~\ref{princ}, and by Proposition~\ref{criterio} the row is of the form $v=(0, \ldots, 0, \pi^s, v_{i+1},...,v_{n})$ with $v_{i+1},...,v_{n} \in (\pi^{s})$. Finally, $\lambda(\langle v\rangle)=\lambda(\pi^s)=e-s$ by Theorem~\ref{princ}. Hence the modules of length one are exactly those generated by vectors of the form (\ref{vectrref}). By uniqueness of the reduced row-echelon form, two such modules are distinct if and only if they are generated by distinct vectors in reduced row-echelon form. This proves the claim. Notice that there are exactly $q^{n-i}$ vectors of the form (\ref{vectrref}), for a fixed $i\in\{1,\ldots,n\}$.

Let $M\subseteq\Omega$ be a submodule of length $\lambda(\Omega)-k+1$. Let $m$ be the least integer such that $$M\subseteq R^{m}\times\underbrace{0\times\ldots\times 0}_{n-m}.$$ Notice that $m$ depends on $M$, and not just on its length. Then $M$ contains exactly $q^{m-1}+q^{m-2}+\ldots+q+1=\frac{q^m-1}{q-1}$ vectors of the form (\ref{vectrref}), since each such vector has $v_{i+1},\ldots,v_m\in (\pi^{e-1})$ and $v_{m+1}=\ldots=v_n=0$, for some $1\leq i\leq m$. Since $\Omega=R\times(\pi^{a_2})\times\ldots\times(\pi^{a_n})$, then $\lambda(\Omega)=ne-a_2-\ldots-a_n$. If $$M\subseteq (R^i\times 0\times\ldots\times 0)\cap \Omega=R\times\pi^{a_2}R\times\ldots\times\pi^{a_i}R\times 0\times\ldots\times 0$$ for some $i$, then $\lambda(M)=ne-a_2-\ldots-a_n-k+1\leq \lambda(R\times\pi^{a_2}R\times\ldots\times\pi^{a_i}R)=ie-a_2-\ldots-a_i$. Hence $$m\geq \min\{1\leq i\leq n : \ (n-i)e-a_{i+1}-\ldots-a_n\leq k-1\}$$ and equality holds for any $M\subseteq R\times(\pi^{a_2})\times\ldots\times(\pi^{a_m})\times 0\times\ldots\times 0$. By Theorem~\ref{sss} $$|\mC| \le \min \left\{\begin{bmatrix} M \\ k-\delta+1 \end{bmatrix}  \ : \ M \subseteq\Omega, \ \lambda(M)=\lambda(\Omega) -\delta +1\right\}$$ 
$$=\min \left\{\frac{q^i-1}{q-1}\ : \ M \subseteq\Omega\cap R^i\times 0\times\ldots\times 0, \ \lambda(M)=\lambda(\Omega) -\delta +1\right\}=\frac{q^m-1}{q-1}$$ where $m=\min\{1\leq i\leq n\mid (n-i)e-a_{i+1}-\ldots-a_n\leq k-1\}.$ 

If $\Omega=R^n$, then $a_2=\ldots=a_n=0$. Write $k=he-r$ with $0\leq r\leq e-1$. Then $(h-1)e\leq k-1\leq he-1$, hence $m=\min\{1\leq i\leq n\mid (n-i)e\leq k-1\}=n-h+1.$
\end{proof}

\section{Constructions} \label{seccostr}

In this section we propose some constructions for submodule codes for an ambient space of the form $\Omega=R^n$. 
Throughout the section we say that a code is asymptotically optimal if its cardinality asymptotically meets one of the bounds of the previous section.
We say that a code is optimal if its cardinality exactly meets one of the bounds.

We first concentrate on finite chain rings, and show how to construct optimal codes of maximum correction capability. Our codes can be regarded as the submodule code analogue of the partial spread codes from~\cite{ps}. We then look at finite PIR's that contain a field $\F$, and show how subspace codes over $\F$ can be lifted to submodule codes over $R$ by tensoring them with $R$. This allows us to construct optimal submodule codes over $\Z_p[i]$ of maximum correction capability. Finally we show how to obtain submodule codes over a ring of the form $R_1 \times\ldots\times R_m$ from submodule codes over $R_1,...,R_m$. We propose two constructions of the latter type, and discuss their decoding. For the first construction we take a cartesian product of codes on the $R_i$'s and show that this yields a code on $R$, whose parameters are determined and whose decoding can be reduced to decoding on each of the $R_i$'s. However, not every code over a $R$ is a product of codes over the $R_i$'s: Our second construction yields a code over $R$ which is not a product. We show that, in that case, decoding cannot be reduced to decoding on each of the $R_i$'s. This shows in particular that, although every finite PIR $R$ is isomorphic to a product of finite fields and finite chain rings, the study of codes over $R$ cannot be reduced in general to the study of codes over finite fields and finite chain rings.

\subsection{Partial spreads over finite chain rings}

We start with a construction that can be applied to any PIR. Its optimality relies on the existence of large sets of matrices, in which the length of the difference of any two of them is maximum. In Proposition \ref{lenmetric1} we will show that such large sets can be constructed over any finite chain ring.

\begin{theorem} \label{costruzione}
Let $R$ be a finite PIR. Let $n,k$ be positive integers. Write $k=h \cdot \lambda(R)-r$ with $0 \le r \le \lambda(R)-1$, and assume $n \ge 2h$.  
Write $n=\nu \cdot h+\rho$ with $0 \le \rho \le h-1$. Let $\mA \subseteq R^{h \times h}$ and $\mA' \subseteq R^{h \times (h+\rho)}$ be subsets such that 
$\lambda(\row(A-B))=h \cdot \lambda(R)$ for all $A,B \in \mA$ with $A \neq B$ and $A,B \in \mA'$ with $A \neq B$. 
For $i \in \{1,...,\nu-1\}$ let
$$\mS_i= \left\{ \begin{bmatrix}
              0_{h \times h} & \ldots & 0_{h \times h} & I_{h \times h} &
              A_{i+1} & \ldots & A_{\nu-1} & A_\nu
             \end{bmatrix} \ : \ A_{i+1},...,A_{\nu-1} \in \mA, \ A_\nu \in \mA'
\right\} \subseteq R^{h \times n},$$
where $I_{h \times h}$ is the identity matrix of size $h \times h$. Define 
$\mS_\nu = \left\{ \begin{bmatrix}
              0_{h \times (n-h)} & I_{h \times h}
             \end{bmatrix} \right\}$.
Let $\zeta \in R$ generate an ideal of length $\lambda(R)-r$. 
For all $i \in \{1,...,\nu\}$ let
$$\mS_{i,\zeta} = \{M_{\zeta} : M \in \mS_i\},$$
where $M_{\zeta}$ is the matrix obtained from $M$ by multiplying its last row by $\zeta$.
Then 
$$\mC = \bigcup_{i=1}^{\nu} \{\mbox{row}(M) : M \in \mS_{i,\zeta}\}$$
is a submodule code of length $\lambda(\mC)=k$, minumum distance $d(\mC)=2k$, and  cardinality
$$|\mC| = |\mA'| \cdot \frac{|\mA|^{\nu-1} -1}{|\mA| -1} + 1.$$
\end{theorem}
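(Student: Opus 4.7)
The plan is to check in sequence that (i) every codeword has length $k$, (ii) any two distinct codewords intersect trivially---so the minimum distance is exactly $2k$---and (iii) the cardinality formula holds. For (i), each matrix $M \in \mS_{i,\zeta}$ is in row-echelon form with pivots $1, 1, \ldots, 1, \zeta$ (the diagonal of the identity block in the first $h-1$ rows, and $\zeta$ in the last row, since that row was multiplied by $\zeta$). Theorem~\ref{princ} then gives $\lambda(\row(M)) = (h-1)\lambda(R) + \lambda(\zeta) = (h-1)\lambda(R) + (\lambda(R)-r) = k$.

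For (ii), fix $M_1 \in \mS_{i,\zeta}$, $M_2 \in \mS_{j,\zeta}$ with $i \le j$, and take a common element $v = cM_1 = dM_2$ for some $c, d \in R^h$. If $i < j$, block $i$ of $M_2$ is zero (since $ih \le (\nu-1)h \le n-h$ even when $j = \nu$), so the block-$i$ coordinates of $v$, namely $(c_1, \ldots, c_{h-1}, c_h\zeta)$, must vanish. This forces $c_k = 0$ for $k < h$ and $c_h\zeta = 0$, and then $v$ equals $c_h$ times the last row of $M_1$, i.e.\ $c_h\zeta$ times the unmultiplied row $M'_1[h]$, so $v = 0$.

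If $i = j < \nu$, both matrices have block $i$ equal to $\mathrm{diag}(1,\ldots,1,\zeta)$, giving $c_k = d_k$ for $k < h$ and $c_h\zeta = d_h\zeta$. Writing $\tilde c = (c_1,\ldots,c_{h-1},c_h\zeta)$ and $B_\ell = A^1_\ell - A^2_\ell$, expanding block $\ell > i$ of the equation $cM_1 = dM_2$ and substituting $c_k = d_k$ ($k<h$) and $c_h\zeta = d_h\zeta$ collapses to $\tilde c \, B_\ell = 0$. Since $M_1 \ne M_2$, some $B_\ell$ is nonzero, and by hypothesis $\lambda(\row(B_\ell)) = h\cdot\lambda(R)$. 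The map $R^h \to R^h$ (or $R^{h+\rho}$ if $\ell = \nu$) sending $x \mapsto x B_\ell$ then has image of length $h\cdot\lambda(R) = \lambda(R^h)$, so by Lemma~\ref{prellen}(2) its kernel has length zero and is trivial. Hence $\tilde c = 0$, and $v = 0$ as above. The main subtlety is precisely this collapse: the linear relation forced by block $\ell$ is on the modified vector $\tilde c$ rather than on $c$ itself, which requires carefully separating the unaltered first $h-1$ rows of each $(A_\ell)_\zeta$ from the $\zeta$-scaled last row when expanding $cM_1 = dM_2$.

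For (iii), the definitions give $|\mS_i| = |\mA|^{\nu-1-i}|\mA'|$ for $1 \le i \le \nu-1$ and $|\mS_\nu| = 1$. Step (ii) forces the map $M \mapsto \row(M)$ to be injective on the disjoint union $\bigcup_{i=1}^\nu \mS_{i,\zeta}$---two distinct matrices would yield nonzero row-modules with trivial intersection---so summing the geometric series $\sum_{i=1}^{\nu-1} |\mA|^{\nu-1-i}|\mA'|$ and adding $|\mS_\nu| = 1$ yields $|\mC| = |\mA'|\cdot \frac{|\mA|^{\nu-1}-1}{|\mA|-1} + 1$.
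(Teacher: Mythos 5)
Your argument is correct, but it takes a more laborious route than the paper's. The paper first considers the unscaled code $\mC' = \bigcup_i \{\row(M) : M \in \mS_i\}$, invokes the argument of \cite[Theorem~13 and Proposition~17]{ps} (with length replacing rank) to conclude that distinct codewords of $\mC'$ intersect trivially, and then observes that each codeword of $\mC$ is a submodule of the corresponding codeword of $\mC'$, so $\mC$ inherits trivial pairwise intersections at once, with $|\mC|=|\mC'|$ following because the codewords are nonzero. You instead verify trivial intersection directly on the $\zeta$-scaled matrices, which forces you to carry the auxiliary vector $\tilde c = (c_1,\ldots,c_{h-1},c_h\zeta)$ through the block analysis and to argue that the block-$\ell$ relation collapses onto $\tilde c B_\ell = 0$ rather than $c B_\ell = 0$. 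The observation you flag as "the main subtlety" is precisely the bookkeeping that the paper's two-step reduction is designed to sidestep. Both proofs ultimately rest on the same mechanism --- the hypothesis $\lambda(\row(A-B))=h\lambda(R)$ gives a zero kernel for right-multiplication by $A-B$, via Lemma~\ref{prellen} --- so the content is equivalent; your version is more self-contained (no appeal to an external argument), while the paper's is cleaner and makes transparent why multiplying the last row by $\zeta$ is harmless for the distance.
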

 
\begin{proof}
Let $M_{\zeta}\in\mS_{i,\zeta}$. Then $M_{\zeta}$ is in row-echelon form and $\lambda(\row(M_\zeta))= (h-1) \cdot \lambda(R) + \lambda(\zeta) = k$ by Theorem \ref{princ}. 
Define the code $$\mC' = \bigcup_{i=1}^{\nu} \{\mbox{row}(M) : M \in \mS_i\}.$$
Again $M$ is in row-echelon form and $\lambda(\row(M))= h \cdot \lambda(R)$. Moreover, arguing as in \cite[Theorem 13 and Proposition 17]{ps} and replacing the rank with the length, one sees that $d(\mC')=2h \cdot \lambda(R)$, i.e., the submodules that constitute $\mC'$ have trivial pairwise intersections.
Moreover, $$|\mC'| = |\mA'| \cdot \frac{|\mA|^{\nu-1} -1}{|\mA| -1} + 1.$$
Now observe that $\mC$ is obtained from $\mC'$ by taking an appropriate submodule of each codeword. Therefore the codewords of $\mC$ have trivial pairwise intersection. Hence $d(\mC)=2k$ and $|\mC|=|\mC'|$.
\end{proof}

We now show that over a finite chain ring $R$ one can construct  large sets $\mA$ and $\mA'$ to be used within the construction from Theorem~\ref{costruzione}.
  
\begin{lemma} \label{atmost}
Let $R$ be a ring, let $s,t >0$. Then for any $v_1,...,v_t \in R^s$ we have $\lambda(\langle v_1,...,v_t \rangle) \le t \cdot \lambda(R)$. 
\end{lemma}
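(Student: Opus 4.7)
The plan is to exhibit $M = \langle v_1,\ldots,v_t\rangle$ as a quotient of the free module $R^t$, and then invoke the properties of length recorded in Lemma~\ref{prellen}.

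Concretely, I would define the $R$-module homomorphism
$$\phi : R^t \longrightarrow R^s, \qquad (r_1,\ldots,r_t) \longmapsto r_1 v_1 + \cdots + r_t v_t.$$
By construction the image of $\phi$ is exactly $M = \langle v_1,\ldots,v_t\rangle$, so the first isomorphism theorem gives $M \cong R^t/\ker(\phi)$. Applying part~2 of Lemma~\ref{prellen} to the submodule $\ker(\phi) \subseteq R^t$ yields
$$\lambda(M) \;=\; \lambda(R^t) - \lambda(\ker(\phi)) \;\le\; \lambda(R^t).$$

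It then remains to compute $\lambda(R^t)$. A straightforward induction on $t$, using part~4 of Lemma~\ref{prellen} (additivity of length on direct products), gives $\lambda(R^t) = t \cdot \lambda(R)$. Combining the two inequalities delivers the desired bound $\lambda(\langle v_1,\ldots,v_t\rangle) \le t \cdot \lambda(R)$.

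There is really no substantive obstacle here: once one observes that a finitely generated module is a quotient of a free module of the corresponding rank, the result is an immediate consequence of the monotonicity and additivity of length already established in Lemma~\ref{prellen}. The only mild subtlety is that we do not need $R^t$ or $M$ to be finitely generated in any stronger sense, nor do we need $R$ to be a PIR for this particular bound — only that $R$ have finite length as a module over itself, so that all quantities involved are finite.
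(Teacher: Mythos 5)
Your argument is correct and is essentially identical to the paper's: the paper also realizes $\langle v_1,\ldots,v_t\rangle$ as the image of the homomorphism $R^t\to R^s$ (described there as right multiplication by the matrix whose rows are the $v_i$), then applies the first isomorphism theorem and the length properties of Lemma~\ref{prellen} to get $\lambda(M)=\lambda(R^t)-\lambda(\ker)\le t\cdot\lambda(R)$. No substantive difference.
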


\begin{proof}
Let $A\in R^{t\times s}$ be the matrix with rows $v_1,...,v_t$.
Right multiplication by $A$ induces an $R$-homomorphism $f_A: R^t \to R^s$, whose image is $\langle v_1,\ldots,v_t \rangle$. Since $\im(f_A)\cong R^t/\ker(f_A)$, then $\lambda(\langle v_1,\ldots,v_t \rangle)=\lambda(R^t)- \lambda(\ker(f_A)) \le \lambda(R^t) = t \cdot \lambda(R)$.   
\end{proof}

\begin{proposition} \label{lenmetric1}
Let $R$ be a finite chain ring with residue field of order  $q$. Then for all $h>0$ and for all $0 \le \rho \le h-1$ there exists a set $\mA \subseteq R^{h \times (h+\rho)}$ with 
$|\mA|=q^{h+\rho}$ and $\lambda(\row(A-B))= h \cdot \lambda(R)$ for all $A,B \in \mA$ with $A \neq B$.
\end{proposition}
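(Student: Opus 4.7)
The plan is to reduce the problem to the residue field $\F_q$ via a criterion relating module length to matrix rank, invoke a classical Gabidulin-type construction there to produce $q^{h+\rho}$ matrices in $\F_q^{h\times(h+\rho)}$ whose pairwise differences all have rank $h$, and then lift back to $R$ via a set-theoretic section of the reduction $R \to \F_q$. Write $\pi$ for a generator of the maximal ideal of $R$ and $e=\lambda(R)$ for its nilpotency index.

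The first step, which I expect to be the main obstacle, is the following criterion: for $M \in R^{h \times (h+\rho)}$, one has $\lambda(\row(M)) = h \cdot \lambda(R)$ if and only if the reduction $\bar{M} \in \F_q^{h \times (h+\rho)}$ modulo $(\pi)$ has $\F_q$-rank equal to $h$. To see this, let $f_M : R^h \to R^{h+\rho}$ be the $R$-linear map $v \mapsto vM$; then $\row(M) = \im(f_M) \cong R^h/\ker(f_M)$, so by Lemma \ref{prellen} the equality $\lambda(\row(M)) = h \cdot \lambda(R)$ is equivalent to the injectivity of $f_M$. If $\bar{M}$ has rank $h$, it contains an invertible $h \times h$ submatrix, whose lift to $M$ has a determinant reducing to a unit of $\F_q$ and hence a unit of the local ring $R$, so the lifted submatrix is invertible and $f_M$ is injective. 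Conversely, if $\bar{M}$ has rank less than $h$, a nonzero $\bar{v} \in \F_q^h$ with $\bar{v}\bar{M} = 0$ lifts to some $v \in R^h$ with a unit entry satisfying $vM \in (\pi)^{h+\rho}$, so $\pi^{e-1}v M = 0$ while $\pi^{e-1}v \neq 0$, contradicting injectivity.

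For the second step I would fix $\F_q$-linearly independent elements $\alpha_1, \ldots, \alpha_h \in \F_{q^{h+\rho}}$ (possible since $h \leq h+\rho$) together with an $\F_q$-basis of $\F_{q^{h+\rho}}$, and for each $g \in \F_{q^{h+\rho}}$ define $\bar{M}_g \in \F_q^{h \times (h+\rho)}$ to be the matrix whose $i$-th row is the coordinate vector of $g\alpha_i$ in this basis. The assignment $g \mapsto \bar{M}_g$ is $\F_q$-linear and injective, so its image $\bar{\mA}$ has cardinality $q^{h+\rho}$; for any $g \neq 0$, multiplication by $g$ is an $\F_q$-linear bijection of $\F_{q^{h+\rho}}$, hence $g\alpha_1, \ldots, g\alpha_h$ remain $\F_q$-linearly independent and $\bar{M}_g$ has rank $h$. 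To conclude, I would fix any set-theoretic section $s : \F_q \to R$ of the reduction map with $s(0)=0$, extend $s$ entrywise to matrices, and set $\mA = s(\bar{\mA})$. Injectivity of $s$ gives $|\mA|=q^{h+\rho}$, and for distinct $A,B \in \mA$ the difference $A-B$ reduces modulo $(\pi)$ to a nonzero element $\bar{M}_g \in \bar{\mA}$, which has rank $h$; the criterion from the first step then yields $\lambda(\row(A-B)) = h \cdot \lambda(R)$, as required.
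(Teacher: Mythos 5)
Your proof is correct. The overall strategy is the same as the paper's (construct an MRD-type set over the residue field $\F_q$, then lift to $R$ via a set-theoretic section of the reduction map), but you differ in two genuine ways. First, where the paper cites Delsarte for a set of $q^{h+\rho}$ square matrices of size $(h+\rho)\times(h+\rho)$ with invertible pairwise differences and then obtains the rectangular case by deleting the first $\rho$ rows and invoking Lemma~\ref{atmost}, you construct the rectangular $\F_q$-code directly by embedding $\F_{q^{h+\rho}}$ into $\F_q^{h\times(h+\rho)}$ via multiplication on a set of $\F_q$-independent elements $\alpha_1,\dots,\alpha_h$; this is self-contained and avoids the deletion step entirely. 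Second, where the paper argues invertibility by applying the determinant to the lifted square matrix (and only implicitly uses that $R$ is local), you isolate a clean two-sided criterion: for $M\in R^{h\times s}$ with $h\le s$, $\lambda(\row(M))=h\lambda(R)$ if and only if $\bar M$ has rank $h$ over $\F_q$, proved via injectivity of $v\mapsto vM$ and the $\pi^{e-1}$ trick for the converse. The converse direction of your criterion is not actually used in the construction, but the criterion itself is a reusable lemma that makes the reduction transparent; the paper's route is shorter at the cost of outsourcing the $\F_q$ construction and handling the rectangular case by truncation. Both are correct.

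Two minor notational nits: what you call a Gabidulin-type construction is really just the spread-set embedding of $\F_{q^{h+\rho}}$ (a special case, but calling it Gabidulin may suggest more machinery than is used), and the hypothesis $s(0)=0$ on the section is not actually needed anywhere in your argument since you only use that $s$ is a right inverse of reduction and is injective.
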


\begin{proof}
We first prove the statement for $\rho=0$. Let $\pi$ be a generator of the maximal ideal of $R$, and let $f: R \to R/(\pi)$ be the projection map. Let $\iota: R/(\pi) \to R$ be such that $f \circ \iota$ is the identity of $R/(\pi)$. Such a $\iota$ can be obtained by mapping each element of $R/(\pi)$ to one of its representatives in $R$. Notice that we do not require that $\iota$ is a ring homomorphism. 
We extend $f$ and $\iota$ entrywise to ${f}:R^{h \times h} \to (R/(\pi))^{h \times h}$ and ${\iota} : (R/(\pi))^{h \times h} \to R^{h \times h}$.
  
Since $(R/(\pi))^{h \times h} \cong \F_q^{h \times h}$, by \cite[Section 6]{del1} there exists $\mA' \subseteq (R/(\pi))^{h \times h}$ with $|\mA'|=q^h$ and $A'-B'$ invertible for any $A',B' \in \mA'$ with $A' \neq B'$. Then the set of matrices $\mA=\{{\iota}(A') : A' \in \mA'\} \subseteq R^{h \times h}$ has the expected properties. Indeed, let $A', B' \in \mA'$ with $A' \neq B'$. Since $f$ is a ring homomorphism, we have  
$$f(\det({\iota}(A')-{\iota}(B')))=\det({f}({\iota}(A'))-{f}({\iota}(B')))=\det(A'-B') \neq 0.$$
Therefore $\det({\iota}(A')-{\iota}(B')) \notin (\pi)$. 
As $(\pi)$ is the only maximal ideal of $R$, $\det({\iota}(A')-{\iota}(B'))$ is invertible, hence ${\iota}(A')-{\iota}(B')$ is invertible. 
This implies $\row({\iota}(A')-{\iota}(B'))\cong R^h$, which has length $h \cdot \lambda(R)$.
In addition $|\mA|=|\mA'|=q^h$.
   
Now assume $\rho>0$, and set $h'=h+\rho$. By the first part of the proof there exists a set of matrices $\mB \subseteq R^{h' \times h'}$ with $\lambda(\row(A-B))=h' \cdot \lambda(R)$ for all $A,B \in \mB$ with $A \neq B$. For $A \in \mB$ denote by $\overline{A}$ the matrix obtained from $A$ by deleting the first $\rho$ rows. A simple application of Lemma \ref{atmost} shows that the set $\mA=\{\overline{A} : A \in \mB\} \subseteq R^{h \times h'}$ has the desired properties.
\end{proof}

\begin{example} \label{excostrFCR}
Let $R$ be a finite chain ring with residue field of order $q$. Following the notation of Theorem \ref{costruzione}, Proposition \ref{lenmetric1} allows us to construct a submodule code $\mC \subseteq \mM(R^n)$ of constant length $\lambda(\mC)=k$, minimum distance $d(\mC)=2k$, and cardinality
$$|\mC| = q^{h+\rho} \cdot \frac{q^{h(\nu-1)}-1}{q^h-1} +1 = \frac{q^n-q^{h+\rho}+q^h-1}{q^h-1}\in \mathcal{O}(q^{n-h})$$
(as $n>h+\rho$). 
Let $\overline{\mC}$ be a submodule code with the same parameters as $\mC$ and maximum cardinality. By Theorem \ref{fcrlen1} we have $|\overline{\mC}| \le (q^{n-h+1}-1)/(q-1)$. Therefore
$$1 \ge \frac{|\mC|}{|\overline{\mC}|} \ge \frac{q^n-q^{h+\rho}+q^h -1}{q^h-1} \cdot\frac{q-1}{q^{n-h+1}-1} \stackrel{q \to \infty}{\longrightarrow} 1.$$
This shows that $\mC$ is an asymptotically optimal submodule code.
\end{example}

\subsection{Tensor product construction and partial spreads over rings of the form $\Z_p^m$}

Assume that $R$ contains a finite field $\F \subseteq R$ as a subring and that $R$ and $\F$ have the same one. Let $V \subseteq \F^n$ be an $\F$-linear space. 
Recall that the tensor product $V \otimes_{\F}R \subseteq R^n$ is the submodule of $R^n$ generated by the elements of $V$. If $V=\langle v_1,\ldots,v_m\rangle_\F$, then 
$$V \otimes_{\F}R=\langle v : \ v\in V\rangle_R=\langle v_1,\ldots,v_m \rangle_R.$$

\begin{lemma} \label{lemmC1}
Let $V \subseteq \F^n$ be an $\F$-linear space. Then 
$$\lambda(V \otimes_{\F}R) = \lambda(R) \cdot \dim_{\F}(V).$$
\end{lemma}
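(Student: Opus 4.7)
My plan is to exhibit an $R$-module isomorphism $V \otimes_{\F} R \cong R^d$, where $d=\dim_{\F}(V)$, and then invoke Lemma~\ref{prellen}(4) iteratively to obtain $\lambda(R^d)=d\cdot\lambda(R)$.

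First, I fix an $\F$-basis $v_1,\ldots,v_d$ of $V$ and consider the $R$-linear map
\[
\phi\colon R^d \longrightarrow V\otimes_{\F}R,\qquad (r_1,\ldots,r_d)\longmapsto \sum_{i=1}^d r_i v_i.
\]
By the description of $V\otimes_{\F}R$ given just before the lemma, we have $V\otimes_{\F}R=\langle v_1,\ldots,v_d\rangle_R$, so $\phi$ is surjective. The only thing to verify is injectivity.

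Here is the key observation. Write $v_i=(v_{i,1},\ldots,v_{i,n})$ with $v_{i,j}\in\F$, and let $V_0\in\F^{d\times n}$ be the matrix whose rows are the $v_i$. Since $v_1,\ldots,v_d$ are $\F$-linearly independent, $V_0$ has $\F$-rank $d$, so it contains a $d\times d$ submatrix $V_0'$ that is invertible over $\F$. Because $\F\subseteq R$ share the same identity, $\det(V_0')$ is a nonzero element of $\F$, hence a unit in $R$, so $V_0'$ remains invertible over $R$. If $(r_1,\ldots,r_d)\in\ker\phi$, then the system $\sum_i r_i v_{i,j}=0$ holds in $R$ for every $j$; restricting to the columns defining $V_0'$ and inverting over $R$ forces $r_i=0$ for all $i$. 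Thus $\phi$ is an isomorphism and $V\otimes_{\F}R\cong R^d$.

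Applying Lemma~\ref{prellen}(4) by induction on $d$, we get $\lambda(R^d)=d\cdot\lambda(R)$, which yields the claimed equality. The only non-routine step is the injectivity argument, and I expect it to be the main (though mild) obstacle: it is precisely the point where one must use that $\F$-linear independence in $\F^n$ upgrades to $R$-linear independence in $R^n$, which ultimately rests on an $\F$-invertible square submatrix remaining invertible after base change to $R$.
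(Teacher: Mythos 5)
Your proof is correct, but it takes a genuinely different route from the paper's. The paper reuses the row-echelon-form machinery built up in Section~\ref{secechelon}: it takes a basis of $V$ given by the rows of a matrix $A$ in reduced row-echelon form over $\F$, observes that because all pivots equal $1$ the same matrix $A$ remains in row-echelon form when viewed over $R$ (by Proposition~\ref{criterio} the pivot conditions are trivially satisfied since $\ann(1)=0$ and $1$ divides everything), and then reads off the length from Theorem~\ref{princ}: $\lambda(\row(A))=\sum_i\lambda(1)=t\cdot\lambda(R)$. You instead establish an explicit $R$-module isomorphism $V\otimes_\F R\cong R^d$ by taking any $\F$-basis $v_1,\dots,v_d$, extracting a $d\times d$ submatrix that is $\F$-invertible and therefore (since $\F$ and $R$ share the identity, so a nonzero determinant in $\F$ is a unit of $R$) $R$-invertible, and using this to kill the kernel of $(r_1,\dots,r_d)\mapsto\sum_i r_iv_i$; then Lemma~\ref{prellen}(4) gives $\lambda(R^d)=d\cdot\lambda(R)$. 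Both arguments ultimately amount to recognizing $V\otimes_\F R$ as a free $R$-module of rank $d$. The paper's version has the advantage of being uniform with the rest of the paper (and implicitly records that the $\F$-RREF matrix of $V$ is an $R$-RREF matrix of $V\otimes_\F R$, a fact used elsewhere), while yours is more self-contained and does not depend on Theorem~\ref{princ} or the row-echelon apparatus. Both are valid.
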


\begin{proof}
Let $t=\dim_{\F}(V)$, and let $A \in \F^{t \times n}$ be a matrix in reduced row-echelon form, whose rows generate $V$. 
Regard $A$ as a matrix over $R$. Then $A$ is still in row-echelon form and $\row(A)=V \otimes_{\F}R$. 
Since all the pivots of $A$ are ones, by Theorem \ref{princ} we have $\lambda(V \otimes_{\F}R) = \lambda(1) \cdot t = \lambda(R)\cdot \dim_{\F}(V)$, as claimed.  
\end{proof}

\begin{lemma}\label{lemmC2}
Let $V,W \subseteq \F^n$ be $\F$-linear spaces. Then 
$(V \otimes_{\F}R) \cap (W \otimes_{\F}R) = (V \cap W) \otimes_{\F}R$.
\end{lemma}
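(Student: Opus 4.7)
The containment $(V \cap W) \otimes_{\F}R \subseteq (V \otimes_{\F}R) \cap (W \otimes_{\F}R)$ is immediate from the definitions: any vector in $V \cap W$ lies in both $V$ and $W$, so the $R$-module it generates is contained in both $V \otimes_{\F}R$ and $W \otimes_{\F}R$. The content of the lemma is the reverse inclusion, which the plan handles by picking compatible bases.

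First I would pick an $\F$-basis $u_1,\ldots,u_k$ of $V \cap W$ and extend it to an $\F$-basis $u_1,\ldots,u_k,v_1,\ldots,v_p$ of $V$ and to an $\F$-basis $u_1,\ldots,u_k,w_1,\ldots,w_q$ of $W$. A dimension count against $\dim_{\F}(V+W)=\dim_{\F}V+\dim_{\F}W-\dim_{\F}(V\cap W)$ shows that the combined family $u_1,\ldots,u_k,v_1,\ldots,v_p,w_1,\ldots,w_q$ is $\F$-linearly independent and forms a basis of $V+W$.

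The key technical step is then to promote $\F$-linear independence to $R$-linear independence in $R^n$. I would extend the above family to a full $\F$-basis $e_1,\ldots,e_n$ of $\F^n$ and consider the change-of-basis matrix $P \in \F^{n\times n}$ that sends the standard basis of $\F^n$ to this new basis. Since $P$ is invertible over $\F$ and $\F \subseteq R$ with common identity, $P$ is also invertible when viewed as an element of $R^{n\times n}$; hence $e_1,\ldots,e_n$ is an $R$-basis of $R^n$, and in particular the subfamily $u_1,\ldots,u_k,v_1,\ldots,v_p,w_1,\ldots,w_q$ is $R$-linearly independent in $R^n$. I expect this to be the main obstacle, since it is the only place where the hypothesis $\F \subseteq R$ (with shared unit) is genuinely used.

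With $R$-linear independence in hand, I would write $V \otimes_{\F}R = Ru_1 \oplus \cdots \oplus Ru_k \oplus Rv_1 \oplus \cdots \oplus Rv_p$ and $W \otimes_{\F}R = Ru_1 \oplus \cdots \oplus Ru_k \oplus Rw_1 \oplus \cdots \oplus Rw_q$, both inside the free $R$-module $(V+W)\otimes_{\F}R$. Given $x$ in the intersection, equating the two unique expressions forces the coefficients of the $v_i$'s and $w_j$'s to vanish, so $x$ lies in $\langle u_1,\ldots,u_k\rangle_R = (V \cap W)\otimes_{\F}R$, which completes the proof.
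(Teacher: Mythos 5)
Your proof is correct, but it takes a genuinely different route from the paper's. The paper argues by length-counting: after observing the easy inclusion $(V\cap W)\otimes_{\F}R \subseteq (V\otimes_{\F}R)\cap(W\otimes_{\F}R)$, it shows the two sides have equal length and invokes Lemma~\ref{prellen}(2). The computation uses Lemma~\ref{lemmC1} (length of a tensored subspace is $\lambda(R)$ times the $\F$-dimension), the identity $\dim_\F(V\cap W)=\dim_\F V + \dim_\F W - \dim_\F(V+W)$, the observation $(V+W)\otimes_\F R = V\otimes_\F R + W\otimes_\F R$, and the length inclusion-exclusion formula from Lemma~\ref{prellen}(3). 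You instead choose compatible $\F$-bases, extend to a basis of $\F^n$, and note that the change-of-basis matrix is invertible over $R$ because a nonzero element of $\F$ is a unit of $R$; this exhibits the tensored subspaces as free direct summands of $R^n$ with a common complement structure, so the reverse inclusion falls out by comparing coefficients. Both arguments are sound. The paper's is shorter and stays within the length framework already set up, making the chain of lemmas do the work; yours is more elementary and self-contained, and as a bonus produces an explicit free decomposition of the intersection rather than merely its length. Your approach also isolates more clearly where the hypothesis $\F\subseteq R$ with shared identity is used (invertibility of the change-of-basis matrix over $R$), whereas in the paper that hypothesis is buried in Lemma~\ref{lemmC1}.
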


\begin{proof}
By definition $(V \cap W) \otimes_{\F}R\subseteq (V \otimes_{\F}R) \cap (W \otimes_{\F}R)$. Therefore by Lemma \ref{prellen} it suffices to show that they have the same length. By Lemma~\ref{lemmC1} 
$$\lambda((V \cap W) \otimes_{\F}R)=\lambda(R)\cdot\dim_\F(V\cap W)=\lambda(R)\cdot(\dim_\F(V)+\dim_\F(W)-\dim_\F(V+W))=$$
$$\lambda(V\otimes_{\F}R)+\lambda(W\otimes_{\F}R)-\lambda((V+W)\otimes_{\F}R)=\lambda((V \otimes_{\F}R) \cap (W \otimes_{\F}R)),$$
where the last equality follows from Lemma~\ref{prellen} and from observing that $(V+W)\otimes_\F R=V\otimes_\F R + W\otimes_\F R.$
\end{proof}

From Lemma \ref{lemmC1} and \ref{lemmC2} one obtains the following construction.

\begin{theorem} \label{costrtensore}
Let $\mC \subseteq \mM(\F^n)$ be a subspace code of minimum subspace distance $2\delta$ and $\dim_{\F}(V)=k$ for all $V \in \mC$. Then 
$$\mC \otimes_{\F}R = \{V \otimes_{\F}R : V \in \mC \} \subseteq \mM(R^n)$$
is a submodule code of cardinality $|\mC \otimes_{\F}R|=|\mC|$, whose codewords have length $\lambda(R) \cdot k$, 
and whose minimum distance is $2 \lambda(R) \cdot\delta$.
\end{theorem}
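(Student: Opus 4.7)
The plan is to reduce every assertion to the two preceding lemmas, which already do the essential work. First I would verify that the map $V \mapsto V \otimes_{\F} R$ is injective on $\mC$: if $V \otimes_{\F} R = W \otimes_{\F} R$ for some $V, W \in \mC$, then Lemma~\ref{lemmC2} gives $(V \cap W) \otimes_{\F} R = V \otimes_{\F} R = W \otimes_{\F} R$, and applying Lemma~\ref{lemmC1} to all three tensored spaces yields $\dim_{\F}(V \cap W) = \dim_{\F}(V) = \dim_{\F}(W) = k$. Since $V \cap W \subseteq V$ and both have the same finite dimension, $V = W$. This proves $|\mC \otimes_{\F} R| = |\mC|$.

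Next, Lemma~\ref{lemmC1} immediately gives $\lambda(V \otimes_{\F} R) = \lambda(R) \cdot k$ for every $V \in \mC$, so every codeword of $\mC \otimes_{\F} R$ has the claimed length.

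For the minimum distance, fix distinct $V, W \in \mC$. Using the definition of the submodule distance together with Lemmas~\ref{lemmC1} and~\ref{lemmC2}, I compute
\begin{align*}
d(V \otimes_{\F} R, W \otimes_{\F} R)
&= \lambda(V \otimes_{\F} R) + \lambda(W \otimes_{\F} R) - 2\lambda\bigl((V \otimes_{\F} R) \cap (W \otimes_{\F} R)\bigr)\\
&= \lambda(R)\dim_{\F}(V) + \lambda(R)\dim_{\F}(W) - 2\lambda\bigl((V \cap W) \otimes_{\F} R\bigr)\\
&= \lambda(R)\bigl(\dim_{\F}(V) + \dim_{\F}(W) - 2\dim_{\F}(V \cap W)\bigr)\\
&= \lambda(R) \cdot d_{\F}(V, W),
\end{align*}
where $d_{\F}$ denotes the subspace distance on $\F^n$. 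Taking the minimum over pairs of distinct codewords yields $d(\mC \otimes_{\F} R) = \lambda(R) \cdot 2\delta = 2\lambda(R) \cdot \delta$.

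There is no real obstacle: the theorem is essentially a direct corollary of Lemmas~\ref{lemmC1} and~\ref{lemmC2}. The only point worth underlining is that distances do not merely bound one another but scale by the exact factor $\lambda(R)$, which is why the construction preserves the code's combinatorial structure and inflates the metric uniformly.
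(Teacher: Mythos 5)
Your proof is correct and follows exactly the route the paper intends: the paper presents the theorem as an immediate consequence of Lemmas~\ref{lemmC1} and~\ref{lemmC2} without writing out details, and your argument fills in precisely those details (injectivity, codeword length, and the exact scaling of the distance by $\lambda(R)$). Nothing is missing.
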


\begin{example}
Let $\F=\Z_p$ and $R=\Z_p^m$. Then $\Z_p \cong \{(a,a,...,a) : a \in \Z_p\}\subseteq R$ can be viewed as a subring of $R$. We have $\lambda(R)=m$. Let $h$ be an integer, $1\leq h\leq n/2$. By \cite[Theorem 13 and Proposition 17]{ps}, there exists a subspace code $\mC \subseteq \mM(\Z_p^n)$ of constant dimension $h$, minimum distance $2h$ and cardinality $(p^n-p^{h+\rho} + p^h -1)/(p^h-1)$, where $\rho$ is the remainder of the division of $n$ by $h$. By Theorem \ref{costrtensore}, $\mC \otimes_{\Z_p}R \subseteq \mM(R^n)$ is a submodule code whose codewords have length $mh$ and minimum distance $2mh$. Moreover, $$|\mC \otimes_{\Z_p}R| = \frac{p^n-p^\rho}{p^h-1}-p^\rho+1.$$

Let $\overline{\mC}$ be a submodule code with the same parameters as $\mC\otimes_{\Z_p} R$ and maximum cardinality. By Theorem \ref{fourbounds} (\ref{bb3}), $|\overline{\mC}| \le (p^n-1)/(p^h-1)$. Therefore
$$1 \ge \frac{|\mC\otimes_{\Z_p} R|}{|\overline{\mC}|} \ge \frac{p^n-p^{h+\rho}+p^h -1}{p^h-1} \cdot \frac{p^h-1}{p^n-1} \stackrel{q \to \infty}{\longrightarrow} 1.$$
Hence $\mC\otimes_{\Z_p} R$ is asymptotically optimal, and it is optimal when $\rho=0$.

Fix $1\leq \ell\leq m-1$. For all $M \in \mC\otimes_{\Z_p} R$ choose a submodule $M' \subseteq M$ with $\lambda(M')=mh-\ell$. Then 
$\mD = \{M' : M \in \mC\otimes_{\Z_p} R\} \subseteq  \mM(R^n)$ is a submodule code with minimum distance $2mh-2\ell$ and whose codewords have length $mh-\ell$. Moreover,
$$|\mD|=|\mC\otimes_{\Z_p} R| = \frac{p^n-p^\rho}{p^h-1}-p^\rho+1.$$ Let $\overline{\mD}$ be a submodule code with the same parameters as $\mD$ and maximum cardinality. By Theorem~\ref{fourbounds} (\ref{bb3}) we have $|\overline{\mD}| \le (p^n-1)/(p^{h-\ell/m}-1)$. Therefore
$$1 \ge \frac{|\mD|}{|\overline{\mD}|} \ge \frac{p^n-p^{h+\rho}+p^h -1}{p^h-1} \cdot \frac{p^{h-\ell/m}-1}{p^n-1} \in \mathcal{O}\left(p^{-\ell/m}\right).$$
If in addition $m=2$, then by Proposition~\ref{fourbounds} (\ref{bb4})
$$1 \ge \frac{|\mD|}{|\overline{\mD}|} \ge \frac{p^n-p^{h+\rho}+p^h -1}{p^h-1} \cdot\frac{p^h+p^{h-1}-2}{2(p^n-1)} \stackrel{q \to \infty}{\longrightarrow} 1/2.$$
Therefore $\mD$ is asymptotically optimal, up to a factor 2.   
\end{example}

\begin{remark}\label{morethanvs}
Let $\mC \subseteq \mM(\F^n)$ be a subspace code over the finite field $\F$, and let $W \subseteq \F^n$ be a decodable space for the code $\mC$, i.e., an $\F$-linear space for which there exists $V \in \mC$ with $d(V,W) \le \lfloor (d_{\textnormal{S}}(\mC)-1)/2 \rfloor$, where $d_{\textnormal{S}}(\mC)$ denotes the minimum subspace distance of $\mC$. Then $W \otimes_{\F}R$ is decodable in the submodule code $\mC \otimes_{\F}R$, and it decodes to $V\otimes_{\F}R$.

However, there exist submodules of $R^n$ which are decodable in $\mC\otimes_{\F}R$ but are not of the form $W\otimes_{\F}R$, with $W$ an $\F$-space which is decodable in $\mC$. Moreover, if $N \subseteq R^n$ is decodable in $\mC\otimes_{\F}R$, then $N \cap \F^n$ is not necessarily decodable in $\mC$.

Let e.g. $R=\Z_5[i] $, $\F=\Z_5$, and let $\mC=\{V_1,V_2\} \subseteq \mM(\Z_5^4)$ be the subspace code 
whose codewords are the 2-dimensional spaces
$$V_1= \langle (1,0,1,0), (0,1,0,1)\rangle_{\Z_5}, \ \ \ \ \ \ V_2= \langle (1,0,2,1), (0,1,1,0)\rangle_{\Z_5}.$$
Then $\mC$ has subspace distance $d_{\textnormal{S}}(\mC)=4$.  
By Theorem \ref{costrtensore} the submodule code $\mC \otimes_{\Z_5} \Z_5[i]$ has two codewords of length
$4$ and submodule distance $d(\mC \otimes_{\Z_5} \Z_5[i])=8$.

Let $$N= \langle (i+2,0,i+2,0), (0,1,0,i-1)\rangle_{\Z_5[i]} \subseteq \Z_5[i]^4$$ be a received submodule.
Then $d(N,V_1 \otimes_{\Z_5} \Z_5[i]) = 3 \le \lfloor (8-1)/2 \rfloor =3$, so $N$ is decodable in $\mC \otimes_{\Z_5} \Z_5[i]$. However, $N$ is not of the form 
$W \otimes_{\Z_5} \Z_5[i]$ for any vector space $W$. Moreover, $N \cap \Z_5^4 = 0$, so $N \cap \Z_5^4$ is not decodable in $\mC$ with respect to the subspace distance.
\end{remark}

\subsection{Two constructions over products of rings}

Let $R\cong R_1 \times \ldots \times R_m$, where $R_1,\ldots,R_m$ are finite commutative rings with identity. Let $\pi_i$ be the projection on the factor $R_i$. Then each $\pi_i$ extends componentwise to a map $\pi_i:R^n \to R_i^n$. 

\begin{lemma} \label{decomp}
Let $R\cong R_1 \times \ldots \times R_m$ be a finite ring, let $M \subseteq R^n$ be an $R$-module. 
Then $M\cong \pi_1(M)\times\ldots\times\pi_m(M)$, each $\pi_i(M)$ is an $R_i$-module and $$\lambda_R(M) = \sum_{i=1}^m \lambda_{R_i}(\pi_i(M)).$$
\end{lemma}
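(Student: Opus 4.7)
My plan is to reduce everything to the idempotent decomposition induced by the product structure of $R$. Since $R\cong R_1\times\ldots\times R_m$, there exist orthogonal idempotents $e_1,\ldots,e_m\in R$ with $e_i^2=e_i$, $e_ie_j=0$ for $i\neq j$, and $e_1+\ldots+e_m=1$, corresponding under the isomorphism to the tuples having a single $1$ in position $i$. Moreover $e_iR\cong R_i$ as rings, with $e_i$ as identity.

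First I would establish an internal direct sum decomposition $M=e_1M\oplus\ldots\oplus e_mM$. For any $v\in M$, writing $1=e_1+\ldots+e_m$ gives $v=e_1v+\ldots+e_mv$, so $M=e_1M+\ldots+e_mM$. To see the sum is direct, apply $e_j$ to an expression $0=e_1v_1+\ldots+e_mv_m$ and use $e_je_i=\delta_{ij}e_i$ to conclude $e_jv_j=0$ for every $j$. This also shows the map $v\mapsto(e_1v,\ldots,e_mv)$ is a well-defined $R$-module isomorphism $M\to e_1M\times\ldots\times e_mM$. Under the canonical identification $e_iM\cong\pi_i(M)$ (which is just the action of the projection $\pi_i\colon R^n\to R_i^n$ restricted to $e_iM$, where it is bijective because $e_i$ acts as the identity on this summand), this yields the stated isomorphism $M\cong\pi_1(M)\times\ldots\times\pi_m(M)$, and each $\pi_i(M)$ inherits an $R_i$-module structure.

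Next, for the length formula, the key point is that the $R$-action on $e_iM$ factors through $R\twoheadrightarrow e_iR\cong R_i$: for $r\in R$ and $w\in e_iM$, we have $rw=r(e_iw)=(re_i)w$, and $re_i$ is determined by its $i$-th component. Consequently, the $R$-submodules of $e_iM$ coincide with its $R_i$-submodules, and hence composition series match on both sides, giving $\lambda_R(e_iM)=\lambda_{R_i}(e_iM)=\lambda_{R_i}(\pi_i(M))$. Iterating Lemma \ref{prellen}(4) on the product decomposition then yields
$$\lambda_R(M)=\sum_{i=1}^m\lambda_R(e_iM)=\sum_{i=1}^m\lambda_{R_i}(\pi_i(M)).$$

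The only subtle step is the change-of-scalars identity $\lambda_R(e_iM)=\lambda_{R_i}(e_iM)$, so I would state it explicitly and justify it via the observation that since $e_jw=0$ for $w\in e_iM$ and $j\neq i$, an additive subgroup of $e_iM$ is stable under $R$ if and only if it is stable under $R_i$. Everything else is essentially formal manipulation of the idempotent decomposition.
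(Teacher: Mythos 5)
Your proof is correct and follows the same route as the paper's: decompose $M$ via the product structure of $R$, identify the summands with the $\pi_i(M)$, apply Lemma~\ref{prellen}(4), and observe that $R$-submodules of $\pi_i(M)$ coincide with $R_i$-submodules. The only difference is that you make the orthogonal-idempotent bookkeeping explicit, whereas the paper compresses it into the single assertion that the ambient isomorphism $R^n\cong R_1^n\times\ldots\times R_m^n$ restricts to an isomorphism $M\cong\pi_1(M)\times\ldots\times\pi_m(M)$.
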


\begin{proof}
Let $r_i\in R_i$ and $v \in M$. Then $\pi_i(v)=(\pi_i(v_1),\ldots,\pi_i(v_n))\in \pi_i(M)\subseteq R_i^n$, and $r_i\pi_i(v)=(r_i\pi_i(v_1),\ldots,r_i\pi_i(v_n))\in \pi_i(M)$.
This makes $\pi_i(M)$ into an $R_i$-module. Moreover, the isomorphism $R^n\cong R_1^n\times\ldots\times R_m^n$ restricts to an isomorphism $M\cong\pi_1(M)\times\ldots\times\pi_m(M)$. Hence 
$$\lambda_R(M)= \sum_{i=1}^m \lambda_R(\pi_i(M))= \sum_{i=1}^m\lambda_{R_i}(\pi_i(M)),$$
where the last equality follows from the fact that any $R$-submodule of $\pi_i(M)$ is an $R_i$-submodule, and viceversa.
\end{proof}


We start with a simple construction, where we produce a code over $R_1\times\ldots\times R_m$ by taking the cartesian product of codes over each $R_i$. For simplicity of notation we identify $R^n$ and $R_1^n\times\ldots\ R_m^n$.

\begin{theorem} \label{prodcostr}
Let $R_1,...,R_m$ be finite PIR's and let $R=R_1 \times\ldots\times R_m$. For $i \in \{1,...,m\}$ let $\mC_i \subseteq \mM(R_i^n)$ be a submodule code whose codewords have length $k_i$. Then 
$$\mC=\mC_1 \times \ldots \times \mC_m = \{M_1 \times \ldots \times M_m \ :\  M_i\in \mC_i \mbox{ for all $i$ }\} \subseteq \mM(R^n)$$
is a submodule code of cardinality $|\mC|= |\mC_1|  \ldots |\mC_m|$, whose codewords have length $k_1+\ldots+k_m$, and with minimum distance $d(\mC)=\min\{d(\mC_i) : \ 1 \le i \le m\}$.
\end{theorem}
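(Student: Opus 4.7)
The plan is to push everything through the product decomposition $R^n \cong R_1^n \times \ldots \times R_m^n$ induced by $R \cong R_1 \times \ldots \times R_m$, and to reduce each of the three claims (cardinality, length, minimum distance) to a componentwise statement.

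First I would verify that $\mC$ is well-defined as a subset of $\mM(\Omega)$ with $\Omega=R^n$. Under the identification above, if each $M_i$ is an $R_i$-submodule of $R_i^n$, then $M_1 \times \ldots \times M_m$ is an $R$-submodule of $R^n$ on which the action is coordinatewise, matching the setup of Lemma~\ref{decomp}. The length of a codeword is then immediate: by Lemma~\ref{decomp} (or by iterating Lemma~\ref{prellen}~4))
\[
\lambda_R(M_1 \times \ldots \times M_m) \;=\; \sum_{i=1}^{m} \lambda_{R_i}(M_i) \;=\; k_1 + \ldots + k_m.
\]

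For the cardinality, I would observe that the map $(M_1,\ldots,M_m) \mapsto M_1 \times \ldots \times M_m$ from $\mC_1 \times \ldots \times \mC_m$ to $\mC$ is a bijection: surjectivity is the definition of $\mC$, and injectivity follows because $\pi_i(M_1 \times \ldots \times M_m) = M_i$, so each factor is recoverable from the product. Hence $|\mC| = |\mC_1| \cdots |\mC_m|$.

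For the minimum distance, the key observation is that intersections decompose across the product, i.e.\ for $M = M_1 \times \ldots \times M_m$ and $N = N_1 \times \ldots \times N_m$ in $\mC$,
\[
M \cap N \;=\; (M_1 \cap N_1) \times \ldots \times (M_m \cap N_m),
\]
which is immediate from the coordinatewise nature of the identification. Combined with Lemma~\ref{decomp} applied to $M$, $N$, and $M \cap N$, this gives
\[
d(M,N) \;=\; \lambda(M) + \lambda(N) - 2\lambda(M \cap N) \;=\; \sum_{i=1}^{m} d(M_i, N_i).
\]
If $M \ne N$, then $M_i \ne N_i$ for some $i$, so $d(M_i,N_i) \ge d(\mC_i) \ge \min_j d(\mC_j)$, while the other summands are nonnegative; hence $d(\mC) \ge \min_j d(\mC_j)$. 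Conversely, pick an index $i^\star$ realizing the minimum and a pair in $\mC_{i^\star}$ realizing $d(\mC_{i^\star})$; since each $\mC_j$ has at least two elements (hence at least one element to fix), we can build two codewords of $\mC$ that agree in all coordinates except $i^\star$. This achieves $d(M,N) = d(\mC_{i^\star}) = \min_j d(\mC_j)$, yielding equality.

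No step is genuinely hard; the only point that deserves care is checking that forming products, intersections, and lengths all commute with the decomposition $R^n \cong R_1^n \times \ldots \times R_m^n$, which is precisely the content of Lemma~\ref{decomp} together with the coordinatewise $R$-action on $R_1^n \times \ldots \times R_m^n$.
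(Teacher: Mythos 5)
Your proof is correct. The paper actually states Theorem~\ref{prodcostr} without supplying a proof, presumably regarding it as routine once Lemma~\ref{decomp} is in place; your argument fills that gap in exactly the way the paper itself argues in the proof of the adjacent Proposition~\ref{decodingproduct}, namely by pushing length, intersection, and the projections $\pi_i$ through the identification $R^n \cong R_1^n \times \ldots \times R_m^n$. The one point worth flagging as genuinely necessary is the attainment of the minimum distance: the lower bound $d(\mC) \ge \min_j d(\mC_j)$ follows immediately from $d(M,N)=\sum_i d(M_i,N_i)$ plus $M_{i_0}\ne N_{i_0}$ for some $i_0$, but equality requires exhibiting a pair of codewords differing in only one coordinate, and this in turn uses that every $\mC_j$ is nonempty (indeed $|\mC_j|\ge 2$ by definition of submodule code). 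You handle this correctly. No issues.
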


We now show that decoding of the {\bf product code} $\mC=\mC_1 \times \ldots \times \mC_m$ over $R$ can be reduced to decoding each of the codes $\mC_i$ over $R_i$.

\begin{proposition} \label{decodingproduct}
Let $R_1,...,R_m$ be finite PIR's and let $R=R_1 \times\ldots\times R_m$. For $i \in \{1,...,m\}$ let $\mC_i \subseteq \mM(R_i^n)$ be a submodule code and let $\mC=\mC_1 \times \ldots \times \mC_m\subseteq\mM(R^n)$ be the product code. Let $N \subseteq R^n$ be a received decodable submodule, i.e., an $R$-module for which there exists an $M=M_1 \times \ldots \times M_m \in\mC$ such that $d(N,M) \le  \lfloor (d(\mC)-1)/2 \rfloor$.
Then for all $i \in \{1,...,m\}$ we have $d(\pi_i(N),M_i) \le \lfloor (d(\mC_i)-1)/2\rfloor$, i.e., $\pi_i(N)$ decodes to $M_i$ in $\mC_i$.
\end{proposition}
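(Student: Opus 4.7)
The plan is to show that the submodule distance decomposes additively under the product structure $R\cong R_1\times\ldots\times R_m$, and then read off the bound componentwise.

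First I would invoke Lemma \ref{decomp} to identify $R^n$ with $R_1^n\times\ldots\times R_m^n$ and write any $R$-submodule as the cartesian product of its projections. Applied to $N$ and to $M\in\mC$, this gives $N=\pi_1(N)\times\ldots\times\pi_m(N)$ and $M=M_1\times\ldots\times M_m$, where $\pi_i(M)=M_i$. Under this identification, the intersection $N\cap M$ inside $R_1^n\times\ldots\times R_m^n$ is precisely the cartesian product $(\pi_1(N)\cap M_1)\times\ldots\times(\pi_m(N)\cap M_m)$, since a tuple lies in both products exactly when each of its coordinates lies in both factors.

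Next I would compute the length of $N$, $M$, and $N\cap M$ using the additivity of length on direct products (Lemma \ref{decomp}, or equivalently part 4 of Lemma \ref{prellen}), obtaining
\[
d(N,M)=\lambda_R(N)+\lambda_R(M)-2\lambda_R(N\cap M)=\sum_{i=1}^m d(\pi_i(N),M_i).
\]
In particular every term on the right is nonnegative, so $d(\pi_i(N),M_i)\le d(N,M)$ for each $i$.

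Finally I would combine this with the hypothesis $d(N,M)\le\lfloor(d(\mC)-1)/2\rfloor$ and the identity $d(\mC)=\min_j d(\mC_j)$ from Theorem \ref{prodcostr}. The latter gives $d(\mC)\le d(\mC_i)$ for every $i$, hence $\lfloor(d(\mC)-1)/2\rfloor\le\lfloor(d(\mC_i)-1)/2\rfloor$, and the chain
\[
d(\pi_i(N),M_i)\le d(N,M)\le \lfloor(d(\mC)-1)/2\rfloor\le \lfloor(d(\mC_i)-1)/2\rfloor
\]
yields the claim. There is no real obstacle: the only subtle point is the identity $N\cap M=\prod_i(\pi_i(N)\cap M_i)$, which relies on the fact, provided by Lemma \ref{decomp}, that submodules of $R^n$ split as products of their projections; once this is in hand the distance formula decomposes additively and the rest is monotonicity.
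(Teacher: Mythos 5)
Your proof is correct and takes essentially the same route as the paper: decompose $N$ and $M$ via Lemma~\ref{decomp}, use additivity of length on direct products, and then apply $d(\mC)=\min_j d(\mC_j)$. The only (harmless) difference is that you observe $N\cap M=\prod_i\bigl(\pi_i(N)\cap M_i\bigr)$ and so obtain the exact identity $d(N,M)=\sum_i d(\pi_i(N),M_i)$, whereas the paper only invokes the inclusion $\pi_i(N\cap M)\subseteq\pi_i(N)\cap\pi_i(M)$ to get the inequality $d(N,M)\ge\sum_i d(\pi_i(N),M_i)$, which already suffices.
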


\begin{proof}
By Lemma~\ref{decomp}, $N=\pi_1(N) \times \ldots \times \pi_m(N)$, and $M_i=\pi_i(M)$ for $1\leq i\leq m$. Moreover
\begin{eqnarray*}
d(N,M) &=& \lambda(N) + \lambda(M) - 2 \lambda(N \cap M) \\
&=& \sum_{i=1}^m \lambda_{R_i}(\pi_i(N)) + \sum_{i=1}^m \lambda_{R_i}(\pi_i(M)) -2 \sum_{i=1}^m \lambda_{R_i}(\pi_i(N \cap M))  \\
&\ge& \sum_{i=1}^m d(\pi_i(N),\pi_i(M)), 
\end{eqnarray*}
where the inequality follows from the fact that $\pi_i(N \cap M) \subseteq \pi_i(N) \cap \pi_i(M)$.
Therefore for all $i$ we have 
$$d(\pi_i(N),\pi_i(M)) \le d(N,M) \le \lfloor (d(\mC)-1)/2 \rfloor 
\le \lfloor (d(\mC_i)-1)/2 \rfloor,$$
hence $\pi_i(N)$ decodes to $M_i$ in $\mC_i$.
\end{proof}

Finally, we provide another construction which combines submodule codes over the factors $R_i$ into a submodule code over $R=R_1 \times \ldots \times R_m$. Compared to the product construction of Theorem~\ref{prodcostr}, this construction produces a code with smaller cardinality and larger minimum distance, whose decoding cannot be reduced to decoding over the $R_i$'s. Again, for simplicity we identify $R^n$ and $R_1^n\times\ldots\times R_m^n$.

\begin{theorem} \label{parallcostr}
Let $R_1,...,R_m$ be finite PIR's, and let $R=R_1 \times \ldots \times R_m$. For $i \in \{1,...,m\}$ let $\mC_i \subseteq \mM(R_i^n)$ be a submodule code whose codewords have length $k_i$. Let $c= \min |\mC_i|$, and for all $i \in \{1,...,m\}$ fix a subcode $\mC_i' \subseteq \mC_i$ with $|\mC_i'|=c$. Enumerate the elements of 
each $\mC_i'$ as $\mC_i'=\{M_{1,i},...,M_{c,i}\}$. Then 
$$\mC= \{M_{j,1} \times \ldots \times M_{j,m} \ :\  1 \le j \le c \} \subseteq \mM(R^n)$$
is a submodule code of cardinality $|\mC|=c$, with $d(\mC) \ge d(\mC_1)+\ldots+d(\mC_m)$, and whose codewords have length $k_1+\ldots+k_m$.
\end{theorem}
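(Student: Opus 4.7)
The plan is to verify the three assertions (length, cardinality, minimum distance) one by one, with Lemma \ref{decomp} doing most of the work. The main observation is that a codeword of $\mC$ is a product of submodules from each $\mM(R_i^n)$, so all its invariants decompose additively across the factors.

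First, for the length of codewords, I would simply apply Lemma \ref{decomp} to each $M = M_{j,1} \times \ldots \times M_{j,m}$, noting that $\pi_i(M) = M_{j,i}$ has length $k_i$ as an $R_i$-module. This gives $\lambda_R(M) = k_1 + \ldots + k_m$. For cardinality, the key point is that the enumeration $\mC_i' = \{M_{1,i},\ldots,M_{c,i}\}$ is a bijection for \emph{each} $i$, so whenever $j \ne j'$ and any $i$ is fixed, $M_{j,i} \ne M_{j',i}$; in particular, the product modules indexed by distinct $j$ are distinct, which shows $|\mC| = c$.

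The distance bound is the substantive step. Take distinct codewords $M = M_{j,1}\times\ldots\times M_{j,m}$ and $M' = M_{j',1}\times\ldots\times M_{j',m}$ in $\mC$. The identification $R^n = R_1^n \times \ldots \times R_m^n$ gives $M \cap M' = (M_{j,1}\cap M_{j',1}) \times \ldots \times (M_{j,m} \cap M_{j',m})$, which I would record as a quick observation (this is just the coordinate-wise description of intersection inside a cartesian product). Applying Lemma \ref{decomp} to $M$, to $M'$, and to $M \cap M'$ gives
\[
d(M,M') = \lambda(M) + \lambda(M') - 2\lambda(M\cap M') = \sum_{i=1}^m \bigl(\lambda(M_{j,i}) + \lambda(M_{j',i}) - 2\lambda(M_{j,i}\cap M_{j',i})\bigr) = \sum_{i=1}^m d(M_{j,i},M_{j',i}).
\]
Since the enumeration argument above shows $M_{j,i} \ne M_{j',i}$ for every index $i$, each summand satisfies $d(M_{j,i},M_{j',i}) \ge d(\mC_i') \ge d(\mC_i)$. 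Summing yields $d(M,M') \ge d(\mC_1) + \ldots + d(\mC_m)$, and minimizing over pairs gives the required bound on $d(\mC)$.

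The only step that requires any care is the coordinate-wise decomposition of the intersection $M \cap M'$ inside $R^n$; everything else is essentially bookkeeping once Lemma \ref{decomp} is available. I would therefore spend one short sentence justifying that decomposition (via $\pi_i(M\cap M') = M_{j,i}\cap M_{j',i}$, using that both sides agree on each projection and $M\cap M'$ is itself a product by Lemma \ref{decomp}) and let the rest follow by a single additive calculation.
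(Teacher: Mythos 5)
Your proof is correct and follows essentially the same route as the paper: the distance bound comes from decomposing $d(M,M')$ across the factors via Lemma~\ref{decomp} and then using that $M_{j,i}\neq M_{j',i}$ for $j\neq j'$. The paper proves only the distance assertion, invoking the inequality from Proposition~\ref{decodingproduct} (which uses $\pi_i(N\cap M)\subseteq\pi_i(N)\cap\pi_i(M)$ for a general received module $N$); you observe that for two product codewords this containment is in fact an equality, which is a small but legitimate sharpening, and you also supply the routine length and cardinality checks the paper omits.
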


\begin{proof}
We only prove the part about the minimum distance. Let 
$j,j' \in \{1,...,c\}$ with $j \neq j'$. Arguing as in the proof of
Proposition \ref{decodingproduct}, one finds 
$$d(M_{j,1} \times \ldots \times M_{j,m} , \ M_{j',1} \times \ldots \times M_{j',m})
\ge \sum_{i=1}^m d(M_{j,i},M_{j',i}) \ge \sum_{i=1}^m d(\mC_i),$$
where the last inequality follows from the fact that $M_{j,i} \neq M_{j',i}$
whenever $j \neq j'$. 
\end{proof}

\begin{remark}
Notice that an $R$-module which is decodable with respect to the code $\mC$ constructed in Theorem~\ref{parallcostr} is not necessarily a product of $R_i$-modules that are decodable with respect to the codes $\mC_i$. 
E.g., let $m=2$, $n=4$, $R_1=R_2=\Z_2$, $R=\Z_2 \times \Z_2$. Let
$$M_{1,1} = M_{1,2} = \langle (1,0,1,0), \ (0,1,0,1)\rangle_{\Z_2}, \ \ \ \ 
M_{2,1} = M_{2,2} = \langle (1,0,1,1), \ (0,1,1,0) \rangle_{\Z_2}$$ and 
$\mC_1=\mC_1'=\{M_{1,1}, M_{1,2} \}$, $\mC_2=\mC_2'=\{M_{2,1}, M_{2,2} \}$. Then 
$$\mC= \left\{ \mbox{row} \begin{bmatrix}
(1,1) & (0,0) & (1,1) & (0,0) \\ 
(0,0) & (1,1) & (0,0) & (1,1)\end{bmatrix} \ , \ 
\mbox{row} \begin{bmatrix}
(1,1) & (0,0) & (1,1) & (1,1) \\ 
(0,0) & (1,1) & (1,1) & (0,0) \end{bmatrix}
 \right\}.$$
 The code $\mC$ has minimum distance $d(\mC)=8$. Let 
 $$N= \mbox{row} \begin{bmatrix}
 (1,0) & (0,0) & (1,0) & (0,0) \\
 (0,0) & (1,1) & (0,1) & (1,1)
 \end{bmatrix}$$
 be a received submodule. Then $N$ decodes to $M_{1,1} \times M_{1,2}\in \mC$, as 
 $d(N,M_{1,1} \times M_{1,2}) = 3 \le \lfloor (8-1)/2 \rfloor$. However, 
 $\pi_2(N) = \langle (0,1,1,1)\rangle_{\Z_2}$ is not decodable in $\mC_2$. 
 In fact, $d(\pi_2(N), M_2^1)= d(\pi_2(N), M_2^2)= 3$.
\end{remark}

\end{document}